\DeclareFontFamily{OT1}{pzc}{}
\DeclareFontShape{OT1}{pzc}{m}{it}{<-> s * [1.10] pzcmi7t}{}
\DeclareMathAlphabet{\mathpzc}{OT1}{pzc}{m}{it}
\def\beq{\begin{equation}}
\def\eeq{\end{equation}}
\let \cite = \citep
\newtheorem{assumption}{Assumption}[section]
\newcommand{\lf}{\lfloor}
\newcommand{\rf}{\rfloor}
\newtheorem{theorem}{Theorem}[section]
\newtheorem{lemma}{Lemma}[section]
\newtheorem{corollary}{Corollary}[section]
\newtheorem{Remark}{Remark}[section]
\newcommand{\bDelta}{\boldsymbol \Delta}
\newcommand{\cS}{\mathcal S}
\newcommand\T{\top}
\newcommand{\bX}{{\bf X}}
\newcommand{\bB}{{\bf B}}
\newcommand{\bE}{{\bf E}}
\newcommand{\bG}{{\bf G}}
\newcommand{\bz}{{\bf z}}
\newcommand{\bY}{{\bf Y}}
\newcommand{\bZ}{{\bf Z}}
\newcommand{\bA}{{\bf A}}
\newcommand{\bC}{{\bf C}}
\newcommand{\bD}{{\bf D}}
\newcommand{\bv}{{\bf v}}
\newcommand{\bg}{{\bf g}}
\newcommand{\bU}{{\bf U}}
\newcommand{\bu}{{\bf u}}
\newcommand{\bQ}{{\bf Q}}
\newcommand{\cA}{\mathcal {A}}
\newcommand\bLamb{\mbox{\boldmath${\Lambda}$}}
\newcommand{\cz}{\mathpzc z}
\newcommand{\cD}{{\mathcal  D}}
\newcommand{\bGam}{\boldsymbol \Gamma}
\newcommand{\bW}{{\bf W}}
\newcommand{\bx}{{\bf x}}
\newcommand{\bbe}{\boldsymbol \beta}
\newcommand\bga{\mbox{\boldmath${\gamma}$}}
\newcommand{\eps}{\epsilon}
\newcommand{\bR}{\boldsymbol R}
\newcommand{\cW}{\mathcal {W}}
\newcommand{\bK}{{\bf K}}
\newcommand{\bPsi}{{\boldsymbol \Psi}}
\newcommand{\ch}{{\mathcal h}}
\newcommand{\bP}{{\bf P}}
\newcommand\bGamma{\mbox{\boldmath${\Gamma}$}}
\newcommand{\bH}{{\bf  {H}}}
\newcommand{\mca}{\mathcal {a}}
\newcommand{\bUp}{\mbox{\boldmath $ \Upsilon$}}
\numberwithin{equation}{section}
\numberwithin{Remark}{section}
\numberwithin{Assumption}{section}
\numberwithin{lemma}{section}
\numberwithin{table}{section}
\numberwithin{figure}{section}
\begin{document}

\title[Detecting multiple changes in linear models]{Detecting multiple change points in linear models with heteroscedasticity }

\author{Lajos Horv{\'a}th$^{\ddagger}$}
\author{Gregory Rice$^{\mathsection}$}
\author{Yuqian Zhao$^{\dagger}$}
\thanks{$^{\ddagger}$ Department of Mathematics, University of Utah, Salt Lake City, UT 84112--0090 USA. Email: horvath@math.utah.edu.}
\thanks{$^{\mathsection}$ Department of Statistics and Actuarial Science, University of Waterloo, Waterloo, N2L 3G1, Canada. Email: grice@uwaterloo.ca. The research of Gregory Rice was supported by the Natural Science and Engineering Research Council of Canada’s Discovery and Accelerator grants.}
\thanks{$^{\dagger}$ University of Sussex Business School, University of Sussex, Brighton, BN1 9RH,  United Kingdom. Email: yuqian.zhao@sussex.ac.uk.}

\bigskip
\begin{abstract}
	The problem of detecting change points  in the parameters of a linear regression model with errors and covariates exhibiting heteroscedasticity is considered. Asymptotic results for weighted functionals of the cumulative sum (CUSUM) processes of model residuals are established when the model errors are weakly dependent and non-stationary, allowing for either abrupt or smooth changes in their variance. These theoretical results illuminate how to adapt standard change point test statistics for linear models to this setting. 
    We studied such adapted change-point tests in simulation experiments, along with a finite sample adjustment to the proposed testing procedures. The results suggest that these methods perform well in practice for detecting multiple change points in the linear model parameters and controlling the Type I error rate in the presence of heteroscedasticity.
    We illustrate the use of these approaches in applications to test for instability in predictive regression models and explanatory asset pricing models.
	\\\\
	\textbf{Keywords:} Change point detection, Weighted cumulative sums, Heteroscedastic model, Smoothly changing error variances, Predictive regression.\\
	\textbf{JEL Classification:} C10, C12, C58, E37, G12\\
\end{abstract}

\maketitle

\newpage

\section{Introduction}\label{intro}

Linear models are widely used for causal inference and out-of-sample prediction problems with time series data, including macroeconomic forecasting, asset pricing, and portfolio optimization. For example, Stock and Watson (2002) identify predictive factors for key macroeconomic variables using linear regressions. In finance, a prominent application is the prediction of equity premia using financial and economic variables, as examined by Welch and Goyal (2008).

A critical challenge for such models in the time series setting is that their coefficients often appear to undergo structural changes due to shocks such as policy shifts, technological advances, or evolving consumer and investor behaviour. Model instability can undermine both in-sample fit and out-of-sample performance. Detecting change points in linear models is hence often a critical first step toward using them in practice. Most existing detection methods assume stationary and homoscedastic error terms; see Chapter 4 of Horv\'ath and Rice (2024), Chapter 4 of Chen and Gupta (2014), and Niu et al. (2016) for a review of change point detection methods for linear models. The assumption of homoscedasticity often appears to be implausible in practice, as model residuals frequently exhibit heteroscedasticity as well as changes in their distribution coinciding with other changes in the model parameters. This paper focuses on adapting stability tests for linear models to accommodate heteroscedastic covariates and errors.

The effect of heteroscedasticity in change point analysis has drawn increasing attention recently. Zhou (2013) and Xu (2015) advise that commonly used CUSUM-based change point procedures can become over-sized and unreliable in the presence of change points in the variance of the error process. To deal with this issue, several methods have been proposed to adapt limits for classical CUSUM-type statistics under heteroscedasticity. In the setting of changes in the mean of scalar time series, Zhou (2013) suggests a wild-bootstrap procedure to estimate the limiting distribution. Astill et al. (2023) develop a CUSUM based monitoring scheme for financial data allowing for time varying volatility. Xu (2015) builds a time transformed Wiener process, and G\'{o}recki et al. (2018) make use of Karhunen--Lo\'{e}ve expansions to characterize the limit of statistics based on heteroscedastic observations. Horv\'ath et al. (2021) derive a Wiener process-based limit for heavily-weighted CUSUM processes constructed from linear model residuals. Georgiev et al. (2018) consider the change point detection problem in predictive regression models allowing for non-stationary covariates.

In this paper, we consider a linear regression model for a scalar response $y_i$ on a $d$-dimensional covariate $\bx_i$, with $R$ possible changes:
\begin{align}\label{limu1}
	y_i= \sum_{r=1}^{R+1} \bx_i^\T\bbe_r\mathds{1}\{k_{r-1}+1 \le i \le k_{r }\} +\eps_i, \;\; k_0=0, \mbox{ and } k_{R+1}=N,
\end{align}
where $(\bx_1,y_1),...,(\bx_N,y_N)$ are the observed data, $\bx_i \in \mathbb{R}^d$ and $y_i \in \mathbb{R}$. The regression parameter changes from $\bbe_{\ell}$ to $\bbe_{\ell+1}$ at the potential change points $k_1, \ldots, k_R$. When for example the covariates contain lagged values of an exogenous series or the response, \eqref{limu1} becomes a predictive regression model with changing coefficients. We are interested in testing the null hypothesis that the regression parameter remains constant over the sample period:
\beq\label{limul1}
H_0:\;\; \bbe_1 =  \cdots  = \bbe_{R+1},
\eeq
versus the alternative hypothesis that there exists at least one change point,
\beq\label{limul2}
H_A:\; \bbe_i \ne \bbe_{i+1} \mbox{ for some } i \in \{1,...,R\}.
\eeq

Under $H_0$ we denote the common regression parameter as $\bbe_0$, which can be estimated by the least squares estimator
$$
\hat{\bbe}_{N}=\left( \bX_{N}^\T\bX_{N}\right)^{-1}\bX_{N}^\T\bY_{N},
$$
where $\bY_{N}=(y_1,  \ldots, y_N)^\T$ is the vector containing the responses, and the design matrix is given by $\bX_N = (\bx_1 \mid \cdots \mid \bx_N)^\top$. Thus, the linear model residuals are computed as
$$
\hat{\eps}_i=y_i-\bx_i^\T\hat{\bbe}_N, \quad 1\leq i \leq N.
$$
The maximally selected F--tests of ${H}_0$ may be expressed as functionals of the standard CUSUM process of the covariate weighted residuals
\begin{align}\label{zz-def}
\bZ_N(t)=N^{-1/2}\left( \sum_{i=1}^{\lf (N+1)t \rf}\bx_i\hat{\eps}_i-\frac{\lf (N+1)t \rf}{N} \sum_{i=1}^N\bx_i\hat{\eps}_i    \right), \;\;\;0< t < 1,
\end{align}
where $\underset{\emptyset}{\sum}=0$. It follows that $\bZ_N(t)=\mathbf{0}$, if $t\in [0,1/(N+1))$ and $t\in(N/(N+1),1]$. Most existing methods to test for change points in linear models make use of functionals of $\bZ_N$. The likelihood ratio based tests proposed by Bai (1995, 1997a,b, 1998) and Bai and Perron (1998, 2003) are asymptotically equivalent with functionals of $\bZ_N$. For example, the likelihood ratio based method in Bai and Perron (1998) can be written as the maximum of the standardized increments of the process $\bZ_N$. Similarly, Bai (1999) develops a maximally selected least squares test to determine whether $R$ or $R+1$ changes are present in model \eqref{limu1}. It is also asymptotically equivalent with a functional of $\bZ_N$. Hidalgo and Seo (2013) considers Lagrange multiplier and maximum likelihood statistics, respectively, for testing the constancy of parameters in parametric time series models, which are also asymptotically equivalent to functionals of $\bZ_N$ under model \eqref{limu1}.

We provide in this paper a comprehensive asymptotic analysis under $H_0$ of the weighted functionals of $\bZ_N$ allowing for quite general forms of heteroscedasticity in the covariates and the errors in \eqref{limu1}. In particular, we consider a model for non-stationary errors allowing for both {\it smooth} and {\it abrupt} changes in the error variance. An interesting consequence of the results presented is that asymptotics for the CUSUM process of the unobservable series $\{\bx_i\eps_i\}$ and for the observable series $\{\bx_i\hat{\eps}_i\}$ are the same with homoscedastic covariates/errors, although this does not remain true in heteroscedastic scenarios. If the volatility of the covariates $\bx_i$ changes during the observation period, then the asymptotic distribution of the weighted CUSUM is affected by the estimation of the regression parameter. However, the asymptotic results for suitably standardized CUSUM statistics will, interestingly, still satisfy Darling--Erd\H{o}s type limit results in this case. The behaviour of these statistics under $H_A$ is also detailed, and it is shown that the typically weighted functionals of $\bZ_N$ are consistent in detecting multiple change points. 

The finite sample performances of the proposed tests are compared and studied in a Monte Carlo simulation study, which supports that the adaptations proposed to handle heteroscedasticity of the errors and to improve finite sample performance work well in practice and outperform the existing approaches of Xu (2015), Perron et al. (2020) and Horv\'ath et al. (2021). We then illustrate the proposed methods through an application to testing model instability in macroeconomic variables and equity return prediction models.

The rest of the article is organized as follows. In Section \ref{sec-hetero}, we detail the asymptotic theory for several commonly used functionals of $\bZ_N$. Section \ref{sec-smooth} extends the results for a model with more generally non-stationary errors. Section \ref{se-simulation} details the computation of critical values and assesses the finite-sample performance of the proposed tests through Monte Carlo simulations, comparing them with existing methods. Data applications are given in Section \ref{sec-app}, and Section \ref{concl} concludes with some remarks.

\section{Abrupt changes in the variance model}\label{sec-hetero}
Let $\{\bz_i=(\bx_i^\T, \eps_i)^\T, -\infty<i<\infty\}$ denote the process describing the covariates and error terms in \eqref{limu1}. We first consider the case in which the second order properties of $\bz_i=(\bx_i, \eps_i)^\T, 1\leq i \leq N$  may change $M$ times during the observation period, where $1 < m_1<m_2<\ldots<m_M < N$ denote the times at which the second order properties of the $\bz_i$'s might change. We assume
\begin{assumption}\label{as-m}\;$m_i=\lf N\tau_i\rf$ and $0<\tau_1<\tau_2<\ldots <\tau_M<1$.
\end{assumption}
\noindent
Leybourne et al.\ (2006), Pein et al.\ (2017) and Horv\'ath et al.\ (2021) introduce heteroscedastic models, similar to Assumption \ref{as-m}, in change point analysis. We use a  decomposable Bernoulli shift model for each segment of stationarity. Let $m_0=0, m_{M+1}=N$ and correspondingly $\tau_0=0$ and $\tau_{M+1}=1$. We note that the unknown times $m_i$ may or may not coincide with the change point locations $k_j$.

{Below $\left\| \cdot \right\|$ denotes the Euclidean norm.}
\begin{assumption}\label{as-lin-ber-v} \;$\bz_i=\bg_\ell(\eta_i, \eta_{i-1}, \ldots), m_{\ell-1}<i\leq m_{\ell}, 1\leq \ell \leq M+1$, where $\bg_\ell$ are  non--random measurable functions, $\cS^\infty\to \mathbb{R}^{d+1}$,  $E\|\bz_i\|^\nu<\infty$
	with some $\nu>4$, $\{\eta_i, -\infty<i<\infty\}$ are independent and identically distributed random variables with values in a measurable space $\cS$,
	$$
	\left( E\left\|\bz_i-\bz^*_{i,j}\right\|^\nu\right)^{1/\nu}\leq cj^{-\alpha}\quad\mbox{with some}\;\;c>0\;\;\mbox{and}\;\;\alpha>2,
	$$
	$\bz_{i,j}^*=\bg_\ell(\eta_i, \ldots, \eta_{i-j+1}, \eta^*_{i-j}, \eta_{i-j-1}^*, \ldots)$, $m_{\ell-1}<i\leq m_{\ell}, 1\leq \ell \leq M+1$, $\{\eta^*_{\ell}, -\infty<\ell <\infty\}$ are independent, identically distributed copies of $\eta_0$, independent of $\{\eta_j, -\infty<j<\infty\}$.
\end{assumption}
\noindent

Under Assumption \ref{as-lin-ber-v}, the errors and covariates are not stationary over the whole observation period, but are drawn from a stationary process on the sub-segments $(m_{\ell-1}, m_\ell]$, $1 \leq \ell \leq M+1$. Conventionally, the error terms $\eps_i$ are independent of $\bx_i$, and are homoscedastic in the sense that the variance of the conditional distribution of $\eps_i$ given $\bx_i$ remains constant with respect to $i$. This condition might not hold under Assumption \ref{as-lin-ber-v}, resulting in a heteroscedastic model. In Section~\ref{sec-smooth}, we further relax this assumption to allow for non-stationarity within each sub-segment.

In this section, we aim to establish the asymptotic behaviour of $\bZ_N(t)$, as defined in \eqref{zz-def}, under the null hypothesis of no change in the regression parameter under Assumption \ref{as-lin-ber-v}.

In order to identify the regression parameters, we require
\begin{assumption}\label{weps-mu}\;$E\bx_{m_i}\eps_{m_i}=\bf0$, $1\leq i \leq M+1$.
\end{assumption}
\noindent
Assumption \ref{weps-mu} postulates that the identification of the regression parameters holds on all sub-segments of stationarity. To state the weak limit of the process $\bZ_N$, we also need to introduce $M+1$ long run covariance matrices reflecting the changing  covariances between stationary subintervals. Let
\beq\label{longd}
\bD_\ell=\lim_{N\to\infty}\frac{1}{m_\ell-m_{\ell-1}}E\left( \sum_{i=m_{\ell-1}+1}^{m_{\ell}}  \bx_i\eps_i \right)\left( \sum_{i=m_{\ell-1}+1}^{m_{\ell}}  \bx_i\eps_i \right)^\top ,\quad 1\leq \ell\leq M+1.
\eeq
We show that these matrices are well defined under Assumption \ref{as-lin-ber-v}. We define the process
$\{\bGam(t), 0\leq t \leq 1\}$ as
\beq\label{gaus}
\bGam(t)=\sum_{\ell=1}^j \bW_{\bD_\ell}(\tau_\ell-\tau_{\ell-1})+\bW_{\bD_{j+1}}(t-\tau_j), \quad \mbox{if}\;\;\;\tau_{j}<t\leq \tau_{j+1},\;\;1\leq j\leq M,
\eeq
where $\{ \bW_{\bD_j}(t), \; t \ge 0\}, 1\leq j\leq M+1$, are independent $d$ dimensional Brownian motions such that $E\bW_{\bD_j}(t)=\bf0$ and
$E\bW_{\bD_j}(t)\bW_{\bD_j}^\T(s)=\min(t,s)\bD_j$. The process $\bGam(t)\in \mathbb{R}^d$ is Gaussian process with $E\bGam(t)=\bf0$,
\begin{align}\label{gammacov}
E\bGam(t)\bGam^\T(s)=\bG(\min(t,s)),
\end{align}
and
\begin{align*}
\bG(u)=\sum_{\ell=1}^j{\bD_\ell}(\tau_\ell-\tau_{\ell-1})+\bD_{j+1}(u-\tau_j),\quad \mbox{if}\;\;\;\tau_{j}<u\leq \tau_{j+1},\;\;0\leq j\leq M.
\end{align*}
As it is common in the theory of linear regression, we require
\begin{assumption}\label{asminons}\;
$\bA_i=E\bx_{m_i}\bx_{m_i}^\T$  is a non--singular matrix for some $1\leq i \leq M+1$.
\end{assumption}
\noindent
Note that under Assumption \ref{as-lin-ber-v}, the matrices $\bA_i$ are well defined. Let
\beq\label{barga}
\bar{\bGam}(t)=\bGam(t)-t\bGam(1) -\bv(t) \left( \sum_{\ell=1}^{M+1} (\tau_\ell - \tau_{\ell-1}) \bA_\ell \right)^{-1} \sum_{\ell=1}^{M+1} \bW_{D_\ell} (\tau_\ell-\tau_{\ell-1}),\quad 0\leq t \leq 1,
\eeq
where
\beq\label{13a}
\bv(t) = \sum_{\ell=1}^{k-1}(\tau_\ell-\tau_{\ell-1}) \bA_\ell + (t-\tau_{k-1}) \bA_k - t \sum_{\ell=1}^{M+1} (\tau_\ell - \tau_{\ell-1}) \bA_\ell,
\eeq
for $\tau_{k-1}< t \leq \tau_k, 1\leq k \leq M+1$.

One often considers weighted functionals of $\bZ_N$ to improve the power of tests, particularly when changes occur near the ends of the sample. In our context, we apply the weight function $w(t)$ satisfying the following properties:
\begin{assumption}\label{as-wc-1}\;(i) $\inf_{\delta\leq t\leq 1-\delta}w(t)>0$, for all $0<\delta<1/2$,	(ii) $w(t)$ is non--decreasing in a neighbourhood of 0, and (iii) $w(t)$ is non-increasing in a neighbourhood of 1.
\end{assumption}
\noindent
Due to using the weight function $w$, we need an integral condition for the existence of a non-degenerate limit distribution of the weighted process ${\bf Z}_N$. Let
\beq\label{defiwc}
I(w,c)=\int_0^1 \frac{1}{t(1-t)}\exp\left( -\frac{cw^2(t)}{t(1-t)}   \right)dt.
\eeq
The integral $I(w,c)$ characterizes the upper and lower classes for the Brownian bridge at 0 and 1 (It\^o and McKean, 1965; O'Reilly, 1974). According to It\^{o} and McKean (1965), we know that $I(w, c)< \infty$ with some $c$ if and only if
\begin{equation}\label{eq-lw}
\underset{0<t<1}{\sup}\frac{ |B(t)|}{w(t)}<\infty \quad \mbox{ a.s.},
\end{equation}
where $\{B(t), 0\leq t \leq 1\} $ is a Brownian bridge. The most commonly used weight function satisfying $I(w,c)<\infty$ and Assumption \ref{as-wc-1} is $w(t)=[t(1-t)]^\kappa$, $0\leq \kappa <1/2$. Since $I([t(1-t)]^{1/2},c)=\infty$ for all $c>0$, \eqref{eq-lw} cannot hold with $w(t)=[t(1-t)]^{1/2}$, so in this case we have a different limit distribution. We note that applying the weight function $w(t) = [t(1 - t)]^{1/2}$ can lead to a slow convergence rate. We refer to Cs\"org\H{o} and Horv\'ath (1993) for more details and discussions on weighted empirical and Gaussian processes.

\begin{theorem}\label{lin-var-1} We assume that $H_0$, Assumptions \ref{as-m}--\ref{as-wc-1} are satisfied. If $I(w,c)<\infty$ with some $c>0$, then
	$$
	V^{HET}_N(\kappa) \equiv \sup_{0<t<1}\frac{1}{w(t)}\left\| \bZ_N(t)  \right\|\stackrel{\cD}{\to}\sup_{0<t<1}\frac{1}{w(t)}\left\| \bar{\bGam}(t)  \right\|,
	$$
	and
	$$
	Q^{HET}_N(\kappa) \equiv  \sup_{0<t<1}\frac{1}{w(t)}\left\| \bZ_N(t)  \right\|_\infty\stackrel{\cD}{\to}\sup_{0<t<1}\frac{1}{w(t)}\left\| \bar{\bGam}(t)  \right\|_\infty.
	$$
    where {  $\left\| \cdot \right\|_\infty$ denotes the maximum norm,} and $\bar{\bGam}(t) = \bGam(t)- t\bGam(1) - \bu(t)\bGam(1)$, with $\bu(t)=\bv(t)\left( \sum_{\ell=1}^{M+1}(\tau_\ell-\tau_{\ell-1})\bA_\ell \right)^{-1}$. 
\end{theorem}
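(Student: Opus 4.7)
The proof will follow a three-step strategy: algebraic reduction of $\bZ_N$ via the OLS normal equations, strong Gaussian approximation on each stationary block, and finally control of the weighted supremum via the O'Reilly integral criterion $I(w,c)<\infty$.

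First, under $H_0$, $\hat{\eps}_i = \eps_i - \bx_i^\T(\hat{\bbe}_N - \bbe_0)$, and the OLS normal equations $\sum_{i=1}^N \bx_i\hat{\eps}_i = \mathbf 0$ make the centering in \eqref{zz-def} vanish, so
$$
\bZ_N(t) = \bS_N(t) - \hat{\bC}_N(t)\,\hat{\bC}_N(1)^{-1}\bS_N(1),
$$
where $\bS_N(t) = N^{-1/2}\sum_{i\leq \lf(N+1)t\rf}\bx_i\eps_i$ and $\hat{\bC}_N(t) = N^{-1}\sum_{i\leq \lf(N+1)t\rf}\bx_i\bx_i^\T$. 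The problem thus reduces to a joint analysis of $(\bS_N,\hat{\bC}_N)$ in the appropriate weighted topology.

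Second, on each stationary block $(m_{\ell-1},m_\ell]$, Assumption \ref{as-lin-ber-v} places $\{\bx_i\eps_i\}$ in the Bernoulli-shift class with $\nu>4$ moments and $L^\nu$-coupling rate $j^{-\alpha}$, $\alpha>2$, and Assumption \ref{weps-mu} provides segment-wise mean zero, so the long-run covariance $\bD_\ell$ in \eqref{longd} is well defined. A KMT-type strong approximation for such Bernoulli shifts (of the kind used in Horv\'ath et al.\ 2021) supplies independent Brownian motions $\bW_{\bD_\ell}$ on each block whose concatenation $\bGam_N$ satisfies $\bGam_N\stackrel{\cD}{=}\bGam$ and
$$
\sup_{0\le t\le 1}\|\bS_N(t)-\bGam_N(t)\|=o_P(N^{-\gamma})
$$
for some $\gamma>0$. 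A block-wise ergodic argument yields $\sup_{0\le t\le 1}\|\hat{\bC}_N(t)-\bC(t)\|=o_P(1)$, with $\bC(t)=\sum_{\ell=1}^{k-1}(\tau_\ell-\tau_{\ell-1})\bA_\ell+(t-\tau_{k-1})\bA_k$ on $(\tau_{k-1},\tau_k]$, and Assumption \ref{asminons} makes $\bC(1)$ invertible. The algebraic identity $\bC(t)\bC(1)^{-1}=tI+\bu(t)$ is immediate from \eqref{13a}, which together with the above gives
$$
\bZ_N(t)=\bGam_N(t)-t\bGam_N(1)-\bu(t)\bGam_N(1)+R_N(t)=\bar{\bGam}_N(t)+R_N(t),\quad \sup_{0\le t\le 1}\|R_N(t)\|=o_P(1).
$$

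Third, on any compact $[\delta,1-\delta]\subset(0,1)$, Assumption \ref{as-wc-1}(i) keeps $w$ bounded below, so the continuous mapping theorem at once converts the uniform convergence to convergence in distribution of the weighted supremum restricted to $[\delta, 1-\delta]$. To extend this to the full interval we need tail tightness of $\|\bZ_N(t)\|/w(t)$ near the endpoints. The hypothesis $I(w,c)<\infty$ is precisely O'Reilly's criterion for $\sup_{0<t<1}|B(t)|/w(t)<\infty$ a.s.\ for a Brownian bridge $B$, so $\sup_{0<t<1}\|\bGam_N(t)-t\bGam_N(1)\|/w(t)$ is almost surely finite on $(0,\tau_1]$; moreover $\bv(t)$ from \eqref{13a} is linear in $t$ and vanishes at $0$, so $\bu(t)\bGam_N(1)/w(t)$ is dominated by the Brownian-bridge term, and a symmetric argument handles $t\to 1$. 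Combining Assumption \ref{as-wc-1}(ii)-(iii) with weighted partial-sum maximal inequalities from Cs\"org\H{o}-Horv\'ath theory (applicable since $I(w,c)<\infty$) converts the rate $o_P(N^{-\gamma})$ of the second step into $\sup_{0<t<\delta}\|R_N(t)\|/w(t)=o_P(1)$ for any fixed $\delta>0$. The same argument, with $\|\cdot\|_\infty$ in place of $\|\cdot\|$, delivers the conclusion for $Q_N^{HET}$. The hardest part is this endpoint control: the rate $\gamma$ extracted from Assumption \ref{as-lin-ber-v} must be strong enough that the approximation error, divided by $w(1/N)$, still vanishes, and the deterministic correction $\bu(t)\bGam(1)/w(t)$ must inherit the O'Reilly integrability from the Brownian-bridge structure, which it does via the linear decay of $\bv(t)$ near the endpoints.
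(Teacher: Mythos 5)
Your proposal is correct and follows essentially the same route as the paper: decompose $\bZ_N$ into the CUSUM of the unobservable $\bx_i\eps_i$ plus a correction from estimating $\bbe_0$ (your normal-equations form $\hat{\bC}_N(t)\hat{\bC}_N(1)^{-1}\bS_N(1)$, with $\bC(t)\bC(1)^{-1}=tI+\bu(t)$, is algebraically the paper's decomposition into the centering term plus $\bv_N(t)N^{1/2}(\hat{\bbe}_N-\bbe_0)$), apply a blockwise Bernoulli-shift strong approximation to obtain the piecewise Brownian limit $\bGam$, and use $I(w,c)<\infty$ with a $\lim_{\delta\to0}\limsup_N$ tightness argument at the endpoints. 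The only caution is that the endpoint control must rest on the local form of the strong approximation on the first and last stationary blocks (error $O_P(k^{\zeta})$ uniformly in $k$, as in the paper's Lemma \ref{stromu}), not on dividing a single global rate $o_P(N^{-\gamma})$ by $w(1/N)$; the Cs\"org\H{o}--Horv\'ath weighted approximation machinery you invoke supplies exactly that, so this is an imprecision of phrasing rather than a gap.
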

\noindent
We thus have $E\bar{\bGam}(t)=\bf0$ and
\begin{equation}\label{barbgc}
\begin{split}
E\bar{\bGam}(t)\bar{\bGam}^\T(s) \equiv \bar{\bG}(t,s) = & \bG(\min(t,s))-t\bG(s)-\bu(t)\bG(s)-s\bG(t)+st\bG(1)\\
& + \bu(t)s\bG(1) -\bG(t)\bu^\T(s) + t\bG(1)\bu^\T(s) + \bu(t)\bG(1)\bu^\T(s).
\end{split}
\end{equation}
We note if $\{\bx_i, -\infty <i<\infty \}$ is stationary, then $\bA_1=\dots=\bA_{M+1}$ and therefore $\bu(t)=\mathbf{O}$. In this case
\beq\label{barbg}
E\bar{\bGam}(t)\bar{\bGam}^\T(s) \equiv \bar{\bG}(t,s) =\bG(\min(t,s))-t\bG(s)-s\bG(t)+st\bG(1).
\eeq
{ It is important to note that due to the estimation of $\bbe$, the weighted CUSUM's of $\bx_i\eps_i$ and $\bx_i\hat{\eps}_i$, $1\leq i \leq N$, have a qualitatively different asymptotic distribution in heteroscedastic models. If for instance $\bA_1 = \dots = \bA_{M+1}$, i.e., the sequence $\{{\bf x}_i, -\infty <i<\infty\}$ is homoscedastic, and the heteroscedasticity is only in the errors $\{\eps_i, 1\leq i \leq N\}$, then $\bv(t)=\mathbf{O}$, so the effect of estimating $\bbe$ does not appear in the limit distribution. In other words, interestingly the limit behaviour of the CUSUM of $\bx_i\eps_i$'s and $\bx_i\hat{\eps}_i$'s are  the same if $\{{\bf x}_i, -\infty <i<\infty\}$ is homoscedastic even if $\{\eps_i, -\infty <i<\infty\}$ is not.}

We now turn our focus to the standardized statistics using the weight function $w(t)=[t(1-t)]^{1/2}$, and we use the following assumption
\begin{assumption}\label{asvarnosd}\;$\bD_1$ and $\bD_{M+1}$ are non--singular matrices.
\end{assumption}
\noindent
Assumption \ref{asvarnosd} is a sufficient condition to obtain limit results for suitable standardized weighted supremum functionals of $\bZ_N$. The proofs in Appendix \ref{proof-2} show that under $H_0$ the maximum of such functionals is asymptotically reached on the intervals $(0, \tau_1]$  or $(\tau_M, 1]$, and therefore the maximum taken on these intervals determines the limit distribution. We can standardize the CUSUM process with the matrix valued function
$$
\tilde{\bG}(t)=E\bar{\bGam}(t)\bar{\bGam}^\T(t),\quad 0<t<1.
$$
Let us consider the following ``Darling--Erd\H{o}s" type statistics, which converge weakly to extreme value laws.
\begin{theorem}\label{coverd} If $H_0$ and Assumptions \ref{as-m}--\ref{asvarnosd} hold, then
	\begin{align*}
	\lim_{N\to \infty}P\Biggl\{a(\log N)\sup_{0< t<1} \left(\bZ_N^\T(t)\tilde{\bG}^{-1}(t)\bZ_N(t)\right)^{1/2}\leq x +b_d(\log N)
	\Biggl\}=\exp(-2e^{-x})
	\end{align*}
	and
	\begin{align*}
	\lim_{N\to \infty}P\Biggl\{a(\log N)\sup_{0< t<1} \left\|\tilde{\bG}^{-1/2}(t)\bZ_N(t)\right\|_\infty\leq x +b_1(\log N)
	\Biggl\}=\exp(-2de^{-x})
	\end{align*}
	for all $x$, where $a(x)=(2\log x)^{1/2}$, and
	$$
	b_d(x)=2\log x+\frac{d}{2}\log \log x-\log \Gamma(d/2)
	$$
	and $\Gamma(x)$ is the Gamma function.
\end{theorem}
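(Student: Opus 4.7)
The plan is to reduce both statistics to the multivariate Darling--Erd\H{o}s extreme-value theorem via three ingredients: a sharpened strong Gaussian approximation of $\bZ_N$ by the limiting process $\bar{\bGam}$, localization of the supremum to small neighbourhoods of $t=0$ and $t=1$ where under Assumption~\ref{asvarnosd} the matrix $\tilde{\bG}^{-1}(t)$ blows up, and combination of the two endpoint contributions through the independence of $\bW_{\bD_1}$ and $\bW_{\bD_{M+1}}$ guaranteed by the representation \eqref{gaus}.

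First, I would upgrade Theorem~\ref{lin-var-1} to an almost-sure coupling
$$
\sup_{0<t<1}\bigl\|\bZ_N(t)-\bar{\bGam}_N(t)\bigr\|=o_P\bigl((\log\log N)^{-1/2}\bigr),
$$
together with weighted variants that survive multiplication by $t^{-1/2}$ near 0 and $(1-t)^{-1/2}$ near 1. Because on each segment $(m_{\ell-1},m_{\ell}]$ the sequence $\{\bx_i\eps_i\}$ is a stationary Bernoulli shift with moment $\nu>4$ and dependence decay $\alpha>2$ (Assumption~\ref{as-lin-ber-v}), KMT-type Gaussian approximations for dependent sequences apply segmentwise (e.g., Berkes, Liu and Wu, 2014), and splicing yields $\bGam$; the affine correction $\bu(t)\bGam(1)$ in \eqref{barga} absorbs the error from estimating $\bbe_0$ by $\hat{\bbe}_N$, exactly as in the proof of Theorem~\ref{lin-var-1}. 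Assumption~\ref{asvarnosd} yields $\tilde{\bG}(t)/t\to\bD_1$ as $t\downarrow 0$ and $\tilde{\bG}(t)/(1-t)\to\bD_{M+1}$ as $t\uparrow 1$, so the standardization blows up only at the endpoints. I would then choose a cutoff $\delta_N\to 0$ with $\delta_N<\tau_1$ and $1-\delta_N>\tau_M$, and argue via Theorem~\ref{lin-var-1} and the boundedness of $\tilde{\bG}^{-1}$ on compact subsets of $(0,1)$ that $\sup_{\delta_N<t<1-\delta_N}$ is $O_P(1)$, hence negligible against $b_d(\log N)\to\infty$.

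On the left piece, $\bGam(t)=\bW_{\bD_1}(t)$ on $[0,\tau_1]$ and a direct computation from \eqref{13a} gives $\bu(t)=O(t)$, so
$$
\bZ_N^\T(t)\tilde{\bG}^{-1}(t)\bZ_N(t)=\frac{1}{t}\bigl\|\bD_1^{-1/2}\bW_{\bD_1}(t)\bigr\|^2+o_P(1)
$$
uniformly in $t\in(0,\delta_N]$. Since $\bD_1^{-1/2}\bW_{\bD_1}$ is a standard $d$-dimensional Brownian motion, the classical multivariate Darling--Erd\H{o}s theorem (see Einmahl, 1989, and Cs\"org\H{o} and Horv\'ath, 1993) governs this endpoint's contribution under the normalization $(a(\log N),b_d(\log N))$; a symmetric analysis at the right endpoint uses $\bD_{M+1}^{-1/2}\bW_{\bD_{M+1}}$, and combining the two asymptotically independent endpoint contributions produces the $\exp(-2e^{-x})$ limit. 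For the max-norm statistic the same reduction applies coordinate-wise: after orthogonalization the $d$ components of $\bD_\ell^{-1/2}\bW_{\bD_\ell}$ are independent standard Brownian motions, each yielding an independent one-dimensional Darling--Erd\H{o}s contribution with normalization $b_1(\log N)$, and combining the $d$ coordinates at each of the two endpoints produces $\exp(-2de^{-x})$. The principal obstacle is the sharpness of the Gaussian approximation near the endpoints, where $\tilde{\bG}^{-1/2}(t)=O(t^{-1/2})$: one needs Cs\"org\H{o}--R\'ev\'esz type increment bounds for the partial sums of $\{\bx_i\eps_i\}$ on the first and last stationary segments so that the weighted error is $o_P(1/a(\log N))$ uniformly in $t$, together with the simpler verification that $\bu(t)\bGam(1)=O_P(t)$ near 0 and $O_P(1-t)$ near 1 remains negligible after the endpoint weighting.
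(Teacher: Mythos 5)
Your proposal follows essentially the same route as the paper: a segmentwise Gaussian approximation for the Bernoulli-shift sequence $\{\bx_i\eps_i\}$, localization of the standardized supremum to the two endpoints where $\tilde{\bG}(t)\approx t\bD_1$ and $\tilde{\bG}(t)\approx(1-t)\bD_{M+1}$, the Darling--Erd\H{o}s law for $\sup_t t^{-1}\|\bW(t)\|^2$ at each end, and independence of the two endpoint processes to obtain $\exp(-2e^{-x})$ (and $\exp(-2de^{-x})$ coordinatewise). The one technical point you flag but leave open --- the sharpness of the coupling near the endpoints --- is resolved in the paper not by a uniform $o_P(1/a(\log N))$ rate but by first discarding $k\le\log N$ (whose contribution is $O_P((\log\log\log N)^{1/2})$ by the law of the iterated logarithm) so that the approximation error $O_P(k^{\zeta-1/2})$ with $\zeta<1/2$ suffices on the remaining range.
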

\noindent

We note the weight function $1/w(t)$ is not explicitly used in the above statistics, as the statistics have been inherently weighted through the standardization term $\tilde{\bG}(t)$.   
\begin{Remark}\label{remark}{\rm 
The limit results in Theorem~\ref{lin-var-1} differ under the homoscedastic model, as heteroscedasticity alters the limiting distribution of functionals of $\bZ_N$. In contrast, the standardized statistics in Theorem~\ref{coverd} are invariant under homoscedasticity or heteroscedasticity. }
\end{Remark}  

The matrix valued function $\tilde{\bG}(t)$ is unknown, and in practice must be estimated from the sample. In order to do so, first we estimate $\bG(u)$
with a long run variance kernel estimator based on a fraction $u$ of the data, where $0<u\leq 1$. Here we use the standard kernel covariance estimator for the weighted residuals $\{ \bx_i\hat{\eps}_i, 1\leq i \leq N \}$. Considering the kernel function $K$ and a bandwidth parameter $h=h(N)$, we require
\begin{assumption}\label{as-k} (i) $K(0)=1$,  (ii) $K(u)=K(-u)$,  (iii) there is $c>0$ such that $K(u)=0$, if $u\not\in [-c, c]$, (iv) $\sup_{-c<u<c}|K(u)|<\infty$, (v) $K(u)$ is Lipschitz continuous on the real line, (iv) $h=h(N)\to \infty$ and $h/N\to 0$, and (v) there exists $\rho$ satisfying $\alpha -1 > \rho\ge 1$, where $\alpha$ is defined in Assumption \ref{as-lin-ber-v}, so that $ 0 < \lim_{x\to 0} [1-K(x)]/|x|^\rho < \infty$.
\end{assumption}
\noindent
The parameter $\rho$ in Assumption \ref{as-k}(v) indicates the order of the kernel near zero, which also approximates the asymptotic bias of kernel--based long run variance estimators. For example, the popular Bartlett kernel has order $\rho=1$ and the Parzen kernel has order $\rho= 2$ (see e.g. Andrews, 1991). We then have the long run covariance matrix estimator $\bG_N(u)$ that is computed from the weighted residuals $\{ \bx_i\hat{\eps}_i, 1\leq i \leq \lf Nu\rf\}$. Let
\begin{align*}
\hat{\bga}_{k,\ell}=\left\{
\begin{array}{ll}
\displaystyle \frac{1}{N}\sum_{i=1}^{k-\ell}\bx_i\hat{\eps}_i(\bx_{i+\ell}\hat{\eps}_{i+\ell})^\T,\quad\mbox{if}\;\;\;
\ell\geq 0,
\vspace{.3cm}\\
\displaystyle\frac{1}{N}\sum_{i=-(\ell-1)}^{k}\bx_i\hat{\eps}_i(\bx_{i+\ell}\hat{\eps}_{i+\ell})^\T,\quad\mbox{if}\;\;\;\ell< 0.
\end{array}
\right.
\end{align*}
Now $\{\bG_N(u), 0\leq u \leq 1\}$ is defined as
\begin{align}\label{hatgdef}
\bG_N(u)=\sum_{\ell=-(\lf Nu\rf-1)}^{\lf Nu\rf -1}K\left( \frac{\ell}{h} \right)\hat{\bga}_{\lf Nu\rf,\ell},
\end{align}
where $K$ and $h$ satisfy Assumption \ref{as-k}.

To approximate the limit in Theorem \ref{lin-var-1}, and according to \eqref{barbgc}, we need to estimate $\bu(t)$. We use
\begin{align}\label{unt}
\bu_N(t)=\left(   \sum_{i=1}^{\lf (N+1)t \rf} \bx_i\bx_i^\T - \frac{\lf (N+1)t \rf}{N} \sum_{i=1}^{N} \bx_i\bx_i^\T    \right) (\bX_N^\T \bX_N)^{-1}.
\end{align}

\begin{theorem}\label{hetcos} If $H_0$,
	Assumptions \ref{as-m}, \ref{as-lin-ber-v} (with $\nu\geq 8$), \ref{asminons} and \ref{as-k} hold, and \\$h/N^{1/3-2/(3\nu)}\rightarrow 0$, then
	$$
	\underset{0<t<1}{\sup} \|\bG_N(t)-\bG(t)\|=o_P(1), \quad
	\underset{0<t<1}{\sup} \|\bu_N(t)-\bu(t)\|=O_P(N^{-1/2}).
	$$
\end{theorem}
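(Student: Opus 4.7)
The plan is to handle the two statements separately, starting with the easier convergence of $\bu_N$. Under Assumption \ref{as-lin-ber-v}, the sequence $\{\bx_i\bx_i^\top\}$ is a piecewise-stationary decomposable Bernoulli shift with finite $\nu/2 \geq 4$ moments and summable physical-dependence coefficients (since $\alpha > 2$). Applying a functional CLT to the centered partial sums within each stationary segment and splicing the contributions across segments yields
\[
\sup_{0<t<1}\left\|\frac{1}{N}\sum_{i=1}^{\lf(N+1)t\rf}\bx_i\bx_i^\top - \left(\sum_{\ell=1}^{k-1}(\tau_\ell-\tau_{\ell-1})\bA_\ell + (t-\tau_{k-1})\bA_k\right)\right\|=O_P(N^{-1/2}),
\]
where $k=k(t)$ is the index with $\tau_{k-1}<t\leq\tau_k$. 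Subtracting $\lf(N+1)t\rf/N$ times the analogous full-sample sum and multiplying by $(N^{-1}\bX_N^\top\bX_N)^{-1}$, which by Assumption \ref{asminons} is itself $O_P(N^{-1/2})$-close to $\big(\sum_\ell(\tau_\ell-\tau_{\ell-1})\bA_\ell\big)^{-1}$, produces an expression matching $\bu(t)$ up to $O_P(N^{-1/2})$ uniformly in $t$, in view of the definitions \eqref{13a} and the formula for $\bu(t)$.

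For the $\bG_N$ statement, I would introduce the oracle estimator $\tilde{\bG}_N(u)$ defined as in \eqref{hatgdef} but with $\hat{\eps}_i$ replaced by the true error $\eps_i$, and use the decomposition
\[
\bG_N(u)-\bG(u)=\bigl[\bG_N(u)-\tilde{\bG}_N(u)\bigr]+\bigl[\tilde{\bG}_N(u)-E\tilde{\bG}_N(u)\bigr]+\bigl[E\tilde{\bG}_N(u)-\bG(u)\bigr].
\]
For the plug-in term, write $\bx_i\hat{\eps}_i=\bx_i\eps_i-\bx_i\bx_i^\top(\hat{\bbe}_N-\bbe_0)$ with $\hat{\bbe}_N-\bbe_0=O_P(N^{-1/2})$ under $H_0$. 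Expanding the bilinear form in $\hat{\bga}_{\lf Nu\rf,\ell}$ and bounding the resulting moment averages using Assumption \ref{as-lin-ber-v} with $\nu\geq 8$, each lag contributes $O_P(N^{-1/2})$; summing over the $O(h)$ nonzero kernel weights yields $\sup_u\|\bG_N(u)-\tilde{\bG}_N(u)\|=O_P(hN^{-1/2})$, which is $o_P(1)$ thanks to the bandwidth condition $h/N^{1/3-2/(3\nu)}\to 0$.

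For the centered stochastic fluctuation $\tilde{\bG}_N(u)-E\tilde{\bG}_N(u)$, fourth-moment bounds from Assumption \ref{as-lin-ber-v} (with $\nu\geq 8$) and the decay of the physical-dependence coefficients give a variance bound of order $h/N$ per entry pointwise; uniformity in $u$ follows by a chaining argument on an $N$-point grid together with the observation that $\tilde{\bG}_N(u)$ changes by at most $O_P(h/N)$ when $\lf Nu\rf$ increases by one. For the bias $E\tilde{\bG}_N(u)-\bG(u)$, split the sample into the $M+1$ stationary segments: within-segment autocovariance sums contribute the classical kernel bias of order $h^{-\rho}=o(1)$ (Andrews, 1991), while cross-segment pairs $(i,j)$ with $|i-j|\leq ch$ number at most $O(hM)$ and thus contribute $O(h/N)=o(1)$ after the $1/N$ normalization.

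The main obstacle is the uniform-in-$u$ control near the segment endpoints $u=\tau_\ell$ and near $u=0$, where the estimator effectively uses a vanishing amount of data, and where the piecewise-linear limit $\bG(u)$ has kinks. This is what forces the combined moment/bandwidth restriction $\nu\geq 8$ and $h=o(N^{1/3-2/(3\nu)})$: the former gives enough integrability to push the fourth-moment chaining through, while the latter ensures the plug-in error $O_P(hN^{-1/2})$ from replacing $\eps_i$ by $\hat{\eps}_i$ is uniformly negligible.
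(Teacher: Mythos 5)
Your proposal is correct and follows essentially the same route as the paper: decompose $\bG_N$ into a plug-in error (via $\bx_i\hat\eps_i=\bx_i\eps_i-\bx_i\bx_i^\T(\hat\bbe_N-\bbe_0)$ with $\hat\bbe_N-\bbe_0=O_P(N^{-1/2})$), a centered stochastic fluctuation controlled uniformly by moment bounds (the paper uses the M\'oricz--Serfling--Stout maximal inequality where you invoke chaining), and a bias split into within-segment kernel bias plus $o(1)$ cross-segment contributions, while the $\bu_N$ claim follows from the Gaussian approximation for partial sums of $\bx_i\bx_i^\T$ (the paper's Lemma \ref{stro4v}). The only minor imprecision is your attribution of the bandwidth restriction $h=o(N^{1/3-2/(3\nu)})$: in the paper it is needed to control the boundary sums $\sum_{i=k-\ell+1}^{k}$ (which cost an extra factor $N^{1/\nu}\ell^{1/2}$ via Markov's inequality), not the $O_P(hN^{-1/2})$ plug-in term, which would be negligible under the weaker condition $h=o(N^{1/2})$.
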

Then, we can estimate $\bar{\bG}(t,s)$ to approximate the limit in Theorem \ref{lin-var-1} using the plug-in estimator by replacing $\bG(t)$ with $\bG_N(t)$ and $\bu(t)$ with $\bu_N(t)$ in \eqref{barbgc}. For $\tilde{\bG}(t)$ in Theorem \ref{coverd}, we can use the plug--in estimator
\begin{equation}\label{plugin}
\begin{split}
\tilde{\bG}_N(t) =\bG_N(t)-2t\bG_N(t)+t^2\bG_N(1).
\end{split}
\end{equation} 
The consistency of $\tilde{\bG}_N(t)$ follows from Theorem \ref{hetcos}. 

We now turn to establishing the behavior of the functionals of $\bZ_n$ under the alternative hypothesis. To do so, we further assume that
\begin{assumption}\label{alt1} \;$k_\ell=\lf N\theta_\ell\rf, 1\leq \ell\leq R$, where $\theta_0=0 < \theta_1 < \cdots < \theta_R < 1=\theta_{R+1}$.
\end{assumption}
\noindent
Under Assumption \ref{alt1}, the potential change points are well separated. Also, changes in the parameters can degenerate along with the sample size increasing, i.e.,\  it is possible to have $\|\bbe_\ell-\bbe_{\ell-1}\|=o(1)$.

Let
\beq\label{betass}
\bbe^{**} = \left( \sum_{j=1}^{M+1}(\theta_j-\theta_{j-1}) \bA_j \right)^{-1} \sum_{\ell=1}^{R+1} \left( \sum_{j=1}^{M+1} \bA_j|(\tau_{\ell-1},\tau_\ell]\cap(\theta_j - \theta_{j-1})| \right) \bbe_\ell,
\eeq
where $| \cdot | $ is the Lebesgue measure. Next we define
\beq\label{bargdef}
\begin{split}
	\bar{\bg}_N(t) = & \sum_{j=1}^{k-1} \sum_{\ell=1}^{M+1} \bA_\ell |(\theta_{j-1},\theta_j]\cap (\tau_{\ell-1},\tau_\ell]|(\bbe_j - \bbe^{**}) \\
	&  -t\sum_{j=1}^{R+1}  \sum_{\ell=1}^{M+1}\bA_\ell |(\theta_{k-1},t]\cap (\tau_{\ell-1},\tau_\ell]|(\bbe_j - \bbe^{**}),
\end{split}
\eeq
if $\theta_{k-1}<t\leq \theta_k, 1\leq k \leq R+1$.

If the changes in the volatility occur at the same time as in the linear model coefficients, i.e., $M=R$, and $r_\ell=m_\ell$ for all $1\leq \ell \leq R$, then the formulas for $\bbe^{**}$ on $\bar{\bg}_N(t)$ are simpler. In this case
$$
\bbe^{**}=\left( \sum_{j=1}^{R+1} (\theta_j-\theta_{j-1}) \bA_j \right)^{-1} \sum_{\ell=1}^{R+1}(\theta_\ell - \theta_{\ell-1}) \bA_\ell \bbe_\ell,
$$
and
$$
\bar{\bg}_N(t) = \sum_{j=1}^{k-1} (\theta_j - \theta_{j-1})\bA_j(\bbe_j-\bbe_j^{**})- t\sum_{\ell=1}^{R+1}(\theta_\ell-\theta_{\ell-1})\bA_\ell(\bbe_\ell - \bbe^{**}),
$$
for $\theta_{k-1}<t\leq \theta_k$, $1\leq k \leq R+1$.

\begin{theorem}\label{thebe0a} We assume that $H_A$, Assumptions \ref{as-m}--\ref{asminons} and \ref{alt1} hold. \\
	(i) If, in addition, Assumptions \ref{as-wc-1} and
	\begin{align}\label{thebe1a}
	N^{1/2}\|\bar{\bg}_N(t)\|\to\infty\;\;\;\mbox{for some}\;\;\;0<t<1
	\end{align}
	are satisfied, then
	$$
	\sup_{0<t<1}\frac{1}{w(t)}\left\|\bZ_N(t)-N^{1/2}\bar{\bg}_N(t)  \right\|=O_P(1).
	$$
	(ii) If, in addition,
	\begin{align}\label{thebe2b}
	N^{1/2}(\log \log N)^{-1/2}\|\bar{\bg}_N(t)\|\to\infty\;\;\;\mbox{for some}\;\;\;0<t<1,
	\end{align}
	is satisfied, then
	$$
	(\log \log N)^{-1/2}\sup_{0<t<1}\frac{1}{[t(1-t)]^{1/2}}\left\|\bZ_N(t)-N^{1/2}\bar{\bg}_N(t)  \right\|=O_P(1).
	$$
\end{theorem}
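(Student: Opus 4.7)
The plan is to isolate the deterministic drift and then apply the weighted CUSUM asymptotics underlying Theorem \ref{lin-var-1} to the residual noise. Under $H_A$ and Assumption \ref{alt1}, for $i$ in the $j$-th segment ($k_{j-1}<i\leq k_j$) we have $y_i=\bx_i^\T\bbe_j+\eps_i$, so writing $j(i)$ for the segment containing $i$,
\begin{align*}
\bx_i\hat{\eps}_i = \bx_i\eps_i + \bx_i\bx_i^\T(\bbe_{j(i)}-\bbe^{**}) + \bx_i\bx_i^\T(\bbe^{**}-\hat{\bbe}_N).
\end{align*}
Substituting into \eqref{zz-def} gives $\bZ_N(t)=N^{-1/2}[A_N(t)+B_N(t)+C_N(t)]$, where $A_N(t)$ is the centered partial sum of $\bx_i\eps_i$, $B_N(t)$ is the centered partial sum of $\bx_i\bx_i^\T(\bbe_{j(i)}-\bbe^{**})$, and $C_N(t)=\bigl(\sum_{i=1}^{\lf(N+1)t\rf}\bx_i\bx_i^\T-\tfrac{\lf(N+1)t\rf}{N}\sum_{i=1}^N\bx_i\bx_i^\T\bigr)(\bbe^{**}-\hat{\bbe}_N)$, with the factor $\bbe^{**}-\hat{\bbe}_N$ pulled out of $C_N$ since it does not depend on $i$.

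As a preliminary step I would establish $\|\hat{\bbe}_N-\bbe^{**}\|=O_P(N^{-1/2})$ by solving the normal equations, using the piecewise-stationary LLN for $\bx_i\bx_i^\T$ (segmentwise convergence to $\bA_\ell$ from Assumption \ref{as-lin-ber-v}), the invariance principle for $\sum\bx_i\eps_i$ guaranteed by Assumption \ref{weps-mu}, and the nonsingularity in Assumption \ref{asminons}. The definition \eqref{betass} of $\bbe^{**}$ is precisely the limit that cancels the deterministic part of the normal equations. A parallel elementary computation shows that $E B_N(t)=N\bar{\bg}_N(t)$ as defined in \eqref{bargdef}, by combining the overlap lengths $|(\theta_{j-1},\theta_j]\cap(\tau_{\ell-1},\tau_\ell]|$ of the coefficient-break and variance-break intervals.

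Next I would control each of the three remainders in the weighted sup norm. For $A_N$, exactly the same weighted invariance principle used to prove Theorem \ref{lin-var-1}, applied now to the simpler sequence $\{\bx_i\eps_i\}$ with no $\hat{\bbe}_N$ correction, yields $\sup_{0<t<1}\|A_N(t)\|/(N^{1/2}w(t))=O_P(1)$ in part (i), while under $w(t)=[t(1-t)]^{1/2}$ the classical law of the iterated logarithm for weakly dependent sums delivers $\sup_{0<t<1}\|A_N(t)\|/(N^{1/2}[t(1-t)]^{1/2})=O_P((\log\log N)^{1/2})$ in part (ii). The same invariance principle applied to the centered sequence $\bx_i\bx_i^\T-E\bx_i\bx_i^\T$, with the bounded coefficients $\bbe_{j(i)}-\bbe^{**}$, controls $B_N(t)-N\bar{\bg}_N(t)$ on the same scale. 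For $C_N$ I would use that the centered CUSUM of $\bx_i\bx_i^\T$ has expectation $N\bv(t)$, with $\bv(t)$ as in \eqref{13a}, and fluctuation $O_P(N^{1/2})$ uniformly in $t$, so that
\begin{align*}
C_N(t) = N\bv(t)(\bbe^{**}-\hat{\bbe}_N) + O_P(N^{1/2})\cdot O_P(N^{-1/2}) = O_P(N^{1/2}\|\bv(t)\|) + O_P(1).
\end{align*}
Since $\bv(0)=\bv(1)=\mathbf{0}$ and the piecewise-affine structure of $\bv$ gives $\|\bv(t)\|=O(t\wedge(1-t))\leq[t(1-t)]^{1/2}$, the ratio $\|\bv(t)\|/w(t)$ is bounded on $(0,1)$ whenever Assumption \ref{as-wc-1} holds with $\kappa<1/2$, and equals $O([t(1-t)]^{1/2})$ in the boundary case.

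Assembling these three bounds and dividing through by $N^{1/2}$ delivers the two stated rates of approximation. The main obstacle is the uniform-in-$t$ control of $C_N$: a naive bound that multiplies $\|\bbe^{**}-\hat{\bbe}_N\|=O_P(N^{-1/2})$ by the weighted sup norm of the CUSUM of $\bx_i\bx_i^\T$ diverges at the endpoints against $w(t)$, so it is essential to extract the deterministic piece $N\bv(t)$ explicitly and exploit its vanishing at $0$ and $1$ to match the weight. The growth conditions \eqref{thebe1a} and \eqref{thebe2b} play no role in the approximation itself; they ensure that $N^{1/2}\bar{\bg}_N(t)$ dominates the $O_P(w(t))$ (resp.\ $O_P([t(1-t)]^{1/2}(\log\log N)^{1/2})$) remainder, which is the form in which the theorem is used downstream to prove consistency of the tests.
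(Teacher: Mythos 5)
Your proposal is correct and follows essentially the same route as the paper: the same decomposition of $\bx_i\hat{\eps}_i$ into noise, drift, and estimation-error terms, the consistency $\hat{\bbe}_N\to\bbe^{**}$ from the normal equations, and the invariance principles from Lemmas \ref{stromu}--\ref{stro5v} to control the fluctuations. If anything you are more explicit than the paper's (rather sketchy, pointwise) argument about the parts that actually need care — the $O_P(N^{-1/2})$ rate for $\hat{\bbe}_N-\bbe^{**}$ and the extraction of the deterministic piece $N\bv(t)$ so that the estimation-error term can be matched against the weight $w(t)$ near the endpoints.
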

The tests will stay consistent if $\tilde{\bG}(t)$ is replaced $\tilde{\bG}_N(t)$ in the testing procedures. For example, if $\underset{N\rightarrow \infty}{\lim}(N/h)^{1/2}(\log \log N)^{-1/2}\underset{1\leq \ell \leq M+1}{\max} \| \bbe^{**}-\bbe_\ell\|\rightarrow 0$, then
$$
\left(\log \log N \right)^{-1/2} \underset{0<t<1}{\sup} \frac{1}{(t(1-t))^{1/2}} \left( \bZ_N^\T(t) \tilde{\bG}_N^{-1}(t) \bZ_N(t) \right) \overset{P}{\rightarrow} \infty.
$$
The proofs of the theorems in this section are provided in the Online Supplement Section \ref{proof-2}.

\section{Smoothly changing error variance model}\label{sec-smooth}
So far, based on Assumption \ref{as-lin-ber-v}, we have considered a model where the structure of the errors in the observations might change during the observation period, but that the errors are piecewise stationary. In this section, we extend to the case when there are smooth changes in the variance of the errors during the intervals $(m_{i-1}, m_i], 1\leq i\leq M+1$.
Inspired by the mean change point model in G\'orecki et al.\ (2018), we modify the model of \eqref{limu1} as
\begin{align}\label{modelg}
y_i=\sum_{r=1}^{R+1}\bx_i^\T\bbe_r\mathds{1}\{k_{r-1}+1 \le i \le k_{r }\}+g(i/N)\eps_i,  \;\;1\leq i\leq N,
\end{align}
where the errors $\{\eps_i, -\infty<i<\infty\}$ are as described in Section \ref{sec-hetero}, and it holds that
\begin{assumption}\label{totalvar} $g$ has a finite total variation on $[0,1]$.
\end{assumption}
\noindent
Hence, we allow the variances of the errors $g(i/N)\eps_i$ to change even within the intervals $m_{\ell-1} < i \leq m_\ell$, for $1 \leq \ell \leq M+1$. The asymptotic behavior of $\bZ_N$, for $0 \leq t \leq 1$, in model \eqref{modelg} is established below.
\begin{theorem}\label{smo1} If $H_0$, Assumptions \ref{as-m} and \ref{asminons} and \ref{totalvar} hold, then 
\begin{align*}
\bZ_N(t)\stackrel{\cD^d[0,1]}{\longrightarrow}\bUp(t),
\end{align*}
where
$$
\bUp(t)=\bLamb(t)-t\bLamb(1)-\bv(t)\left(\sum_{\ell=1}^{M+1}(\tau_\ell-\tau_{\ell-1})\bA_\ell  \right)^{-1}\sum_{\ell=1}^{M+1}\bW_{\bD_\ell} \left(\int_{\tau_{\ell-1}}^{\tau_\ell}g^2(u)du \right),
$$
$$
\bLamb(t)=\int_0^t g(u)d\bGamma(u),
$$
and the Gaussian process $\{\bGamma(t), 0\leq t \leq 1\}$ is defined by \eqref{gaus}.
\end{theorem}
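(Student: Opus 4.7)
The plan is to decompose $\bZ_N(t)$ into a sum of a partial-sum process driven by $g(i/N)\bx_i\eps_i$ and a correction stemming from least-squares estimation of $\bbe_0$. Writing $\hat{\eps}_i = g(i/N)\eps_i - \bx_i^\T(\hat{\bbe}_N - \bbe_0)$ under $H_0$, we obtain
\begin{equation*}
\bZ_N(t) = \bS_N(t) - \bT_N(t)\,(\hat{\bbe}_N - \bbe_0),
\end{equation*}
where $\bS_N(t) = N^{-1/2}\bigl(\sum_{i=1}^{\lf(N+1)t\rf} g(i/N)\bx_i\eps_i - \tfrac{\lf(N+1)t\rf}{N}\sum_{i=1}^{N} g(i/N)\bx_i\eps_i\bigr)$ and $\bT_N(t) = N^{-1/2}\bigl(\sum_{i=1}^{\lf(N+1)t\rf}\bx_i\bx_i^\T - \tfrac{\lf(N+1)t\rf}{N}\sum_{i=1}^{N}\bx_i\bx_i^\T\bigr)$. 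The target limit $\bUp(t)$ has exactly this shape, so the work is to identify the limits of the three ingredients and to ensure their joint convergence.

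First I would establish a functional CLT: $N^{-1/2}\sum_{i=1}^{\lf Nt\rf} g(i/N)\bx_i\eps_i \stackrel{\cD^d[0,1]}{\longrightarrow} \bLamb(t) = \int_0^t g(u)\,d\bGam(u)$. Theorem \ref{lin-var-1} and its proof deliver a strong/weak invariance principle for the unweighted sums $N^{-1/2}\sum_{i=1}^{\lf Nt\rf}\bx_i\eps_i\Rightarrow\bGam(t)$ with $\bGam$ as in \eqref{gaus}. I would then introduce the $g(i/N)$ weight by first approximating $g$ in total-variation norm by a step function $g_K = \sum_{j=1}^{K} c_j \mathds{1}_{(a_{j-1},a_j]}$, for which the weighted partial sums reduce to finitely many unweighted CUSUMs on the intervals $(a_{j-1},a_j]$. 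An Abel summation together with Assumption \ref{totalvar} controls the approximation error $N^{-1/2}\sum_{i\le \lf Nt\rf}(g-g_K)(i/N)\bx_i\eps_i$ uniformly in $t$, because the supremum of the partial sums of $\bx_i\eps_i$ is $O_P(N^{1/2})$ and the coefficient in the Abel sum is bounded by the total variation of $g-g_K$. The limit $\int_0^t g\,d\bGam$ is a well-defined Gaussian process with bounded-variation integrand against a continuous Gaussian martingale, and tightness follows from the covariance estimate $E\|\bLamb(t)-\bLamb(s)\|^2 = \sum_\ell \int_{(s,t]\cap(\tau_{\ell-1},\tau_\ell]} g^2(u)\,\mathrm{tr}(\bD_\ell)\,du$. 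Consequently $\bS_N(t)\Rightarrow \bLamb(t)-t\bLamb(1)$ in $\cD^d[0,1]$.

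Second, I would handle $\bT_N(t)$ and $\hat{\bbe}_N-\bbe_0$ jointly with $\bS_N$. The law-of-large-numbers on each stationary segment, combined with Assumption \ref{as-m}, gives $\sup_{0\le t\le 1}\|\bT_N(t) - N^{1/2}\bv(t)\| = o_P(N^{1/2})$, i.e.\ $N^{-1/2}\bT_N(t)\to\bv(t)$ uniformly. For the OLS deviation, $\hat{\bbe}_N-\bbe_0 = (\bX_N^\T\bX_N)^{-1}\sum_{i=1}^{N} g(i/N)\bx_i\eps_i$, so $N^{1/2}(\hat{\bbe}_N-\bbe_0)$ is the product of $(N^{-1}\bX_N^\T\bX_N)^{-1}\to\bigl(\sum_{\ell=1}^{M+1}(\tau_\ell-\tau_{\ell-1})\bA_\ell\bigr)^{-1}$ and $\bS_N(1) + \tfrac{\lf N+1 \rf}{N}\bS_N(1)$-type terms; by the first step this converges jointly with $\bS_N(\cdot)$ to $\bigl(\sum_{\ell}(\tau_\ell-\tau_{\ell-1})\bA_\ell\bigr)^{-1}\bLamb(1)$. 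Slutsky and the continuous mapping theorem then yield
\begin{equation*}
\bZ_N(t) \stackrel{\cD^d[0,1]}{\longrightarrow} \bLamb(t) - t\bLamb(1) - \bv(t)\Bigl(\sum_{\ell=1}^{M+1}(\tau_\ell-\tau_{\ell-1})\bA_\ell\Bigr)^{-1}\bLamb(1).
\end{equation*}
To match the statement exactly, note that $\bLamb(1) = \sum_{\ell=1}^{M+1}\int_{\tau_{\ell-1}}^{\tau_\ell} g(u)\,d\bW_{\bD_\ell}(u-\tau_{\ell-1})$, and the Itô isometry applied to each independent piece identifies this in distribution with $\sum_{\ell=1}^{M+1}\bW_{\bD_\ell}\!\bigl(\int_{\tau_{\ell-1}}^{\tau_\ell} g^2(u)\,du\bigr)$.

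The main obstacle is the first step: pushing the unweighted invariance principle through the multiplication by $g(i/N)$ when $g$ is merely of bounded variation (and in particular not necessarily continuous). The step-function approximation plus Abel summation argument is the standard route, but one must be careful that the error bound holds uniformly in $t$ in the sup-norm on $\cD^d[0,1]$ and that the negligibility of the Abel remainder is quantitative in $K$ so that $K=K(N)\to\infty$ can be chosen slowly enough. Tightness of the limit candidate $\bLamb(\cdot)$ as a process with piecewise-independent Gaussian increments, together with the finite total variation of $g$, supplies the necessary continuity modulus.
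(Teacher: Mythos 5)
Your overall decomposition of $\bZ_N(t)$ into the weighted CUSUM $\bS_N(t)$ and the OLS correction $\bT_N(t)(\hat{\bbe}_N-\bbe_0)$ is exactly the one the paper uses, and your treatment of the second and third ingredients (uniform convergence of $N^{-1/2}\bT_N(t)$ to $\bv(t)$, joint convergence of $N^{1/2}(\hat{\bbe}_N-\bbe_0)$ with the weighted partial sums, identification of $\bLamb(1)$ in distribution with $\sum_{\ell}\bW_{\bD_\ell}(\int_{\tau_{\ell-1}}^{\tau_\ell}g^2)$) matches the paper's Lemmas on $\hat{\bbe}_N$ and the concluding Slutsky step. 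The problem is in your first, central step.

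You propose to approximate $g$ \emph{in total-variation norm} by a step function $g_K$ and then bound the Abel-summation remainder for $h=g-g_K$ by $\sup_{k}\|\bR(k)\|\cdot \mathrm{TV}(h)=O_P(N^{1/2})\,\mathrm{TV}(g-g_K)$. This cannot be made $o_P(N^{1/2})$: a continuous, non-constant function of bounded variation (e.g.\ $g(t)=t$) is \emph{not} approximable by step functions in total-variation norm. For any step function $g_K$, the difference $g-g_K$ has total variation at least that of $g$ restricted to the open intervals of constancy of $g_K$, which for continuous $g$ is all of $\mathrm{TV}(g)$; in the sawtooth example $g(t)=t$ one gets $\mathrm{TV}(g-g_K)=2$ for every $K$. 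So the error in your scheme stays $O_P(N^{1/2})$ rather than becoming $o_P(N^{1/2})$, and the passage from $g_K$ to $g$ fails. The paper avoids this by reversing the roles of the two factors in the Abel bound: it first couples $\bR(k)=\sum_{i\le k}\bx_i\eps_i$ with a Gaussian process $\hat{\bGamma}_N(k)$ so that $\max_{k}\|\bR(k)-\hat{\bGamma}_N(k)\|=o_P(N^{1/2})$, and only then applies Abel summation to the \emph{difference}, giving a bound of the form $o_P(N^{1/2})\cdot\bigl(\sup_{t}|g(t)|+\mathrm{TV}(g)\bigr)=o_P(N^{1/2})$, where $\mathrm{TV}(g)$ need only be \emph{bounded}. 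The limit is then identified by Jordan-decomposing $g=g_1-g_2$ into monotone pieces, converting $\sum_{\ell}\hat{\bGamma}_N(\ell)[g_1((\ell+1)/N)-g_1(\ell/N)]$ into $\int_0^{k/N}\bGamma_N\,dg_1$ via the modulus of continuity of the Wiener process, and integrating by parts to reach $\int_0^t g\,d\bGamma_N$. Your argument can be repaired by adopting this order of operations; as written, the step-function approximation is a genuine gap. (A minor additional slip: since $\bZ_N(1)=\mathbf{0}$, the quantity converging to $\bLamb(1)$ is $N^{-1/2}\sum_{i=1}^N g(i/N)\bx_i\eps_i$, not "$\bS_N(1)$"-type terms, but your intent there is clear.)
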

\noindent
We note that $\bLamb(t)$ is a $d$ dimenional time transformed Brownian motion with $E\bLamb(t)=0$, and
$$
E\bLamb(t)\bLamb^\T(s)=\bH(\min(t,s)),
$$
with
\begin{align*}
\bH(t)&=\int_0^{t}g^2(u)dE[\bGamma(u)\bGamma^\T(u)]\\
&=\sum_{j=1}^{k-1}\bD_j\int_{\tau_{j-1}}^{\tau_j}g^2(u)du+(t-\tau_{k-1})\bD_{k},\;\;\mbox{if} \;\;\;\tau_{k-1}<t \leq \tau_k,  1\leq k\leq M+1.
\end{align*}
 
Let
$$
\mca(t)=\int_0^tg^2(u)du.
$$
We note that the variance of the coordinates of $\bLamb(t)$ are proportional to $\mca(t)$, so it is natural to assume 
\begin{assumption}\label{ww1}  (i)  $0\leq \alpha_1<1/2$
$$
\lim_{t\to 0}\frac{1}{t^{\alpha_1}}\left(  \mca(t)\log\log(1/ \mca(t)) \right)^{1/2}=0
$$
(ii) $0<\alpha_2<1/2$
$$
\lim_{t\to 1}\frac{1}{(1-t)^{\alpha_2}}\left(  (\mca(1)-\mca(t))\log\log  (1/(\mca(1)-\mca(t))) \right)^{1/2}=0.
$$
\end{assumption}
We now present the weighted version of Theorem \ref{smo1}.
\begin{theorem}\label{smo2} If $H_0$, Assumptions \ref{as-m}--\ref{as-wc-1}, and \ref{totalvar} hold, then 
\begin{align*}
\frac{\bZ_N(t)}{ t^{\alpha_1}(1-t)^{\alpha_2 }}\stackrel{\cD^d[0,1]}{\longrightarrow}\frac{\bUp(t)}{ t^{\alpha_1}(1-t)^{\alpha_2}},
\end{align*}
where $\{\bUp(t),0\leq t \leq 1\}$ is defined in Theorem \ref{smo1}.
\end{theorem}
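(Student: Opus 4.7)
The approach is a standard two-step passage from the unweighted functional convergence of Theorem \ref{smo1} to the weighted statement: first establish convergence on compact subintervals $[\delta, 1-\delta]\subset (0,1)$, and second show that both $\bZ_N(t)/[t^{\alpha_1}(1-t)^{\alpha_2}]$ and its limit $\bUp(t)/[t^{\alpha_1}(1-t)^{\alpha_2}]$ are tight near the endpoints, with the contribution to the supremum uniformly negligible as $\delta\to 0$. This reduces the weighted weak convergence in $\cD^d[0,1]$ to a compact-interval convergence plus a pair of endpoint tail bounds.

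For the compact part, the map $\bX\mapsto \bX(\cdot)/[(\cdot)^{\alpha_1}(1-\cdot)^{\alpha_2}]$ is continuous from $\cD^d[0,1]$ to $\cD^d[\delta, 1-\delta]$ because the weight is bounded and bounded away from zero on $[\delta, 1-\delta]$, so the continuous mapping theorem applied to Theorem \ref{smo1} yields
$$
\frac{\bZ_N(t)}{t^{\alpha_1}(1-t)^{\alpha_2}} \stackrel{\cD^d[\delta, 1-\delta]}{\longrightarrow} \frac{\bUp(t)}{t^{\alpha_1}(1-t)^{\alpha_2}}.
$$
It then suffices to verify that for every $\eta>0$,
$$
\lim_{\delta\to 0}\limsup_{N\to\infty} P\!\left(\sup_{0<t\leq \delta}\frac{\|\bZ_N(t)\|}{t^{\alpha_1}}>\eta\right)=0,
$$
together with the symmetric statement at $t=1$ using the exponent $\alpha_2$. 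The analogous tail bounds for $\bUp$ follow from the Law of the Iterated Logarithm applied to the time-transformed Brownian motion $\bLamb$ together with Assumption \ref{ww1}, and from the fact that $\|\bv(t)\|=O(t)$ as $t\to 0$, so that $\bv(t)/t^{\alpha_1} = O(t^{1-\alpha_1})\to 0$ since $\alpha_1<1/2$; the situation at $t=1$ is handled symmetrically using Assumption \ref{ww1}(ii).

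The main obstacle is the endpoint bound for $\bZ_N$ itself. Decompose $\bZ_N(t)$ into the partial sum $\bS_N(t)=N^{-1/2}\sum_{i=1}^{\lf(N+1)t\rf}\bx_i\eps_i g(i/N)$ and the regression correction arising from replacing $\eps_i$ by $\hat{\eps}_i$. Using the Bernoulli shift structure of Assumption \ref{as-lin-ber-v} together with the piecewise-stationary long-run covariances $\bD_\ell$ and the finite total variation of $g$ from Assumption \ref{totalvar}, one establishes a H\'{a}jek--R\'{e}nyi type maximal inequality yielding
$$
\sup_{0<t\leq \delta}\frac{\|\bS_N(t)\|}{t^{\alpha_1}} = \Op{\sup_{0<t\leq \delta}\frac{[\mca(t)\log\log(1/\mca(t))]^{1/2}}{t^{\alpha_1}}},
$$
which tends to $0$ by Assumption \ref{ww1}(i); the residual correction is controlled via the $O(t)$ bound on $\bv$ as above, combined with $\Op{1}$ control of the overall centering term from the proof of Theorem \ref{smo1}. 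The technical heart of the argument is therefore a single uniform LIL-style maximal inequality for the partial sums of $\{\bx_i \eps_i g(i/N)\}$ with variance envelope proportional to $\mca(t)$ near the origin, and it must simultaneously accommodate the smoothly varying $g$, the segment-wise changes in long-run covariance, and the Bernoulli-shift approximation error. The symmetric argument near $t=1$, combined with the compact-interval convergence above and letting $\delta\to 0$ along a countable dense sequence, then completes the proof.
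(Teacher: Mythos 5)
Your proposal follows essentially the same route as the paper: continuous mapping on compact subintervals $[\delta,1-\delta]$ from Theorem \ref{smo1}, endpoint negligibility of $\|\bZ_N(t)\|/t^{\alpha_1}$ (and symmetrically at $t=1$) via a maximal bound with variance envelope $\mca(t)$ combined with Assumption \ref{ww1}, and control of the regression-correction terms through $\|\bv(t)\|=O(t)$ and the $O_P(N^{-1/2})$ rate of $\hat{\bbe}_N-\bbe_0$. The only cosmetic difference is that where you posit a H\'ajek--R\'enyi type maximal inequality for the partial sums of $\bx_i g(i/N)\eps_i$, the paper obtains that same bound by pushing the strong Gaussian approximation of Lemma \ref{stromu} through an Abel summation in $g$ and then applying the law of the iterated logarithm to the time-changed Wiener process $W(\mca(t))$.
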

\noindent
Theorem \ref{smo2} implies immediately that 
\begin{align*}
\sup_{0<t<1}\frac{\|\bZ_N(t)\|}{ t^{\alpha_1}(1-t)^{\alpha_2}}\stackrel{\cD }{\longrightarrow}\sup_{0<t<1}\frac{\|\bUp(t)\|}{   t^{\alpha_1}(1-t)^{\alpha_2}}
\end{align*}
and
\begin{align*}
\sup_{0<t<1}\frac{\|\bZ_N(t)\|_\infty}{ t^{\alpha_1}(1-t)^{\alpha_2}}\stackrel{\cD }{\longrightarrow}\sup_{0<t<1}\frac{\|\bUp(t)\|_\infty}{   t^{\alpha_1}(1-t)^{\alpha_2}}.
\end{align*}
The proof of Theorem \ref{smo2} is given in the  Supplemental Section \ref{seprsm}. Since we modify only the error term, Theorem \ref{thebe0a} also holds under model \eqref{modelg}.

Next, we consider the standardized statistics of Theorem \ref{smo2} when
\beq\label{pow}
g(t)=ct^\varrho, \;\;\;\;\mbox{with some}\;\;\;c\neq 0\;\;\;\mbox{and}\;\;\; \varrho\geq 0.
\eeq
Let
$$
\bar{\bH}(t)=\bH(Nt)-2t\bH(tN)+t^2\bH(1).
$$
We note that $\bar{\bH}(k/N)$ is the covariance matrix of $\bLamb(k)-(k/N)\bLamb(N)$. 
\begin{theorem}\label{newed} If $H_0$, Assumptions \ref{as-m}--\ref{asminons}, \ref{totalvar} and \eqref{pow} hold, then we have 
\begin{align*}
\lim_{N\to \infty}P\left\{a(\log N)\sup_{0<t<1}\left[\bZ_N^\T(t)\bar{\bH}^{-1}(t)\bZ_N(t)\right]^{1/2}\leq x+b_d(\log N)\right\}=\exp\left(-\frac{2\varrho+2}{2\varrho+1}e^{-x}\right)
\end{align*}
for all $x$.
\end{theorem}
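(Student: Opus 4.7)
The plan is to reduce the claim to a Darling--Erd\H{o}s extreme value analysis of the Gaussian limit process $\bUp$ from Theorem~\ref{smo1}, using the time change naturally associated to the variance function $g(t)=ct^\varrho$. First, I would strengthen the weak convergence in Theorem~\ref{smo1} to a strong (KMT-type) approximation at rate $o_P(1/\sqrt{\log\log N})$, which suffices to reduce the problem to the corresponding extreme-value statement for the Gaussian process $\{\bUp(t),0\leq t\leq 1\}$ standardized by $\bar{\bH}^{-1}(t)$.

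Next, I would localize. The standardized process has $\chi^2_d$ marginals, and on any compact subinterval $[\delta,1-\delta]$ its supremum is $O_P(1)$, negligible against the centering $b_d(\log N)\sim 2\log\log N$. Therefore only the boundary regions $(0,\delta_N]$ and $[1-\delta_N,1)$, with $\delta_N\to 0$ slowly, contribute asymptotically.

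The two endpoints contribute differently because $g(t)=ct^\varrho$ is asymmetric on $[0,1]$. Near $t=1$, since $g(1)=c\neq 0$, after the substitution $u=1-t$ the dominant part of $\bUp$ is a $d$-dimensional Brownian motion on a window of log-length $\log N+O(1)$, and the classical Darling--Erd\H{o}s theorem (in the spirit of Theorem~\ref{coverd}) yields a Gumbel contribution $\exp(-e^{-x})$ under the $(a(\log N),b_d(\log N))$ normalization. Near $t=0$ the natural change of variable is $s=\mca(t)=c^2 t^{2\varrho+1}/(2\varrho+1)$: the leading term of $\bUp$ becomes a Brownian motion in $s$, and standardization by $\bar{\bH}^{-1/2}(t)$ converts the process into the standard form $\|\bW(s)\|/\sqrt{s}$ on $s\in[\mca(1/N),\mca(\delta_N)]$. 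This window has log-length $(2\varrho+1)\log N+O(1)$, and a careful computation of the Poisson rate of high excursions re-expressed under the $(a(\log N),b_d(\log N))$ normalization gives the contribution $\exp\bigl(-\tfrac{1}{2\varrho+1}e^{-x}\bigr)$, the factor $1/(2\varrho+1)$ arising as the Jacobian of the power-law time change. The two boundary contributions are asymptotically independent, as they are driven by Brownian increments on disjoint time intervals while the regression-correction term $\bv(t)\bA^{-1}\sum_\ell\bW_{\bD_\ell}(\int g^2)$ is of lower order; multiplying the two factors gives $\exp(-e^{-x})\exp\bigl(-\tfrac{1}{2\varrho+1}e^{-x}\bigr)=\exp\bigl(-\tfrac{2\varrho+2}{2\varrho+1}e^{-x}\bigr)$, as required.

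The principal technical difficulty is the near-$0$ analysis. One must (i) handle the regression-correction term, which is of order $t$ near zero and can be comparable with or larger than $\bLamb(t)=O(t^{\varrho+1/2})$ when $\varrho\geq 1/2$, and show that after standardization by $\bar{\bH}^{-1/2}(t)$ the overall process still exhibits the correct OU-like excursion behaviour; and (ii) rigorously book-keep the Darling--Erd\H{o}s scaling under the power-law time change $s=\mca(t)$, so that the log-range $(2\varrho+1)\log N$ in $s$ produces precisely the factor $1/(2\varrho+1)$ in the Poisson rate of exceedances when re-expressed in the $(a(\log N),b_d(\log N))$ normalization appearing in the statement.
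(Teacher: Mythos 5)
Your overall architecture coincides with the paper's proof (Lemma \ref{lemgregus} and the deduction following it): a strong Gaussian approximation for the weighted partial sums built from Lemma \ref{stromu} via Abel summation and integration by parts, reduction of the supremum to the two boundary windows, a Darling--Erd\H{o}s analysis of $\|\bW(s)\|/\sqrt{s}$ over the time-changed windows $[\cA_N(1),\cA_N(m_1)]$ and $[\bar{\cA}_N(N-1),\bar{\cA}_N(m_M)]$, and independence of the two approximating processes. So the route is not new; the question is whether your sketch actually produces the constant $\tfrac{2\varrho+2}{2\varrho+1}$, which is the entire content of the theorem.

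Here there is a genuine gap, located exactly at that step. You assert that the left window has OU-length $(2\varrho+1)\log N+O(1)$ (correct: $u_N=\cA_N(m_1)/\cA_N(1)=\tau_1^{2\varrho+1}N^{2\varrho+1}$), and then assert that this window yields the factor $\exp\bigl(-\tfrac{1}{2\varrho+1}e^{-x}\bigr)$ ``as the Jacobian of the power-law time change.'' These two statements are mutually inconsistent. By Vostrikova's expansion \eqref{eq-vost} the exceedance probability of $U_d^*$ over a window is proportional to the window length, so under the common normalization $(a(\log N),b_d(\log N))$ a window that is $(2\varrho+1)$ times longer in the OU clock produces a Poisson rate that is $(2\varrho+1)$ times \emph{larger}; the Jacobian $d\log\mca(t)=(2\varrho+1)\,d\log t$ expands the clock, it does not contract it. You never carry out the renormalization from $(a(\log u_N),b_d(\log u_N))$ to $(a(\log N),b_d(\log N))$ --- the computation of $(a(\log N)-a(\log u_N))\cdot\sup$ and of $b_d(\log u_N)-b_d(\log N)$ that the paper performs between \eqref{ched1} and \eqref{ched12} --- and simply writing down the factor that makes the final product match the stated limit is not a proof; this bookkeeping must be done explicitly and the sign of the resulting shift tracked with care, since as written your window length points to the reciprocal of your claimed rate. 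Secondarily, you correctly flag but do not resolve the behaviour of the centering term near $t=0$: since $\bar{\bH}(t)=(1-2t)\bH(t)+t^2\bH(1)$ with $\bH(t)\asymp t^{2\varrho+1}$, the replacement of $\bar{\bH}(t)$ by $\bH(t)$ near the origin (which both you and the paper rely on to reach the pure $\|\bW(s)\|/\sqrt s$ form) requires $t^{2}=o(t^{2\varrho+1})$, i.e. $\varrho<1/2$; for $\varrho\geq 1/2$ the term $-t\bLamb(1)$ dominates $\bLamb(t)$ after standardization and the left-end analysis changes qualitatively, so that case needs an actual argument rather than an acknowledgement that it is delicate.
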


The main difference between Theorems \ref{newed} and \ref{coverd} arises from the substantially smaller variances of $g(i/N)\varepsilon_i$ under \eqref{pow} when $i$ is small, compared to the variances when $i$ is close to $N$. The variances of $g(i/N)\eps_i$ are converging to 0 as $i/N\to 0$, while 
to $g^2(1) E\eps^2_0$, if $i/N\to 1$. We then show that Theorem \ref{coverd} remains true if the variances of $g(i/N)\eps_{i}$ converge to positive constants if $i/N\to 0$ and $i/N\to 1$.
We replace \eqref{pow} with
\beq\label{pow1}
g(t)=c_1+c_2t^\varrho\;\;\;\mbox{with some}\;\;\;c_1\neq 0,   \;\;\;\mbox{and}\;\;\; \varrho\geq 0.
\eeq
\begin{theorem}\label{newedd} If $H_0$, Assumptions \ref{as-m}--\ref{asminons}, \ref{totalvar} and \eqref{pow1} hold, then we have 
\begin{align*}
\lim_{N\to \infty}P\left\{a(\log N)\sup_{0<t<1}\left[\bZ_N^\T(t)\bar{\bH}^{-1}(t)\bZ_N(t)\right]^{1/2}\leq x+b_d(\log N)\right\}=\exp\left(-2e^{-x}\right)
\end{align*}
for all $x$.
\end{theorem}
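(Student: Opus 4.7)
The plan is to reprise the argument underlying Theorem \ref{coverd}, exploiting the key structural fact that under \eqref{pow1} the scale function $g$ does not vanish at the endpoint $t=0$ (since $c_1 \neq 0$). This is in sharp contrast to \eqref{pow}, where $g(0)=0$ depresses the left--endpoint Darling--Erd\H{o}s contribution, yielding the anomalous factor $(2\varrho+2)/(2\varrho+1)$. Here the local covariance of the standardized process near $t=0$ is, up to a positive constant, the same as in the abrupt--change model, and both endpoints contribute on an equal footing, giving the factor $2$. The proof splits naturally into three steps.

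Step 1 (Gaussian reduction). Combine Theorem \ref{smo1} with a quantitative KMT--type strong approximation on each stationary block granted by Assumption \ref{as-lin-ber-v} to couple $\bZ_N$ with a realization of $\bUp$ for which
\begin{align*}
\sup_{\delta_N \leq t \leq 1 - \delta_N} \frac{\|\bZ_N(t)-\bUp(t)\|}{[t(1-t)]^{1/2}} = o_P((\log N)^{-1/2}),
\end{align*}
with $\delta_N\to 0$ suitably slowly. This reduces the analysis of $\sup_{0<t<1}[\bZ_N^\T(t)\bar{\bH}^{-1}(t)\bZ_N(t)]^{1/2}$ to the corresponding functional of $\bUp$, the replacement being negligible on the Darling--Erd\H{o}s scale.

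Step 2 (Endpoint covariance analysis). Because $c_1\neq 0$, one has $\mca(t) = c_1^2 t(1+o(1))$ as $t\to 0$, and a direct calculation from $\bar{\bH}(t) = (1-2t)\bH(t) + t^2\bH(1)$ yields
\begin{align*}
\bar{\bH}(t) = c_1^2 t\, \bD_1\,(1+o(1)) \quad \mbox{as } t\to 0^+,
\end{align*}
and symmetrically $\bar{\bH}(t) = g^2(1)(1-t)\,\bD_{M+1}(1+o(1))$ as $t\to 1^-$ (the case $g(1)=0$ being excluded by the non--degeneracy implicit in the theorem). Moreover, \eqref{13a} shows $\|\bv(t)\| = O(t)$ as $t\to 0$ and $\|\bv(t)\|=O(1-t)$ as $t\to 1$, since $\bv(t)$ is piecewise linear and vanishes at both endpoints. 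Consequently the estimation correction $-\bv(t)(\cdots)^{-1}\sum_{\ell}\bW_{\bD_\ell}(\cdots)$ in $\bUp(t)$ contributes a quantity of order $O_P(\sqrt{t})$ to $\bar{\bH}^{-1/2}(t)\bUp(t)$ near $0$, which lies well below the $\sqrt{\log\log N}$ Brownian envelope. Thus $\bar{\bH}^{-1/2}(t)\bUp(t)$ is asymptotically distributionally equivalent to $\bW(t)/\sqrt{t}$ near $t=0$ and to $(\bW(1)-\bW(t))/\sqrt{1-t}$ near $t=1$, for a standard $d$--dimensional Brownian motion $\bW$.

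Step 3 (Darling--Erd\H{o}s). Apply the classical vector Darling--Erd\H{o}s theorem to $\sup_{e^{-\log N}\leq t\leq \delta_N}\|\bW(t)\|/\sqrt{t}$ and its mirror image at $t=1$; each endpoint contributes an asymptotically independent extreme-value factor $\exp(-e^{-x})$ at the normalization $a(\log N), b_d(\log N)$, while the interior supremum over $[\delta_N,1-\delta_N]$ is of strictly lower order and asymptotically negligible. Multiplying the two independent end contributions produces the announced limit $\exp(-2e^{-x})$.

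The main obstacle is Step 2 for the estimation--correction term: since $\bv(t)$ couples both endpoints to the same global Gaussian vector $\sum_\ell \bW_{\bD_\ell}(\int_{\tau_{\ell-1}}^{\tau_\ell}g^2)$, a careful piecewise analysis across the subintervals $(\tau_{\ell-1},\tau_\ell]$ is needed to verify that after $\bar{\bH}^{-1/2}(t)$ normalization this correction remains uniformly dominated by the Brownian lower envelope near both $0$ and $1$. Once this is secured, the rest of the argument is parallel to the proof of Theorem \ref{coverd}, and the improvement from the constant $(2\varrho+2)/(2\varrho+1)$ in Theorem \ref{newed} to $2$ here is attributable precisely to the hypothesis $c_1\neq 0$.
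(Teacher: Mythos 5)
Your proposal is correct and follows essentially the same route as the paper: reduce $\bZ_N$ to the Gaussian process via the blockwise strong approximation and the representation \eqref{hata}, observe that $c_1\neq 0$ forces $\mca(t)\asymp t$ near $0$ so the time-changed Darling--Erd\H{o}s normalization at the left endpoint carries no $\log(2\varrho+1)$ shift (unlike under \eqref{pow}), check that the $\bv(t)$ estimation correction is $O(t)$ hence negligible after $\bar{\bH}^{-1/2}(t)$ standardization, and multiply the two independent endpoint extreme-value factors while the interior supremum stays $O_P(1)$. This is exactly the paper's argument (Lemma \ref{lemgreguska} combined with the proof of Theorem \ref{newed}), so no further comparison is needed.
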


We also note that 
\begin{align*}
\lim_{N\to \infty}P\left\{a(\log N)\sup_{0<t<1}\left[\bZ_N^\T(t)\tilde{\bG}_N^{-1}(t)\bZ_N(t)\right]^{1/2}\leq x+b_d(\log N)\right\}=\exp\left(-2e^{-x}\right)
\end{align*}
under the assumptions of Theorem \ref{newedd}. This means that there is no difference between the covariance matrix estimated Darling--Erd\H{o}s results in Sections \ref{sec-hetero} and \ref{sec-smooth}. No information on the error structure is required to implement the testing procedure, as long as $E\varepsilon_i^2 \geq c > 0$. We can use the same method to compute the critical values used in cases of abrupt change errors or smoothly changing variance errors.

\section{Monte Carlo simulation and Finite Sample Performance}\label{se-simulation}

\subsection{The computation of critical values}\label{se-cv}\hfill\\
To assess the finite-sample performance of our tests, we focus on the standardized Darling--Erd\H{o}s-type statistics presented in Theorems~\ref{coverd} or~\ref{newedd}. As noted in Remark~\ref{remark}, the standardized statistics are more practical in data applications, as they do not require prior information on heteroscedasticity. However, Darling--Erd\H{o}s-type statistics typically suffer from slow convergence due to their exponential-type limiting distribution, often leading to tests that are under-sized and with reduced power. We put forward an improved finite sample approximation of the Darling--Erd\H{o}s  limiting distribution in this section. Let
\begin{align*}
	V^{HET}_N(1/2)=\sup_{0<t<1} \left(\bZ_N^\T(t)\tilde{\bG}_N^{-1}(t)\bZ_N(t)\right)^{1/2}, \mbox{ }
	Q^{HET}_N(1/2)=\sup_{0<t<1} \left\|\tilde{\bG}_N^{-1/2}(t)\bZ_N(t)\right\|_\infty,
\end{align*}
where $\tilde{\bG}_N(t), 0<t<1$ is defined in \eqref{plugin}. The long run variance estimators $\bG_N(u)$ in \eqref{hatgdef} were computed with the Bartlett kernel, and the bandwidth is selected through the automatic bandwidth selection method of Andrews (1991). This subsection explains the computation of critical values for the test statistics $V^{HET}_N(1/2)$ and $Q^{HET}_N(1/2)$.

In Appendix \ref{proof-2}, we show that in Theorem \ref{coverd}, a Gaussian approximation is first established for the process $\bZ_N$, and then the limiting distribution for the maximum of Gaussian processes is applied. The Gaussian approximation yields specifically that
\begin{align}\label{chor-mo}
\Biggl| P & \left\{a_N \sup_{0<t<1}\frac{1}{[t(1-t)]^{1/2}}\left({\bZ}_N^\T(t)\tilde{\bG}^{-1}(t){\bZ}_N(t)\right)^{1/2}\leq x + b_d(\log N)\right\} -\\
&P\left\{a_N \sup_{c_1(N)\leq t \leq 1-c_2(N)} \frac{1}{(t(1-t))^{1/2}}\left( \sum_{i=1}^{d} B_i^2(t)\right)^{1/2} \leq x + b_d(\log N) \right\}  \Biggl| \rightarrow 0,\notag
\end{align}
for any $c_1(N),c_2(N) \rightarrow 0$ and satisfying $c_1(N)= O\left((\log N)^{\kappa_1}/N\right)$ and $c_2(N)= O\left( (\log N)^{\kappa_2}/N \right)$ with any $-\infty < \kappa_1, \kappa_2< \infty$. Similarly,
\begin{align}\label{chor-mo1}
\Biggl| P & \left\{a_N \sup_{0<t<1}\frac{1}{[t(1-t)]^{1/2}}\left\|\tilde{\bG}^{-1/2}(t)\bZ_N(t)\right\|_\infty\leq x + b_d(\log N)\right\} -\\
&P\left\{a_N \sup_{c_1(N)\leq t \leq 1-c_2(N)} \frac{ |B_i(t)| }{(t(1-t))^{1/2}} \leq x + b_d(\log N) \right\} \Biggl| \rightarrow 0, \notag
\end{align}
where $\{B_1(t), 0\leq t \leq 1\},  \ldots, \{B_d(t), 0\leq t \leq 1\}$ are independent Brownian bridges.

To obtain the critical values, in practice one can simulate the random variable,
$$
\sup_{c_1(N)\leq t \leq 1-c_2(N)} \frac{1}{(t(1-t))^{1/2}}\left( \sum_{i=1}^{d} B_i^2(t)\right)^{1/2}
$$
for choices $c_1(N)$ and $c_2(N)$ such as $c_1(N)=c_2(N)=1/N$, and take its quantiles as critical values. We now instead introduce an approximation inspired by Vostrikova (1981), which does not require any simulation to obtain the empirical quantiles of the limits. We recall from Cs\"org\H{o} and Horv\'ath  (1997, p.\ 366),
\begin{align*}
\left\{ \frac{1}{[t(1-t)]^{1/2}}\left( \sum_{i=1}^dB_i^2(t)   \right)^{1/2}, 0<t<1 \right\}\stackrel{\cD}{=}
\left\{  U^*_d\left(\log\left(\frac{t}{1-t}\right)\right), 0<t<1\right\},
\end{align*}
where
$$
U^*_d(x)=\left(\sum_{i=1}^d U_i^2(x)\right)^{1/2},
$$
and $\{U_1(x), -\infty <x <\infty\}, \ldots, \{U_d(x), -\infty<x<\infty\}$ are independent, identically distributed Ornstein--Uhlenbeck processes, i.e.\ Gaussian processes with
$EU_i(x)=0$ and $EU_i(x)U_i(y)=\exp(-|x-y|/2)$.
As a result, we get from  \eqref{chor-mo} and \eqref{chor-mo1} that
\begin{align}\label{chor-mo2}
P\left\{\sup_{0<t<1}\frac{1}{[t(1-t)]^{1/2}}\left({\bZ}_N^\T(t)\tilde{\bG}^{-1}(t){\bZ}_N(t)\right)^{1/2}\leq x\right\}
\approx
P\left\{\sup_{0\leq t \leq \log N^2}U^*_d(t)\leq x\right\}
\end{align}
and
\begin{align}\label{chor-mo3}
P\left\{\sup_{0<t<1}\frac{1}{[t(1-t)]^{1/2}}\left\|\tilde{\bG}^{-1/2}(t)\bZ_N(t)\right\|_\infty\leq x\right\}\approx
P\left\{ \sup_{0\leq t \leq \log N^2}U^*_1(t)\leq x \right\}^d,
\end{align}
The critical values of the statistics $V^{HET}_N(1/2)$, and $Q^{HET}_N(1/2)$ are be approximated by using \eqref{chor-mo2} and \eqref{chor-mo3}. In Vostrikova (1981), it is shown that for $T>0$ and $r\geq 1$,
\begin{align}\label{eq-vost}
P\left\{ \sup_{0\leq t \leq T}U^*_r(t)> x   \right\}=\frac{x^r\exp(-x^2/2)}{2^{r/2}\Gamma(r/2)}
\left\{  T-\frac{r}{x^2} T+\frac{4}{x^2}+O\left(\frac{1}{x^4}\right)\right\},
\end{align}
where $\Gamma(\cdot)$ denotes the Gamma function. We ignore the $O(1/x^4)$ term, and then compute critical values directly from \eqref{eq-vost}. 

In an unreported simulation, we found that using critical values based on the Vostrikova approximation yields higher power than those from the Darling--Erd\H{o}s limit. Therefore, we use the Vostrikova-based critical values in the analysis below. For further details on the implementation of the non-standardised CUSUM statistics in Theorems~\ref{lin-var-1}, \ref{smo1}, and~\ref{smo2}, we refer the reader to Section~\ref{secriticalvalue} of the Supplementary Material.

\subsection{Monte Carlo Simulations}\label{se-mo}  \hfill\\
We now introduce the model and settings for the Monte Carlo simulation study, with results discussed in the next subsection. We consider a data generating process (DGP) taking the form \eqref{limu1} and simply allow one change point in the regression parameters:
\begin{equation}\label{sim-mod}
y_i = \bx_i^\T
(  1 + \delta \mathds{1}\{ i \ge k_1\})
+ \eps_i, \qquad 1 \leq i \leq N.
\end{equation}
The covariates are ${\bx}_i = (1, x_{2,i})^\T$ for $1 \leq i \leq N$, where $x_{2,i}$ is a heteroscedastic covariate following a segmented independent and identically distributed (i.i.d.) normal distribution with mean 0, standard deviation 3 before, and 0.5 after the break at $\lfloor 0.45 N \rfloor$. Prior to the change point $k_1$, we set the regression parameter as $\boldsymbol{\beta}_1 = (1 , 1 )^\top$, and the coefficients become to $\boldsymbol{\beta}_2 = (1+\delta, 1+\delta)^\top$ after the change. The change point size $\delta$ varies in the range $\delta \in \{-1.5, -1.2, -0.9, -0.6, -0.3, 0, 0.3, 0.6, 0.9, 1.2, 1.5\}$. The null hypothesis $H_0$ holds when $\delta=0$. 
The rejection rates at the nominal size $\omega = 0.05$ are reported as power curves in terms of the change size $\delta$, for each DGP and for middle change point at $k^ = \lfloor 0.5N \rfloor$ and early change point $k^ = \lfloor 0.2N \rfloor$.

For the error term $\epsilon_i$, we consider four heteroscedastic processes, encompassing both abrupt variance changes (cases i--iii) and a smooth variance change (case iv).

(i) (\textbf{Normal}) the error terms $\eps_i$ are i.i.d normal random variables:
\begin{equation*}
\epsilon_i = \left\{\begin{matrix}
N(0,3),\quad 1\leq i\leq \lfloor m^* N  \rfloor,
\\
N(0,0.5), \quad \lfloor m^* N  \rfloor<i\leq N.
\end{matrix}\right.
\end{equation*}

(ii) (\textbf{AR}) the error terms $\epsilon_i$ follow autoregressive (AR-1) process:
\begin{equation*}
\epsilon_i = \left\{\begin{matrix}
0.3 \epsilon_{i-1} + \varepsilon^{(1)}_i,\;\;  \mbox{with } \varepsilon^{(1)}_i \sim N(0,3), \;\;  1\leq i\leq \lfloor m^* N  \rfloor,
\\
0.3 \epsilon_{i-1} + \varepsilon^{(2)}_i,\;\;  \mbox{with } \varepsilon^{(2)}_i \sim N(0,0.5), \;\; \lfloor m^* N  \rfloor<i\leq N.
\end{matrix}\right.
\end{equation*}

(iii) (\textbf{GARCH}) the error terms $\epsilon_i$ follow a stationary GARCH(1,1) process defined by
\begin{equation*}
\epsilon_i = h_i^{1/2} \varepsilon_i, \mbox{ with }  h_i =
\left\{\begin{matrix}
3 + 0.01 \epsilon_{i-1}^2 + 0.8 h_{i-1},\;\;1\leq i\leq \lfloor m^* N  \rfloor,
\\
0.5 + 0.01 \epsilon_{i-1}^2 + 0.8 h_{i-1}, \;\; \lfloor m^* N  \rfloor<i\leq N.
\end{matrix}\right.
\end{equation*}
and the $\varepsilon_i$'s are i.i.d. standard normal random variables. In this exercise, we always set $m^*=\lf 0.45N \rf$, i.e., there is a variance change around the middle of the sample.

(iv) (\textbf{HeteSmooth}) Lastly, the error term follows a heteroscedastic smooth variance change process,
\begin{equation*}
\epsilon_i = g\left(\frac{i}{N} \right) \varepsilon_i, \mbox{  for  }
g\left(\frac{i}{N} \right) = 1+\frac{2}{1+ \exp(-10 (i/N - 0.5) )}.
\end{equation*}
We set the sample size $N= \{125,250\}$. All reported results are based on 2000 independent replications in each setting.  

We compare the proposed methods to three existing tests in the literature that considered change point detection in linear models with heteroscedastic errors, including Xu (2015), Perron et al. (2020) and Horv\'ath et al., (2021). Xu (2015) propose CUSUM--type tests to detect changes in linear models with nonstationary variance. We adopt the robust CUSUM test, with critical values derived from the limit approximated by a T-discrete steps Wiener process, denoted as the $\rm XUC$ test.
Perron et al. (2020) introduce a likelihood ratio--type test to detect changes in the coefficients, accommodating scenarios where the coefficient change up to $p$ times and the variance of the errors changes up to $q$ times. Specifically, we adopt their simulation setup, setting $p\in[0, 1, 2]$ and $q\in [1,2,3]$. The null hypothesis of no change is rejected when the test statistics exceed the critical values in at least one model specification. We employ the statistic $LR_{3,N}$ and refer to it as the $\rm{PYZ}$ test. Horv\'ath et al. (2021) introduce a R\'enyi--type statistic to compare the least squares estimators from sub-segments split by potential change points. Following their suggestion, we choose the tuning parameters $a_N = b_N = N^{1/2}$ and denote the statistic as $\rm{HMR}$ test. In Online Supplement Section \ref{sec-app-sim}, we examine the effect of the choice of $\kappa$ and the methods used to calculate the critical values. The results generally recommend using Darling--Erd\H{o}s type statistics with the distributional approximation obtained from \eqref{eq-vost} in terms of testing power. Therefore, we use the tests $V_N^{HET}(1/2)$ and $Q_N^{HET}(1/2)$ in this experiment.

Figure \ref{figure-peer125} and \ref{figure-peer250} display the power curves of the test candidates for $N = 125$ and $N = 250$, respectively. The overall patterns are consistent across both early and mid-sample changes, with greater power observed as the sample size increases. Tests $\mathrm{XUC}$ and $V^{\mathrm{HET}}_N(1/2)$ maintain approximately nominal size, while $Q^{\mathrm{HET}}_N(1/2)$ is slightly oversized. The $\mathrm{HMR}$ test performs well under \textbf{GARCH} errors but exhibits substantial size distortion when applied to other error structures, particularly under \textbf{HeteSmooth} errors. The test $\rm{PYZ}$ is oversized in all DGPs, and it deteriorates in models generated with \textbf{GARCH} errors. We note that the test $\rm{PYZ}$ is designed to mitigate power reduction, but the simulation shows that it can be oversized when dealing with changes on the tails and heteroscedastic covariates that fall outside the scope of its intended design. 

Under the alternative hypothesis, all tests start to gain reasonable power in large samples. The test $\rm{XUC}$ exhibits the lowest overall power, particularly with small sample sizes and early change points. The test $\rm{HMR}$ is relatively competitive in models with \textbf{NORMAL} and \textbf{GARCH} errors. The test $\rm{PYZ}$ achieves significantly enhanced power, but it is not reliable due to the size distortions. The test $Q^{HET}_N(1/2)$ consistently outperforms $V^{HET}_N(1/2)$, which is primarily used in Section \ref{sec-app} for empirical applications. Overall, the simulation results imply distortions of oversize or reduced power of the existing tests by encountering severe heteroscedasticity and various locations of changes in the linear models.
 
Additional simulation results for models with homoscedastic errors, as well as the effects of change-point locations and signal-to-noise ratios, are reported in Online Supplement Section~\ref{sec-app-sim}.

 \begin{figure}[H]
	\centering
	\caption{Rejection rates of $V^{HET}_N(1/2)$, $Q^{HET}_N(1/2)$, $\rm{XUC}$, $\rm{HMR}$, and $\rm{PYZ}$ at the 95\% significance level with heteroscedastic errors for an early change point $k^* = 0.2$ (first row) and a middle change point $k^* = 0.5$ (second row), with a sample size of $N = 125$.}
	\label{figure-peer125}
	\includegraphics[width=15cm]{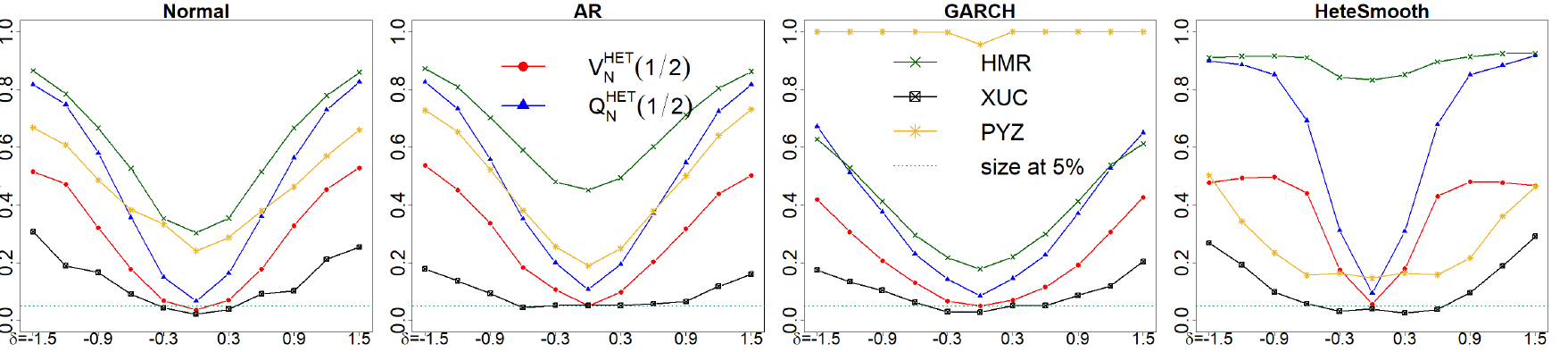}
	\includegraphics[width=15cm]{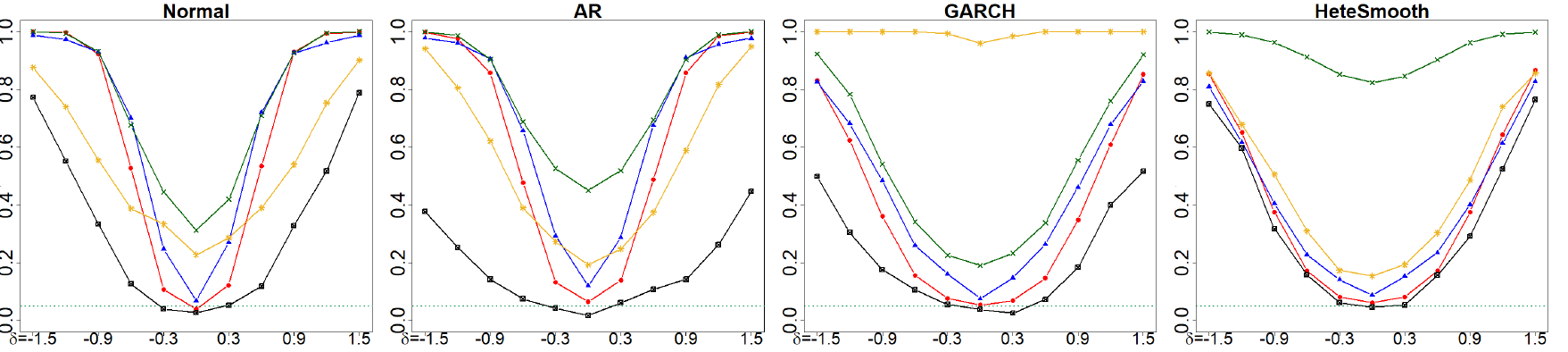}
\end{figure}

\begin{figure}[h]
	\centering
	\caption{Rejection rates of $V^{HET}_N(1/2)$, $Q^{HET}_N(1/2)$, $\rm{XUC}$, $\rm{HMR}$, and $\rm{PYZ}$ at the 95\% significance level with heteroscedastic errors for an early change point $k^* = 0.2$ (first row) and a middle change point $k^* = 0.5$ (second row), with a sample size of $N = 250$.}
	\label{figure-peer250}
	\includegraphics[width=15cm]{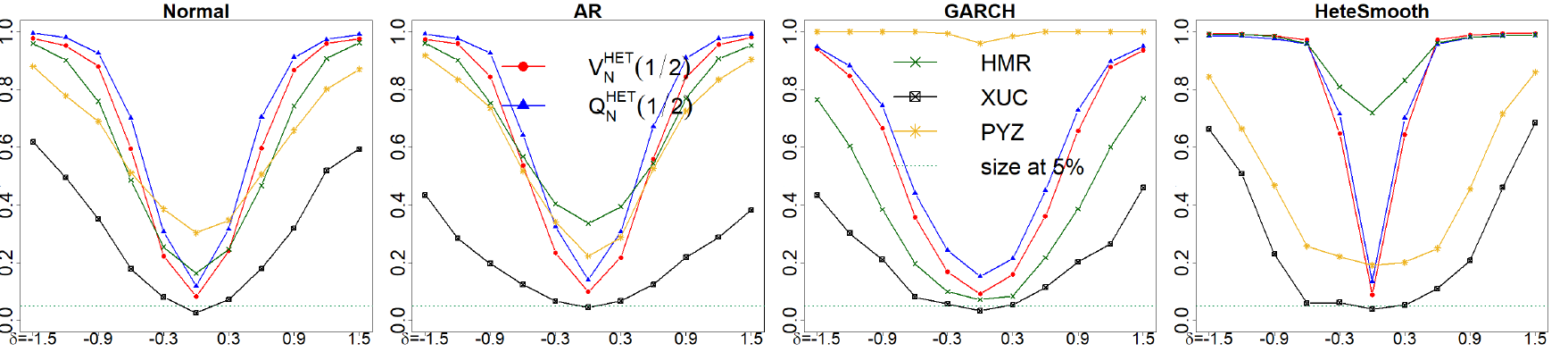}
	\includegraphics[width=15cm]{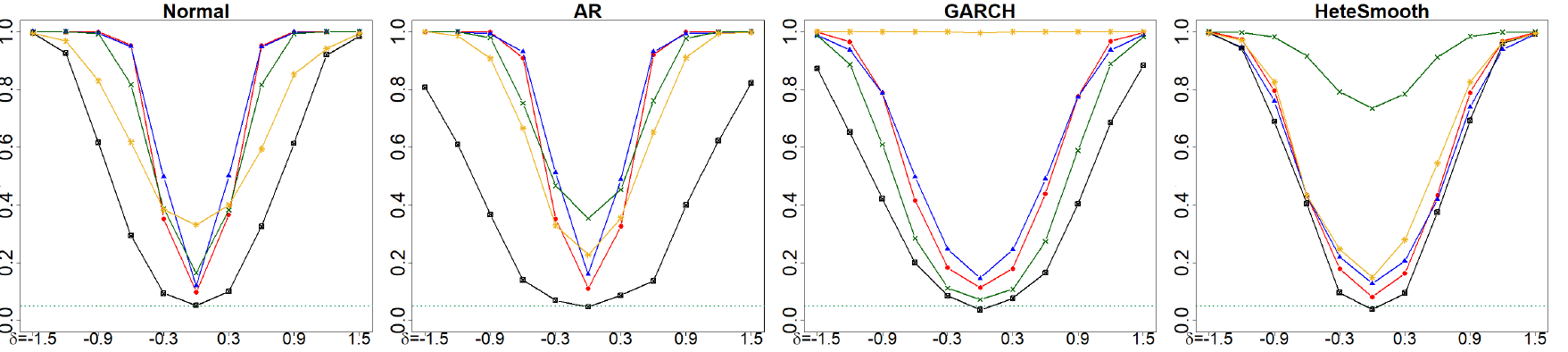}
\end{figure}

\section{Empirical data examples}\label{sec-app}

\subsection{Testing instability in macroeconomic prediction}\label{sec-app1}\hfill\\
We first illustrate a data application of the proposed methods to test the instability of predictive regression models. A typical univariate predictive regression model takes the form
\begin{equation}\label{eq-predict}
y_{i,j} = \beta_1 + \beta_2 x_{2,i} + \epsilon_i, \mbox{ } 1 \leq i\leq N, \mbox{ for variable } j,
\end{equation}
where $x_i$ denotes the predictor variable, typically representing observation from lagged values. These simple models are widely used in macroeconomic studies, which we also analyze here.

We follow McCracken and Ng (2016) and aim to forecast monthly U.S. GDP growth, industrial production, nonfarm employment, and total CPI inflation, indexed by $j = \{1, 2, 3, 4\}$. The covariate that we use to forecast each of these series is the real activity/employment predictive factor derived from a panel of 134 U.S. macroeconomic indicators in McCracken and Ng (2016). We consider a univariate predictive model because, as stated in McCracken and Ng (2016), two factor models only provide slight improvements marginally in terms of forecasting error. The sample ranges from January 1993 to December 2022 with 357 observations in total\footnote{The monthly macroeconomic data is collected from the webpage ``https://research.stlouisfed.org/econ\\/mccracken/fred-databases/".}. {Preliminary data analysis showed that the covariates appeared to exhibit abrupt heteroscedasticity at the onset of several periods of market turbulence, such as the Great Recession of 2008-2009 and the beginning of the COVID-19 pandemic. The standard deviation of the model residuals also shows changes when using a rolling window estimation, see Figure \ref{figure-employ}.}

The proposed tests were applied to each model for detecting change points. A change point is estimated based on each test statistic using the argument at which the corresponding normalized CUSUM processes achieved their maxima. We apply standard binary segmentation based on each change point test statistic with a threshold taken to be the 95\% null significance level in an attempt to detect additional change points. Table \ref{table-macro} in the appendix shows the change points detected by each approach and the corresponding estimated model coefficients. Consistent with the simulation results, the statistics $Q^{HET}_N(1/2)$ detects more changes, while the test $V^{HET}_N(1/2)$ is relatively conservative.

Table \ref{table-macro} shows the coefficient estimations in subsamples split by the first four estimated change points. In general, we found changes occurring in the GDP growth, industrial production and nonfarm employment models. Indicated by the $V^{HET}_N(1/2)$ and $Q^{HET}_N(1/2)$ tests, these three models experience changes around the period of recovery from the 2008 great recession. The tests also suggest changes around the COVID-19 pandemic in GDP growth and nonfarm employment models, with $Q^{HET}_N(1/2)$ detecting additional changes during the 1997 Asian financial crisis in the GDP growth model and during the early 2000s recession in all three models.

Figure \ref{figure-employ} illustrates an example of segmentation produced by $Q_N^{HET}(1/2)$ method for predictive regression model for nonfarm employment in terms real activity/employment predictive factor. The black line shows the model residuals from the model \eqref{eq-predict} applied to the entire sample, and the blue line shows rolling window estimates of the standard deviation of the model residuals, with window size 24 months. The shaded bars indicate the U.S. business cycle contractions according to NBER\footnote{NBER U.S. business cycle dating: ``https://www.nber.org/research/business-cycle-dating".}.

\begin{table}[h]
	\caption{ Change point detection results of the tests $V^{HET}_N(1/2)$ and $Q^{HET}_N(1/2)$ for macroeconomic predictive regression models. The dependent variables are the monthly U.S. GDP growth, Industrial production, Employment, CPI inflation from January 1993 to December 2020. The independent variable is the first predictive factor derived from McCracken and Ng (2016). The subsamples split by detected change points, and only one subsample indicates no change point detected. The values in the parentheses are the estimators of coefficient of the first predictive factor, with $^{***}$, $^{**}$ and $^{*}$ indicating significance at 1\%, 5\% and 10\% significance levels, respectively. }\label{table-macro}
	\begin{adjustbox}{max width=\linewidth}
		\begin{tabular}{cccccc }
			\hline\hline
			& \multicolumn{5}{c}{ $\boldsymbol{V^{HET}_N(1/2)}$ }            \\\hline
			& Subsample1                                                                & Subsample2                                                                & Subsample3                                                                & Subsample4                                                                & Subsample5                            \\\hline
			GDP Growth            & \begin{tabular}[c]{@{}c@{}}93Jan--09Jul\\ ($1.65^{***}$)\end{tabular}  & \begin{tabular}[c]{@{}c@{}}09Aug--20Feb\\ ($-2.65^{***}$)\end{tabular} & \begin{tabular}[c]{@{}c@{}}20Mar--22Dec\\ ($-0.83^{**}$)\end{tabular}  &      &    \\
			Industrial Production & \begin{tabular}[c]{@{}c@{}}93Jan--08Dec\\ ($-1.62^{***}$)\end{tabular} & \begin{tabular}[c]{@{}c@{}}09Jan--22Dec\\ ($-1.54^{***}$)\end{tabular} &    &    &                \\
			Employment            & \begin{tabular}[c]{@{}c@{}}93Jan--00Jun\\ ($-0.37^{***}$)\end{tabular} & \begin{tabular}[c]{@{}c@{}}00Jul--20Jul\\ ($-0.17^{***}$)\end{tabular} & \begin{tabular}[c]{@{}c@{}}20Aug--22Dec\\ ($-0.52^{***}$)\end{tabular} &      &     \\
			CPI inflation         & \begin{tabular}[c]{@{}c@{}}93Jan--22Dec\\ ($-0.15^{***}$)\end{tabular} &     &       &       &     \\\hline
            & \multicolumn{5}{c}{$\boldsymbol{Q^{HET}_N(1/2)}$} \\
            GDP Growth             & \begin{tabular}[c]{@{}c@{}}93Jan--97Jan\\ ($1.26^{**}$)\end{tabular}   & \begin{tabular}[c]{@{}c@{}}97Feb--03Oct\\ ($0.88^{***}$)\end{tabular}  & \begin{tabular}[c]{@{}c@{}}03Nov--09Jan\\ ($2.12^{***}$)\end{tabular} & \begin{tabular}[c]{@{}c@{}}09Feb--21Apr\\ ($-0.98^{***}$)\end{tabular} & \begin{tabular}[c]{@{}c@{}}21May--22Dec\\ $0.96$\end{tabular} \\
			Industrial Production  & \begin{tabular}[c]{@{}c@{}}93Jan--03Mar\\ ($-1.81^{***}$)\end{tabular} & \begin{tabular}[c]{@{}c@{}}03Apr--08Aug\\ ($-1.44^{***}$)\end{tabular}   & \begin{tabular}[c]{@{}c@{}}08Sep--14Jul\\ ($-1.74^{***}$)\end{tabular}  & \begin{tabular}[c]{@{}c@{}}14Aug--22Dec\\ ($-1.60^{***}$)\end{tabular}  &       \\
			Employment        & \begin{tabular}[c]{@{}c@{}}93Jan--01Apr\\ $(-0.39^{***})$\end{tabular} & \begin{tabular}[c]{@{}c@{}}01May--11Feb\\ $(-0.45^{***})$\end{tabular} & \begin{tabular}[c]{@{}c@{}}11Mar--20Dec\\ $(-0.10^{**})$\end{tabular} & \begin{tabular}[c]{@{}c@{}}21Jan--22Dec\\ $(-0.22^{*})$\end{tabular}  &                    \\
			CPI inflation       & \begin{tabular}[c]{@{}c@{}}93Jan--22Dec\\ ($-0.15^{***}$)\end{tabular} &      &    &        &        \\\hline\hline
		\end{tabular}
	\end{adjustbox}
\end{table}

\begin{figure}[h]
	\centering
	\caption{ Segmentations produced by $Q_N^{HET}(1/2)$ method for the predictive regression model for nonfarm employment in terms real activity/employment predictive factor. The black line shows the model residuals from the model \eqref{eq-predict} applied to the entire sample, and the blue line shows rolling window estimates of the standard deviation of the model residuals, with window size 24 months. The shared bars indicate the U.S. business cycle contractions. }\label{figure-employ}
	\includegraphics[width=13cm]{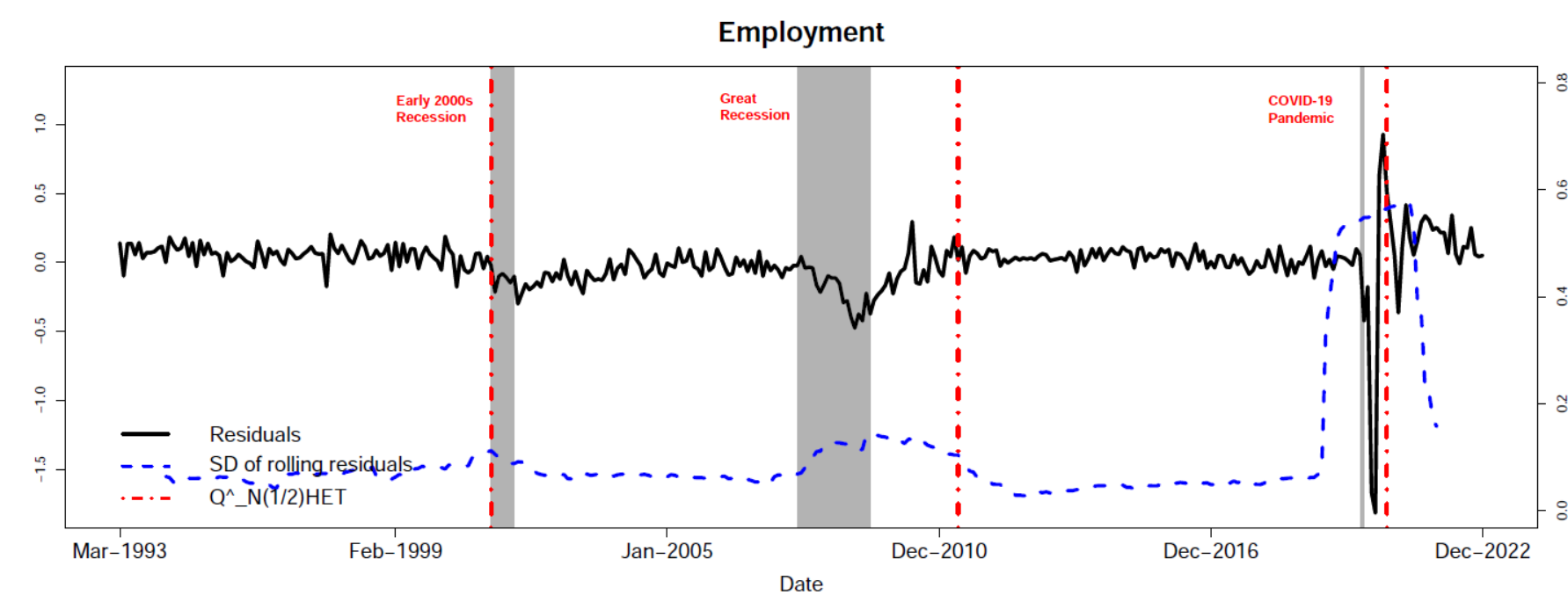}
\end{figure}

\subsection{Changes in investor sentiment effect on the U.S. stock market}\label{sec-app2}\hfill\\
In this section, we demonstrate a second application to detect changes in explaining the sentiment anomaly on cross--sectional U.S. stock returns. This provides an example of the proposed tests for an explanatory regression model with multiple covariates. The aims of modeling market sentiment anomalies are to try and model two phenomena; how the demand for speculative investments drives stock prices away from their fundamental values, and relative arbitrage, i.e.\ the existence of a collection of stocks that are too risky and costly for arbitrage. The topic-influential work of Baker and Wurgler (2006) reviews the anecdotal history of investment sentiment in the U.S.\  between 1961 and 2002, and constructs sentiment factors to predict cross--sectional stock returns.


We procured a dataset covering the period January 1960--June 2022\footnote{The data is collected from the webpage of Jeffrey Wurgler ``http://people.stern.nyu.edu/jwurgler/".}, containing two sentiment factors (SENTI) constructed from six underlying sentiment proxies, including the closed--end fund discount, NYSE share turnover, the number and average first--day returns on IPOs, the equity share in new issues, and the dividend premium. Our study uses the second sentiment factor because it accounts for business cycle variations. The sample consists of 684 time series observations, and extends beyond the data considered in Baker and Wurgler (2006); it includes some recent major economic events of note, such as the US housing bubble, the great recession, and the Covid-19 pandemic.

Following Baker and Wurgler (2006), we consider the premium of the size factor (small--minus--big, SMB) factor as a dependent variable to verify the distinct sentiment effect among small firms. The linear regression model specifies four dependent variables, including the sentiment index, market excess return (RMRF), premium of the book--to--market (high--minus-low, HML), and premium on winners minus losers (momentum, MOM) factors\footnote{The Fama--French--Carhart factors are obtained from the data library ``https://mba.tuck.dartmouth.edu\\/pages/faculty/ken.french/data\_library.html".}. The model can be stated explicitly as: 	
\begin{equation}\label{eq-sentiment}
\mbox{SMB}_i = \beta_1 + \beta_2 \mbox{SENTI}_i + \beta_3 \mbox{RMRF}_i + \beta_4 \mbox{HML}_i + \beta_5 \mbox{MOM}_i + \epsilon_i,\quad 1\leq i \leq 684.
\end{equation}
{ While all other covariates appear to be reasonably stationary, the covariate SENTI exhibited apparent changes in its variance. The standard deviation of the model residuals again exhibits heteroscedasticity when using a rolling window estimation. }

We test for change points in the regression parameters in \eqref{eq-sentiment} using the statistics $V^{HET}_N(1/2)$ and $Q^{HET}_N(1/2)$. The changes around April 2002, April 2009, and June 2020 are detected for both tests in Table \ref{table-beta}, indicating changes occurring during the burst of the early 2000s recession, the recovery from the great recession, and the outbreak of the Covid-19 pandemic. The $Q^{HET}_N(1/2)$ test suggests the presence of one more change in July 1973, which can be a consequence of the 1973–-1974 stock market crash.

Table \ref{table-beta} also shows the results of the estimations when the sample is segmented with the detected changes. Focusing on $V^{HET}_N(1/2)$ and $Q^{HET}_N(1/2)$, we estimate $\beta_2=-0.22$ with p--value 0.07 based on the sampling period between 1965 and 2002, while $\beta_2$ is estimated as $-0.23$ and $-0.18$ in subsamples respectively by split from July 1973. Both coefficients appear to be insignificant, but their p--values close to 0.10. Our findings are roughly consistent with Baker and Wurgler (2006), who estimated $\beta_2=-0.30$ with p--value $0.15$ using the period 1961 to 2002. The negative sign of the coefficient indicates that there is a negative relationship with the sentiment premium, i.e., the small firms turn to gain less returns with intense market sentiment. This effect becomes more manifest during the formation and collapse of the U.S. housing bubble between 2002 and 2009, given the coefficient $\beta_2$ enlarges to $-1.29$. The sentiment effect then becomes insignificant in subsamples after 2009, but the uncovered negative effect is consistent throughout each subsamples.

\begin{table}[H]
	\caption{ The estimated parameters after the segmentation of the model \eqref{eq-sentiment} using the estimator derived from $V^{HET}_N(1/2)$ and $Q^{HET}_N(1/2)$, with $^{***}$, $^{**}$ and $^{*}$ indicating significance at 1\%, 5\% and 10\% significance levels, respectively. }\label{table-beta}
	\begin{adjustbox}{max width=\linewidth}
	\begin{tabular}{cccccc}
		\hline\hline
		& \multicolumn{5}{c}{\textbf{$V^{HET}_N(1/2)$}}        \\\hline
		& $\beta_1$ & $\beta_2$    & $\beta_3$    & $\beta_4$    & $\beta_5$   \\\hline
		65Jul--02Apr & 0.30      & $-0.22^{*}$  & $0.14^{***}$ & $-0.23^{**}$ & $-0.01$      \\
		02May--09May & 0.14      & $-1.29^{**}$ & $0.18^{***}$ & 0.08         & 0.05      \\
		09Jun--20May & $-0.40$     & $-1.11$        & $0.21^{***}$ & 0.03         & $-0.00$      \\
		20Jun--22Jun & 1.04      & $-1.07$        & 0.06         & $-0.06$        & $-0.12$     \\ \hline 
        &   \multicolumn{5}{c}{\textbf{$Q^{HET}_N(1/2)$}}                      \\\hline 
		  65Jul--73Jul & 0.37      & $-0.23$        & $0.39^{***}$ & $-0.29^{**}$  & $-0.10$    \\
		  73Aug--02May & 0.33      & $-0.18$        & $0.08^{**}$  & $-0.25^{***}$ & 0.00      \\
		  02Jun--09Apr & 0.26      & $-1.25^{**}$ & $0.17^{**}$  & 0.13          & 0.05      \\
		  09May--20Aug & -0.38     & $-0.86$        & $0.20^{***}$ & 0.06          & 0.01      \\
		  20Sep--22Jun & 0.94      & $-1.02$        & 0.12         & $-0.07$         & $-0.05$    \\\hline\hline
	\end{tabular}
	\end{adjustbox}
\end{table}

\section{Conclusion}\label{concl}
We propose quadratic forms and maxima of weighted CUSUM residual processes to test for multiple changes in linear model parameters under potential heteroscedasticity in both covariates and errors. The error variance is allowed to change either abruptly or smoothly. The asymptotic distributions of the proposed test statistics are established under general conditions that accommodate both homoscedastic and heteroscedastic cases. We examine the finite sample performance of the standardized statistics in detail. Monte Carlo simulations demonstrate that the tests exhibit good size and power in finite samples, and that the adjustments for heteroscedasticity in model errors perform well in practice. We applied our method to find changes in popular macroeconomic and return prediction models, and to detect changes in the sentiment asset pricing models in the U.S.\ stock market.


\bigskip

\pagebreak

\begin{center}
	\textbf{\large Online Supplement: Detecting Multiple Change Points in Linear Models with Heteroscedastic errors}
\end{center}\label{sec-appen}

\setcounter{section}{0}
\appendix
\section{Proofs of Theorems \ref{lin-var-1}--\ref{thebe0a}}\label{proof-2}
	Throughout this section we assume that $M\geq 1$ since $M=0$, i.e.\ the second order properties of the observations stay stationary is already covered in Section \ref{intro}.
	
	\begin{lemma}\label{stromu} If Assumptions \ref{as-m}--\ref{weps-mu} are satisfied, then for each $N$ we can define $M+1$  independent Gaussian processes $\{\bGam_{N,j}(x), 0\leq x \leq m_j-m_{j-1}\}$
		such that $E\bGam_{N,j}(x)=\bf0$, $E\bGam_{N,j}(x)\bGam_{N,j}^\T(y)=\min(x,y)\bD_j$, $1\leq j \leq M+1,$
		$$
		\sup_{1\leq x\leq m_1}\frac{1}{x^\zeta}\left\|\sum_{i=1}^{\lf x\rf}\bx_i\eps_i-\bGam_{N,1}(x)\right\|=O_P(1)
		$$
		
		$$
		\sup_{1\leq x \leq m_{j}-m_{j-1}}\left\|\sum_{i=m_{j-1}+1}^{m_{j-1}+\lf x\rf}\bx_i\eps_i-\bGam_{N,j}(x)\right\|=O_P(N^{\zeta}), \quad 2\leq j\leq M,
		$$
		
		and
		$$
		\sup_{m_M\leq x< N }\frac{1}{(N-x)^\zeta}\left\|\sum_{i=\lf x\rf +1}^N\bx_i\eps_i-\bGam_{N,M+1}(N-x)\right\|=O_P(1)
		$$
		with some $\zeta<1/2$.
	\end{lemma}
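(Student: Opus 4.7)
The plan is to apply a Gaussian strong approximation (of Komlós--Major--Tusnády type, adapted to Bernoulli shifts) separately on each of the $M+1$ stationary segments, and then stitch the resulting Brownian motions together on an enlarged probability space to secure the required independence across segments.

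First I would verify that on each segment $\ell\in\{1,\ldots,M+1\}$ the sequence $\{{\bf x}_i\eps_i : m_{\ell-1}<i\leq m_\ell\}$ is a stationary, zero-mean Bernoulli shift satisfying exactly the hypotheses needed for a strong approximation. Stationarity and the Bernoulli-shift structure come directly from Assumption \ref{as-lin-ber-v}, the $\nu$-th moment condition with $\nu>4$ gives $E\|{\bf x}_i\eps_i\|^{\nu/2}<\infty$ by Cauchy--Schwarz, the decay condition $(E\|\bz_i-\bz^*_{i,j}\|^\nu)^{1/\nu}\leq c j^{-\alpha}$ with $\alpha>2$ transfers to an analogous decay for ${\bf x}_i\eps_i$ (again by Cauchy--Schwarz), and the zero-mean property is exactly Assumption \ref{weps-mu}. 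These conditions also imply the absolute summability of the autocovariances, which in turn gives the existence of
\[
\bD_\ell=\sum_{j=-\infty}^{\infty}E\bigl[{\bf x}_{m_\ell}\eps_{m_\ell}({\bf x}_{m_\ell+j}\eps_{m_\ell+j})^\T\bigr],
\]
and hence that \eqref{longd} defines $\bD_\ell$ unambiguously.

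Next I would invoke a strong invariance principle for vector-valued Bernoulli shifts (for instance Berkes--Liu--Wu, 2014, or Liu--Lin, 2009) on each segment. Since $\nu>4$ and $\alpha>2$, the rate function of the approximation can be taken of the form $n^{1/\nu'}$ for any $\nu'<\nu$, which in particular yields some $\zeta<1/2$. For the first segment this gives a Brownian motion $\bGam_{N,1}$ with covariance $\bD_1$ satisfying the claimed $x^\zeta$-weighted bound; for the last segment, the reversed increments yield the $(N-x)^\zeta$-weighted bound; for the interior segments the unscaled bound $O_P(N^\zeta)$ follows by applying the strong approximation and taking a supremum over the segment length, which is of order $N$.

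The main obstacle, and the final step, is to arrange the $M+1$ approximating Brownian motions to be mutually independent, since under the natural construction all segments share the same innovation sequence $\{\eta_i\}$ through Assumption \ref{as-lin-ber-v}. I would enlarge the probability space to carry $M+1$ independent auxiliary uniform (or Gaussian) sequences, and on each segment perform the KMT-type construction using only the data on that segment together with its own auxiliary randomization. The marginal rates on each segment are preserved because each invariance principle only invokes the distribution of that segment's partial sums, while the independence of $\bGam_{N,1},\ldots,\bGam_{N,M+1}$ holds by construction, since each $\bGam_{N,\ell}$ is a measurable functional of the $\ell$-th auxiliary sequence and hence the outputs are independent across $\ell$. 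Combining the three rate estimates on the enlarged space yields the lemma.
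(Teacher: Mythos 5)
Your overall route is the same as the paper's: the proof there simply observes that Assumption \ref{as-lin-ber-v} together with the Cauchy--Schwarz inequality gives $\left(E\left\|\bx_i\eps_i-\bx_{i,j}^*\eps_{i,j}^*\right\|^{\nu/2}\right)^{2/\nu}\le c_1 j^{-\alpha}$, so that $\{\bx_i\eps_i\}$ is itself a decomposable Bernoulli shift with moment exponent $\nu/2>2$ on each stationary segment, and then cites the strong approximation of Aue et al.\ (2014) segment by segment (which also yields the absolute convergence of the series defining $\bD_\ell$). Your verification of the transferred moment and decay conditions and your segmentwise invocation of a Berkes--Liu--Wu/Liu--Lin type invariance principle is exactly this argument with a different reference, and the three weighted rate statements follow as you describe.

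The one place where your write-up has a genuine gap is the independence step. You correctly identify that all segments share the innovation sequence $\{\eta_i\}$, but your proposed fix is internally inconsistent: you first say each segment's construction uses ``the data on that segment together with its own auxiliary randomization,'' and then conclude independence because each $\bGam_{N,\ell}$ is ``a measurable functional of the $\ell$-th auxiliary sequence.'' A process that is a function of the auxiliary sequence alone is independent of the data and therefore cannot be pathwise close to that segment's partial sums; a process that also depends on the segment's observations inherits the cross-segment dependence, so independence does not hold ``by construction.'' The correct justification comes from the coupling in Assumption \ref{as-lin-ber-v}: replacing $\bz_i$ by $\bz^*_{i,j}$ with $j=j_N\to\infty$ slowly makes the segments (up to negligible boundary blocks) exactly independent, since $\bz^*_{i,j}$ depends only on $\eta_i,\dots,\eta_{i-j+1}$ and independent copies, the approximation error being controlled by the $j^{-\alpha}$ decay with $\alpha>2$. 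The Gaussian processes built from these truncated segments are then genuinely independent, and the truncation error is absorbed into the $O_P(x^\zeta)$ and $O_P(N^{\zeta})$ remainders. This is what the citation to Aue et al.\ (2014) handles implicitly, and it is the ingredient your sketch still needs in order to make the independence claim rigorous.
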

	\begin{proof} It follows from Assumption \ref{as-lin-ber-v} that $\{\bx_i\eps_i, m_{\ell-1}<i\leq m_{\ell}\}$ is a Bernoulli decomposable sequence for any $1\leq \ell\leq M+1$ and
		\beq\label{beva1}
		\left(E\left\|\bx_{i}\eps_i-\bx_{i,j}^*\eps_{i,j}^*   \right\|^{\nu/2}\right)^{2/\nu}\leq c_1j^{-\alpha},
		\eeq
		where $\nu$ and $\alpha$ are given in Assumption \ref{as-lin-ber-v}. Now the approximations in Lemma \ref{stromu} follows from Aue et al.\ (2014). They also prove that the infinite series defining $\bD_\ell, 1\leq \ell\leq M+1$ is absolutely convergent.
	\end{proof}

    Let 
    \begin{align*}
        \bP_N(t)=\frac{1}{N^{1/2}}\left(\sum_{i=1}^{\lf Nt\rf}\bx_i\eps_i-\frac{\lf Nt  \rf}{N}\sum_{i=1}^N\bx_i\eps_i\right),
    \end{align*}
	and 
    \begin{align*}
        \bB_N(t)=N^{-1/2}\left(\hat{\bGam}_N(t) - t \hat{\bGam}_N(1) \right).
    \end{align*}
    where
		\begin{align}\label{Gamhat}
		\hat{\bGam}_N(x)=\sum_{j=1}^{\ell-1}\bGam_{N,j}(m_j-m_{j-1})+\bGam_{N,\ell}(x-m_{\ell}), \quad m_{\ell}<x\leq m_{\ell+1}, 1\leq \ell\leq M.
		\end{align}
    We note that for any N, $\{\bB_N(t), 0\leq t \leq 1\}$ is a Gaussian process with $E\bB_N(t)=0$ and $E\bB_N(t)\bB^\top_N(s)=\bG(\min(t,s))-t\bG(s)-s\bG(t)+ts\bG(1)$.
	\begin{lemma}\label{stro2v}  If Assumptions \ref{as-m}--\ref{weps-mu} are satisfied, then
		\beq\label{stro21v}
		N^{-1/2+\zeta}\sup_{0< t < 1}\frac{1}{[t(1-t)]^\zeta}\left\| \bP_N(t)- \bB_N(t) \right\|=O_P(1)
		\eeq
		with  some $\zeta<1/2$.\\
		Also,
		\beq\label{stro22v}
		\sup_{1/(N+1)\leq  t \leq 1-1/(N+1)}\frac{1}{(t(1-t))^{1/2}}\left\| \bP_N(t)-\bB_N(t) \right\|=O_P(1).
		\eeq
	\end{lemma}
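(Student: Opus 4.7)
The plan is to reduce both approximations to Lemma~\ref{stromu} by inserting the piecewise Gaussian surrogate $\hat{\bGam}_N$ into the CUSUM and tracking the resulting error through the centering. Writing $S_N(t):=\sum_{i=1}^{\lfloor Nt\rfloor}\bx_i\eps_i$ and $\bR_N(t):=S_N(t)-\hat{\bGam}_N(\lfloor Nt\rfloor)$, the decomposition
\begin{align*}
\bP_N(t)-\bB_N(t)=N^{-1/2}\bigl[\bR_N(t)-t\,\bR_N(1)\bigr]+O_P(N^{-1})
\end{align*}
holds uniformly, where the $O_P(N^{-1})$ remainder absorbs the discrepancy between $t$ and $\lfloor(N+1)t\rfloor/N$ against $S_N(1)=O_P(N^{1/2})$. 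Combining the three bullets of Lemma~\ref{stromu} and telescoping across the $M+1$ stationary segments then delivers the piecewise estimates $\|\bR_N(t)\|=O_P((Nt)^\zeta)$ on $(0,\tau_1]$, $\|\bR_N(t)\|=O_P(N^\zeta)$ on $[\tau_1,\tau_M]$, and $\|\bR_N(t)-\bR_N(1)\|=O_P((N(1-t))^\zeta)$ on $[\tau_M,1)$, together with $\|\bR_N(1)\|=O_P(N^\zeta)$.

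For \eqref{stro21v} I would split the supremum into these three intervals. On the middle slab $[\tau_1,\tau_M]$ the weight $[t(1-t)]^{-\zeta}$ is bounded by a constant depending only on the fixed $\tau_j$'s, and $\|\bR_N(t)\|\vee t\|\bR_N(1)\|=O_P(N^\zeta)$, which, once multiplied by $N^{-1/2}$, meets the target rate. On the left slab $(0,\tau_1]$ the weight is essentially $t^{-\zeta}$: the estimate $\|\bR_N(t)\|/t^\zeta\leq CN^\zeta$ is immediate, while $t\|\bR_N(1)\|/t^\zeta=t^{1-\zeta}\|\bR_N(1)\|\leq CN^\zeta$ since $1-\zeta>0$ and $t\leq\tau_1$. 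A symmetric treatment of the right slab $[\tau_M,1)$ via the identity $\bR_N(t)-t\bR_N(1)=\bigl(\bR_N(t)-\bR_N(1)\bigr)+(1-t)\bR_N(1)$ reduces everything to the third bullet of Lemma~\ref{stromu}.

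For \eqref{stro22v} the only changes are the heavier weight $[t(1-t)]^{-1/2}$ and the truncation $t\in[1/(N+1),1-1/(N+1)]$. Near the left endpoint, $\|\bR_N(t)\|/t^{1/2}\leq C(Nt)^\zeta/t^{1/2}=CN^{1/2}(Nt)^{\zeta-1/2}$, so after the $N^{-1/2}$ factor we retain only $C(Nt)^{\zeta-1/2}$, which is bounded because $Nt\geq N/(N+1)\geq 1/2$ and $\zeta<1/2$; the centering piece contributes $N^{-1/2}t^{1/2}\|\bR_N(1)\|=O_P(N^{\zeta-1/2})=o_P(1)$. The right endpoint is treated symmetrically using the decomposition above and the third bullet of Lemma~\ref{stromu}. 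The main obstacle I anticipate is the bookkeeping of the approximation errors accumulated across the $M$ interior boundaries, where Lemma~\ref{stromu} supplies only the uniform rate $O_P(N^\zeta)$; the argument relies critically on the fact that both weights remain bounded on $[\tau_1,\tau_M]$, so the refined endpoint rates in Lemma~\ref{stromu} are precisely what allow the $[t(1-t)]^{-\zeta}$ and $[t(1-t)]^{-1/2}$ weights to be absorbed at the edges.
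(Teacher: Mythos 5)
Your proposal is correct and follows essentially the same route as the paper: both reduce the claim to Lemma \ref{stromu}, assemble the segment-wise strong approximations into the two uniform weighted bounds (weight $x^\zeta$ near $0$ and $(N-x)^\zeta$ near $N$, with the crude $O_P(N^\zeta)$ rate on the interior where the weights are bounded), and then read off the weighted CUSUM approximations. The paper simply states that \eqref{stro21v} and \eqref{stro22v} "follow immediately" from these bounds, whereas you spell out the endpoint arithmetic and the floor-discrepancy remainder explicitly; the substance is the same.
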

	\begin{proof} We write
		\begin{align*}
		\bR(k)=\sum_{i=1}^k\bx_i\eps_i=\sum_{j=1}^\ell\sum_{i=m_{j-1}}^{m_j}\bx_i\eps_i+\sum_{i=m_{\ell}+1}^k\bx_i\eps_i,
		\end{align*}
		if $m_\ell<k\leq m_{\ell+1}, 1\leq \ell\leq M+1$. By Lemma \ref{stromu} we have
		\begin{align*}
		\sup_{1\leq x \leq m_1}   \frac{1}{x^\zeta}\left\|  \bR(\lf x \rf)-\bGam_{N,1}(x) \right\|=O_P(1),
		\end{align*}
		
		\begin{align*}
		\sup_{m_{j-1}<x\leq m_j}\left\| \bR(\lf x \rf)-\left(\sum_{i=1}^{j-1}\bGam_{N,i}(m_i-m_{i-1}) +\bGam_{N,j}(x-m_{j-1})\right)    \right\|  =O_P(N^{\zeta}), \quad 2\leq j \leq M,
		\end{align*}
		
		and
		\begin{align*}
		\sup_{m_{M}<x<N}  \frac{1}{(N-x)^\zeta}\left\|\bR(N)-\bR(\lf x \rf)-\bGam_{N,M+1}(N-x)\right\|=O_P(1).
		\end{align*}
		
		Thus, we get
		\begin{align}\label{muap1}
		\sup_{1\leq x \leq N/2}     \frac{1}{x^\zeta}\left\| \bR(x)-\hat{\bGam}_N(x)   \right\|=O_P(1),
		\end{align}
		
		\begin{align}\label{muap2}
		\sup_{N/2\leq x <N}    \frac{1}{(N-x)^\zeta}\left\| \left(\bR(N)-\bR(\lf x\rf )\right)-\left(\hat{\bGam}_N(N)-\hat{\bGam}_N(x) \right)  \right\|=O_P(1).
		\end{align}
		By the definition of $\bB_N(t)$, \eqref{stro21v} follows immediately from   \eqref{muap1} and \eqref{muap2}.  \eqref{stro22v} follows similarly.
	\end{proof}

	\begin{lemma}\label{stro3v} We assume that  Assumptions \ref{as-m}--\ref{weps-mu}, and \ref{as-wc-1} hold.\\
		(i) If $I(w,c)<\infty$ with some $c>0$, then
		\beq\label{stro31mu}
		\sup_{0<t<1}\frac{1}{w(t)}\left\|\bP_N(t)\right\|\;\stackrel{\cD}{\to}
		\sup_{0<t<1}\frac{1}{w(t)}\left\| \bB(t)   \right\|,
		\eeq
		where $\bB(t)$ satisfies $\{\bB(t), 0\leq t \leq 1\}\overset{\cD}{=} \{\bB_N(t), 0\leq t \leq 1\}$.\\
		(ii) If in addition Assumption \ref{asvarnosd} also holds, then
		\begin{align}\label{stro32mu}
		\lim_{N\to\infty}P\Biggl\{ a(\log N)&\max_{1\leq k<N}\left[
		\left(\sum_{i=1}^k\bx_i{\eps}_i-\frac{k}{N}\sum_{i=1}^N\bx_i{\eps}_i\right)^\T(N \bG (k/N))^{-1} \right.\\
		&\vspace{3cm}\left.\times
		\left(\sum_{i=1}^k\bx_i{\eps}_i-\frac{k}{N}\sum_{i=1}^N\bx_i{\eps}_i\right)\right]^{1/2}
		\leq x+ b_d(\log N) \Biggl\}=\exp(-2e^{-x})\notag
		\end{align}
		for all $x$, where $a(x)$ and $b_d(x)$ are defined in Theorem \ref{coverd}.
	\end{lemma}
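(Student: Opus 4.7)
The plan is to transfer both statements from the partial-sum process $\bP_N$ to its Gaussian counterpart $\bB_N$ using the strong approximations in Lemmas \ref{stromu} and \ref{stro2v}, and then to analyse the resulting Gaussian quantities by standard weighted sup-norm and Darling--Erd\H{o}s arguments.

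For part (i), from \eqref{stro21v} I would deduce $\sup_{0<t<1}\|\bP_N(t)-\bB_N(t)\|/[t(1-t)]^\zeta=o_P(1)$ for some $\zeta<1/2$. The integrability $I(w,c)<\infty$ together with the monotonicity of $w$ near the endpoints (Assumption \ref{as-wc-1}) forces the ratio $[t(1-t)]^\zeta/w(t)$ to remain bounded on $(0,1)$ once $\zeta$ is chosen close enough to $1/2$; otherwise the factor $\exp(-cw^2/[t(1-t)])$ in \eqref{defiwc} would not decay fast enough to overcome the $1/(t(1-t))$ factor. Hence $\sup_{0<t<1}\|\bP_N(t)-\bB_N(t)\|/w(t)=o_P(1)$. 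Since $\bB_N$ has an $N$-free distribution, $\bB_N\overset{\cD}{=}\bB$ for a fixed Gaussian process $\bB$, and the triangle inequality gives
$$\Bigl|\sup_{0<t<1}\|\bP_N(t)\|/w(t)-\sup_{0<t<1}\|\bB_N(t)\|/w(t)\Bigr|\le \sup_{0<t<1}\|\bP_N(t)-\bB_N(t)\|/w(t)=o_P(1).$$
The almost-sure finiteness of $\sup_{0<t<1}\|\bB(t)\|/w(t)$, which ensures a non-degenerate limit, is the It\^o--McKean criterion recorded in \eqref{eq-lw}, applied coordinate-wise after diagonalising the block covariance of $\bB$; this completes \eqref{stro31mu}.

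For part (ii), I would use the finer approximation \eqref{stro22v} to replace the target statistic by its Gaussian analogue built from $\bB_N$. It is classical (cf.\ Cs\"org\H{o} and Horv\'ath, 1997, Ch.~4) that for a standard $d$-dimensional Brownian bridge $\bW^0$, the quantity $a(\log N)\sup_{1/N\le t\le 1-1/N}\|\bW^0(t)\|/\sqrt{t(1-t)}-b_d(\log N)$ converges in distribution to the Gumbel law $\exp(-2e^{-x})$. The sup in this functional is asymptotically attained on the two thin tails of the form $[1/N,(\log N)^{\kappa}/N]$ and $[1-(\log N)^{\kappa}/N,1-1/N]$, which lie inside $(0,\tau_1]$ and $[\tau_M,1)$ respectively. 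On those extreme segments $\bB_N$ coincides with $N^{-1/2}\bGam_{N,1}(Nt)$ and $N^{-1/2}[\hat{\bGam}_N(N)-\bGam_{N,M+1}(N-Nt)]$ up to a linear drift of order $t$ or $1-t$ that is negligible after the $[t(1-t)]^{-1/2}$ scaling, while the standardization $(N\bG(Nt))^{-1}$ reduces to $(Nt\bD_1)^{-1}$ near $0$ and to $((N-Nt)\bD_{M+1})^{-1}$ near $1$. By Assumption \ref{asvarnosd} these matrices are non-singular, so the change of coordinates $\bv\mapsto\bD_\ell^{-1/2}\bv$ reduces each endpoint contribution to the classical Bessel-type Darling--Erd\H{o}s functional, yielding a Gumbel limit with parameters $(a(\log N),b_d(\log N))$ on each side. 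The two endpoint maxima are asymptotically independent because $\bGam_{N,1}$ and $\bGam_{N,M+1}$ are independent by \eqref{Gamhat}, producing the factor $2$ in the exponent of \eqref{stro32mu}.

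The main obstacle I anticipate is matching the strong-approximation error from \eqref{stro22v} to the very fine $(\log\log N)^{1/2}$ scale of the Darling--Erd\H{o}s statistic: on the thin tails of width $(\log N)^{\kappa}/N$ the weighted error must be $o_P((\log\log N)^{-1/2})$, which requires a sharpened form of \eqref{stro22v} (obtainable by tracking the exact power of $\log N$ in the KMT-type approximation underlying Lemma \ref{stromu}) together with a careful bound on the contribution of the linear drift $t N^{-1/2}\hat{\bGam}_N(N)$ on these tails before the classical Darling--Erd\H{o}s theorem can be invoked.
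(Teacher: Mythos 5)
Your part (ii) follows essentially the same route as the paper: reduce to the two boundary segments $(0,\tau_1]$ and $[\tau_M,1)$ where $\bG(t)$ is linear in $t$ with slope $\bD_1$ (resp.\ $\bD_{M+1}$), apply the Bessel-type Darling--Erd\H{o}s theorem to each end after standardizing by $\bD_\ell^{-1/2}$, and use independence of $\bGam_{N,1}$ and $\bGam_{N,M+1}$ to get the factor $2$ in the exponent. The obstacle you flag at the end is real, but the paper does not sharpen the invariance principle; it uses the standard device of showing that $\max_{1\le k\le \log N}k^{-1}\|\sum_{i\le k}\bx_i\eps_i\|^2_{\bD_1^{-1}}=O_P(\log\log\log N)$, hence the overall maximum (of order $2\log\log N$) is attained on $k\ge \log N$, where the relative approximation error $O_P(k^{\zeta-1/2})$ from Lemma \ref{stromu} is already $o_P((\log\log N)^{-1/2})$. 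You should make that reduction explicit rather than appealing to a sharpened KMT rate.

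Part (i) contains a genuine gap. Your key step is the claim that $I(w,c)<\infty$ together with Assumption \ref{as-wc-1} forces $[t(1-t)]^{\zeta}/w(t)$ to be bounded on $(0,1)$ for $\zeta$ close enough to $1/2$. This is false: take $w(t)=\bigl(A\,t(1-t)\log\log\bigl(e^e/(t(1-t))\bigr)\bigr)^{1/2}$ with $A>1/c$. Then near $t=0$ the integrand of $I(w,c)$ behaves like $t^{-1}(\log(1/t))^{-cA}$, so $I(w,c)<\infty$ and Assumption \ref{as-wc-1} holds, yet $t^{\zeta}/w(t)\sim t^{\zeta-1/2}(A\log\log(1/t))^{-1/2}\to\infty$ for \emph{every} $\zeta<1/2$. (Moreover, $\zeta$ is not a free parameter: it is fixed by the strong approximation in Lemma \ref{stromu} and cannot be pushed toward $1/2$ at will.) Consequently you cannot conclude $\sup_{0<t<1}\|\bP_N(t)-\bB_N(t)\|/w(t)=o_P(1)$ from \eqref{stro21v} alone, and the global triangle-inequality step collapses. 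The repair is the paper's $\delta$-truncation: on $[\delta,1-\delta]$ the weight is bounded below and the unweighted approximation gives uniform $o_P(1)$; on $(0,\delta]$ one bounds the weighted difference by the $[t(1-t)]^{1/2}$-weighted $O_P(1)$ bound \eqref{stro22v} times $\sup_{0<t\le\delta}t^{1/2}/w(t)$, which tends to $0$ as $\delta\to0$ because $I(w,c)<\infty$ forces $w^2(t)/t\to\infty$; one then separately verifies that $\sup_{0<t\le\delta}\|\bB(t)\|/w(t)$ is small for small $\delta$ (law of the iterated logarithm plus \eqref{eq-lw}) and that the contribution of $t\le 1/(N+1)$ is $o_P(1)$, and assembles the limit via a $\lim_{\delta\to0}\limsup_{N\to\infty}$ argument. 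Your proposal skips exactly this assembly.
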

	
	\begin{proof} It follows from Assumption \ref{as-m} and Lemma \ref{stro2v} that
		\begin{align*}
		\sup_{\tau_1\leq t \leq \tau_M}\frac{1}{w(t)}\left\| \bP_N(t)-\bB_N(t)   \right\|=o_P(1).
		\end{align*}
		As in Lemma \ref{stro2v} for all $0<\delta<\tau_1$
		\begin{align*}
		\sup_{\delta\leq t \leq \tau_1}\frac{1}{w(t)}\left\| \bP_N(t)-\bB_N(t)   \right\|=o_P(1),
		\end{align*}
		and
		\begin{align*}
		\sup_{1/(N+1)\leq t \leq \delta}\frac{1}{w(t)}\left\| \bP_N(t)-\bB_N(t)   \right\|
		&=\sup_{1/(N+1)\leq t \leq \delta}\frac{1}{[t(1-t)]^{1/2}}\left\| \bP_N(t)-\bB_N(t)   \right\|\sup_{0<t\leq \delta}\frac{t^{1/2}}{w(t)}\\
		&=O_P(1)\sup_{0<t\leq \delta}\frac{t^{1/2}}{w(t)}.
		\end{align*}
		
		Further,
		\begin{align*}
		\lim_{\delta\to 0}\limsup_{N\to \infty}P\left\{\sup_{1/(N+1)\leq t \leq \delta}\frac{1}{w(t)}\left\| \bP_N(t)-\bB_N(t)   \right\|
		>x\right\}=0
		\end{align*}
		for all $x>0$. It is easy to see that
		$$
		\sup_{0<t\leq 1/(N+1)}\frac{1}{w(t)}\|\bP_N(t)\|=o_P(1)
		$$
		and
		$$
		\sup_{1/(N+1)<t\leq \delta}\frac{1}{w(t)}\|\bB_N(t)\|\stackrel{\cD}{\to}      \sup_{0<t\leq \delta}\frac{1}{w(t)}\|\bB(t)\|
		$$
		for all $0<\delta<\tau_1$, since the coordinates of $\bB$ are linear combinations of independent Brownian bridges. By symmetry, for any $0<\delta<1-\tau_M$
		\begin{align*}
		\sup_{\tau_M\leq t \leq 1-\delta}\frac{1}{w(t)}\left\| \bP_N(t)-\bB_N(t)   \right\|=o_P(1),
		\end{align*}
		and
		$$
		\sup_{1-\delta\leq t\leq 1-1/(N+1)}\frac{1}{w(t)}\|\bar{\bGam}_N(t)\|\stackrel{\cD}{\to}      \sup_{1-\delta\leq t< 1}\frac{1}{w(t)}\|\bB(t)\|.
		$$
		Also,
		$$
		\sup_{1-1/(N+1)\leq t<1}\frac{1}{w(t)}\|\bP_N(t)\|=o_P(1)
		$$
		and
		\begin{align*}
		\lim_{\delta\to 0}\limsup_{N\to \infty}P\left\{\sup_{1-\delta \leq t \leq 1-1/(N+1)}\frac{1}{w(t)}\left\| \bP_N(t)-\bB_N(t)   \right\|
		>x\right\}=0
		\end{align*}
		for all $x>0$, completing the proof of \eqref{stro31mu}. \\
		
		First we note that  by Assumption \ref{as-m} and Lemma \ref{stro2v} we have
		$$
		\sup_{\tau_1\leq t\leq \tau_M}\frac{1}{[t(1-t)]^{1/2}}\|\bP_N(t)\|=O_P(1).
		$$
		Using Lemma \ref{stromu} we get
		\begin{align*}
		\max_{1\leq k\leq m_1}\left(\frac{N}{k(N-k)}\right)^{1/2}\left\|\sum_{i=1}^k\bx_i\eps_i-\frac{k}{N}\sum_{i=1}^N\bx_i\eps_i   \right\|=
		\max_{1\leq k\leq m_1}\left(\frac{N}{k(N-k)}\right)^{1/2}\left\|\sum_{i=1}^k\bx_i\eps_i\right\|+O_P(1)
		\end{align*}
		and
		\begin{align*}
		\max_{1\leq k\leq m_1} \left(\frac{N^2}{k(N-k)}\right)^{1/2} \left\|\bB_N(k/N)\right\|\stackrel{\cD}{=}
		\max_{1\leq k\leq m_1} \left(\frac{N^2}{k(N-k)}\right)^{1/2} \left\|\bW_{\bD_1}(k/N)\right\|+O_P(1),
		\end{align*}
		where  $\{\bW_{\bD_1}, 0\leq t \leq 1\}$ is a Gaussian process with $E\bW_{\bD_1}(t)=\bf0$ and $E\bW_{\bD_1}(t)\bW_{\bD_1}^\T(s)=\min(t,s)\bD_1$. Also,
		$\bG(k/N)=(k/N)\bD_1$, and therefore
		\begin{align*}
		\biggl\{\bW_{\bD_1}^\T(k/N)\bG^{-1}(k/N)\bW_{\bD_1}(k/N), 1\leq k \leq m_1\biggl\}\stackrel{\cD}{=}\biggl\{
		\frac{1}{(k/N)}\|\bW(k/N)\|^2, 1\leq k \leq m_1\biggl\},
		\end{align*}
		$$
		\bW(t)=(W_1(t),W_2(t), \ldots, W_d(t))^\T,
		$$
		where $\{W_1(t), 0\leq t \leq 1\}, \{W_2(t), 0\leq t \leq 1\}, \ldots, \{W_d(t), 0\leq t \leq 1\}$ are independent Wiener processes. Theorem A.3.1  of  Cs\"org\H{o} and Horv\'ath (1997) yields
		\begin{align}\label{lil1}
		\frac{1}{2\log \log N}\max_{1\leq k\leq m_1}\frac{N}{k}\|\bW(k/N)\|^2\stackrel{P}{\to} 1
		\end{align}
		and
		\begin{align}\label{lil2}
		\max_{1\leq k\leq \log N}\frac{N}{k}\|\bW(k/N)\|^2=O_P(\log \log \log N).
		\end{align}
		Thus we conclude
		\begin{align*}
		\lim_{N\to\infty}P\Biggl\{\max_{1\leq k\leq m_1}\frac{N}{k}\|\bW(k/N)\|^2=\max_{\log N\leq k\leq m_1}\frac{N}{k}\|\bW(k/N)\|^2
		\Biggl\}=1.
		\end{align*}
		Putting together \eqref{stro22v}, \eqref{lil1} and \eqref{lil2} we get
		\begin{align*}
		\frac{1}{2\log \log N}\max_{1\leq k\leq m_1}\frac{1}{k}\left(\sum_{i=1}^k\bx_i\eps_i   \right)^\T\bD_1^{-1}\left(\sum_{i=1}^k\bx_i\eps_i   \right)    \stackrel{P}{\to} 1
		\end{align*}
		and
		\begin{align*}
		\max_{1\leq k\leq \log N}\frac{1}{k}\left(\sum_{i=1}^k\bx_i\eps_i   \right)^\T\bD_1^{-1}\left(\sum_{i=1}^k\bx_i\eps_i   \right) =O_P(\log \log \log N).
		\end{align*}
		Hence
		\begin{align*}
		\lim_{N\to\infty}P\Biggl\{\max_{1\leq k\leq m_1}\frac{1}{k} \left(\sum_{i=1}^k\bx_i\eps_i   \right)^\T\bD_1^{-1}\left(\sum_{i=1}^k\bx_i\eps_i   \right) =\max_{\log N\leq k\leq m_1}\frac{1}{k}\left(\sum_{i=1}^k\bx_i\eps_i   \right)^\T\bD_1^{-1}\left(\sum_{i=1}^k\bx_i\eps_i   \right)
		\Biggl\}=1.
		\end{align*}
		Lemma \ref{stromu} yields
		\begin{align*}
		\max_{\log N\leq k\leq m_1}\frac{1}{k}\Biggl| \left(\sum_{i=1}^k\bx_i\eps_i   \right)^\T\bD_1^{-1}\left(\sum_{i=1}^k\bx_i\eps_i   \right)
		-\bGam_{N, 1}^\T(k)\bD_1^{-1}\bGam_{N, 1}(k)  \Biggl|=o_P\left(1/\left(\log \log N\right)\right)
		\end{align*}
		and therefore
		\begin{align*}
		\Biggl|\max_{1\leq k\leq m_1}\frac{1}{k}\left(\sum_{i=1}^k\bx_i\eps_i   \right)^\T\bD_1^{-1}\left(\sum_{i=1}^k\bx_i\eps_i   \right)
		- \max_{1\leq k\leq m_1}\frac{1}{k}\bGam_{N, 1}^\T(k)\bD_1^{-1}\bGam_{N, 1}(k) \Biggl|=o_P\left(1/\left(\log \log N\right)\right).
		\end{align*}
		Observing that
		\begin{align*}
		\biggl\{ \bGam_{N, 1}^\T(k)\bD_1^{-1}\bGam_{N, 1}(k), 1\leq k \leq m_1   \biggl\}\stackrel{\cD}{=}
		\biggl\{  \|\bW(k)\|^2, 1\leq k\leq m_1   \biggl\},
		\end{align*}
		Lemma A.3.1  of Cs\"org\H{o} and Horv\'ath (1997)  implies
		\begin{align*}
		\lim_{N\to\infty}P\left\{a(\log N)\max_{1\leq k\leq m_1}\frac{1}{k}\bGam_{N, 1}^\T(k)\bD_1^{-1}\bGam_{N, 1}(k)\leq x+b_d(\log N)
		\right\}=\exp(-e^{-x})
		\end{align*}
		for all $x$.\\
		
		By symmetry,
		\begin{align*}
		\Biggl|\max_{m_M<k<N}&\left(\sum_{i=1}^k\bx_i\eps_i-\frac{k}{N}\sum_{i=1}^N\bx_i\eps_i\right)^\T(N\bar{\bG}(k/N,k/N))^{-1}\left(\sum_{i=1}^k\bx_i\eps_i
		-\frac{k}{N}\sum_{i=1}^N\bx_i\eps_i\right)\\
		&-\max_{m_M<k<N}\frac{1}{N-k}\bGam_{N,M+1}^\T(N-k)\bD_{M+1}^{-1}\bGam_{N,M+1}(N-k)
		\Biggl|=o_P\left(\left(\log \log N\right)^{-1}\right).
		\end{align*}
		Applying again Lemma A.3.1 of Cs\"org\H{o} and Horv\'ath (1997) we get
		\begin{equation*}
		\begin{split}
		& \lim_{N\to\infty}P\left\{a(\log N)\max_{m_M< k<N}\frac{1}{N-k}\bGam_{N, M+1}^\T(N-k)\bD_{M+1}^{-1}\bGam_{N, M+1}(N-k)\leq x+b_d(\log N)
		\right\}\\
		& =\exp(-e^{-x})
		\end{split}
		\end{equation*}
		for all $x$. Since $\{\bGam_{N,1}(x), 1\leq x\leq m_1\}$ and $\{\bGam_{N,M+1}(N-x), m_M< x\leq N\}$ are independent, the proof of \eqref{stro32mu} is complete.
	\end{proof}

	\begin{lemma}\label{stro4v} Assuming that \ref{as-m}--\ref{weps-mu}, and \ref{as-wc-1} hold, we have\\
		(i)
		\begin{align}\label{a17}
		\underset{\tau_1\leq t <\tau_M}{\sup}\frac{N^{-1/2}}{w(t)}\left\| \sum_{i=1}^{\lf (N+1)t\rf}[\bx_i\bx_i^\T-E\bx_i\bx_i^\T] \right\|=O_P(1),
		\end{align}
		(ii)
		\begin{align}\label{a18}
		\underset{0< t \leq \tau_1}{\sup}\frac{N^{-1/2}}{w(t)}\left\| \sum_{i=1}^{\lf (N+1)t\rf}[\bx_i\bx_i^\T-E\bx_i\bx_i^\T] \right\|=O_P(1),
		\end{align}
		and
		\begin{align}\label{a19}
		\underset{\tau_M< t < 1}{\sup}\frac{N^{-1/2}}{w(t)}\left\| \sum_{i=1}^{ N }[\bx_i\bx_i^\T-E\bx_i\bx_i^\T] \right\|=O_P(1),
		\end{align}
		(iii)
		\begin{align}\label{a20}
		\underset{0< t \leq \tau_1}{\sup}\frac{N^{-1/2}}{(t(1-t))^{1/2}}\left\| \sum_{i=1}^{\lf (N+1)t\rf}[\bx_i\bx_i^\T-E\bx_i\bx_i^\T] \right\|=O_P((\log\log N)^{1/2}),
		\end{align}
		\begin{align}\label{a21}
		\underset{\tau_1< t \leq \tau_M}{\sup}\frac{N^{-1/2}}{(t(1-t))^{1/2}}\left\| \sum_{i=1}^{\lf (N+1)t\rf}[\bx_i\bx_i^\T-E\bx_i\bx_i^\T] \right\|=O_P(1),
		\end{align}
		and
		\begin{align}\label{a22}
		\underset{\tau_M< t <1}{\sup}\frac{N^{-1/2}}{(t(1-t))^{1/2}}\left\| \sum_{i=\lf (N+1)t\rf+1}^{N}[\bx_i\bx_i^\T-E\bx_i\bx_i^\T] \right\|=O_P( (\log\log N)^{1/2} ).
		\end{align}
	\end{lemma}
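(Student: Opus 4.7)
The plan is to reduce each bound to a corresponding statement for a Gaussian approximation of the partial sums $\bS^{xx}_k := \sum_{i=1}^k [\bx_i\bx_i^\T - E\bx_i\bx_i^\T]$, mirroring what was done in Lemma \ref{stromu} for $\sum \bx_i\eps_i$. The key observation is that Assumption \ref{as-lin-ber-v} transfers to the derived sequence $\{\bx_i\bx_i^\T\}$: on each stationary segment $(m_{\ell-1}, m_\ell]$ the sequence admits a Bernoulli decomposition, and by Cauchy--Schwarz together with the $\nu$-th moment bound on $\bz_i$,
\begin{equation*}
\bigl(E\|\bx_i\bx_i^\T - \bx^*_{i,j}(\bx^*_{i,j})^\T\|^{\nu/2}\bigr)^{2/\nu} \leq C j^{-\alpha},
\end{equation*}
with $\nu/2 > 2$ and $\alpha > 2$. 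Applying Aue et al.\ (2014) segment by segment and concatenating as in Lemma \ref{stromu}, one obtains a matrix-valued process $\hat{\bUp}_N(x)$ with piecewise Brownian-motion increments (whose covariances are the segment-wise long-run covariances of $\bx_i\bx_i^\T$) such that, for some $\zeta<1/2$, $\sup_{1 \leq x \leq m_1} x^{-\zeta} \|\bS^{xx}_{\lfloor x\rfloor} - \hat{\bUp}_N(x)\| = O_P(1)$, the analogous refinement holds near $x = N$ on segment $M+1$, and the middle segments admit the global rate $O_P(N^\zeta)$.

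Given this approximation, part (i) is immediate: on $[\tau_1,\tau_M]$ the weight $w$ is bounded below by Assumption \ref{as-wc-1}(i), and $N^{-1/2}\hat{\bUp}_N(N\cdot)$ is distributed as a continuous Gaussian process, whose sup is $O_P(1)$. For part (ii), on $(0,\tau_1]$ the process $\{N^{-1/2}\hat{\bUp}_N(Nt):0\leq t \leq \tau_1\}$ equals in law a segment of a Brownian motion $\bW^{xx}$; the near-zero behaviour $\|\bW^{xx}(t)\| = O((t\log\log(1/t))^{1/2})$ combined with $I(w,c)<\infty$ and Assumption \ref{as-wc-1} yields $\sup_{0<t\leq \tau_1}\|\bW^{xx}(t)\|/w(t) < \infty$ a.s., since the upper-class behaviour of Brownian motion at $0$ matches that of the Brownian bridge (cf.\ It\^o--McKean, 1965). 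The boundary terms $0<t<1/(N+1)$ contribute nothing, and the approximation error $O_P(x^\zeta)$ dominated by $w(t) N^{1/2}$ after the $N^{-1/2+\zeta}$ scaling. (A.19) follows identically by symmetry using segment $M+1$.

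For part (iii), with $w(t)=(t(1-t))^{1/2}$ we have $I(w,c)=\infty$, so we lose a $(\log\log)^{1/2}$ factor. (A.21) is still trivial since $(t(1-t))^{1/2}$ is bounded below on $[\tau_1,\tau_M]$. For (A.20), the Gaussian approximation reduces the left-hand side to $\sup_{1/(N+1)\leq t \leq \tau_1}\|\bW^{xx}(t)\|/t^{1/2}$ (the $1-t$ factor is bounded away from zero on $(0,\tau_1]$, and approximation errors are negligible after verifying $N^{\zeta-1/2}(t(1-t))^{-1/2} = o_P((\log\log N)^{1/2})$ on the relevant range). The LIL for Brownian motion gives
\begin{equation*}
\limsup_{t \downarrow 0} \frac{\|\bW^{xx}(t)\|}{(t\log\log(1/t))^{1/2}} < \infty \quad \text{a.s.},
\end{equation*}
so $\sup_{1/(N+1)\leq t \leq \tau_1}\|\bW^{xx}(t)\|/t^{1/2} = O_P((\log\log N)^{1/2})$. (A.22) follows by time reversal on segment $M+1$.

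The main obstacle will be the careful bookkeeping needed to propagate the Bernoulli-shift approximation of Lemma \ref{stromu} across the segments for the matrix-valued process $\bx_i\bx_i^\T$ with the correct boundary rate $x^{\zeta}$ (respectively $(N-x)^\zeta$) near the endpoints; everything else is a routine application of Brownian modulus-of-continuity bounds and the LIL. A minor point to check is that the long-run covariance of $\bx_i\bx_i^\T$ on each segment is well-defined, which follows from the same absolutely convergent series argument used to justify $\bD_\ell$ in Lemma \ref{stromu}.
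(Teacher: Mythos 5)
Your proposal is correct and follows essentially the same route as the paper: transfer the decomposable Bernoulli shift structure of Assumption \ref{as-lin-ber-v} to the derived sequence $\{\bx_i\bx_i^\T\}$ via the moment bound $(E\|\bx_i\bx_i^\T-\bx_{i,j}^*\bx_{i,j}^{*\T}\|^{\nu/2})^{2/\nu}\leq c j^{-\alpha}$, invoke the segment-wise strong Gaussian approximation of Aue et al.\ with boundary rate $x^\zeta$, and then obtain (i)--(ii) by repeating the weighted-sup argument of Lemma \ref{stro3v} and (iii) from the law of the iterated logarithm for the approximating Brownian motions. The only cosmetic difference is the citation (the paper invokes Aue et al.\ (2009) for the matrix-valued approximation rather than Aue et al.\ (2014)); the substance is identical.
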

	\begin{proof} Similarly to \eqref{beva1}, Assumption \ref{as-lin-ber-v} yields
		\begin{equation}\label{a23}
		\left(E\left\|\bx_i\bx_i^\T-\bx_{i,j}^*\bx_{i,j}^{*\T}\right\|^{\nu/2}\right)^{\nu/2}\leq c_1(i-m_{\ell-1})^{-\alpha},
		\end{equation}
		if $m_{\ell-1}<i\leq m_\ell$, $1\leq \ell \leq M+1$. Using the approximation in Aue et al. (2009), we can define independent Gaussian process $\{\bDelta_{N,1}(k), 0<k\leq m_1\}$, $\{\bDelta_{N,2}(k), m_1<k\leq m_2\}$,\dots, $\{\bDelta_{N,M+1}(k), m_M<k\leq N\}$ such that
		\begin{align}\label{a23-2}
		\underset{1 \leq k \leq M+1}{\max}\underset{1 \leq \ell \leq m_k-m_{k-1}}{\max} \frac{1}{\ell^\zeta}
		\left\|  \sum_{i=m_{k-1}}^{m_{k-1}+\ell}[\bx_i\bx_i^\T-E\bx_i\bx_i^\T] - \bDelta_{N,k}(\ell)  \right\| = O_P(1),
		\end{align}
		and
		\begin{align}\label{a24}
		\underset{1 \leq \ell \leq N-m_N}{\max} \frac{1}{\ell^\zeta}
		\left\|  \sum_{i=N-\ell}^{N}[\bx_i\bx_i^\T-E\bx_i \bx_i^\T] - \bDelta_{N,M+1}(\ell)  \right\| = O_P(1),
		\end{align}
		with some $\zeta<1/2$. We note that $E\bDelta_{N,k}(x)=\mathbf{O}$, $0 \leq x \leq m_k-m_{k-1}$, $1\leq k \leq M+1$, where $\mathbf{O}$ is the zero matrix. Also, the covariance of $\bDelta_{N,k}(x)$ is
		$$
		E\bDelta_{N,k}(x)\otimes \bDelta_{N,k}(y) = \min(x,y)\underset{N\rightarrow\infty}{\lim} \sum_{|\ell|\leq m_{k}-m_{k-1}} E \bx_{m_{k-1}+1} \bx_{m_{k-1}+1}^\T \otimes \bx_{m_{k-1}+1+\ell} \bx_{m_{k-1}+1+\ell}^\T,
		$$
		where $\otimes$ denotes the Kronecker product. The statement in \eqref{a17} is an immediate consequence of \eqref{a23-2} and \eqref{a24}, since $0<\tau_1,\; \tau_M<1$. Due to the approximations in \eqref{a23} with $k=1$ and \eqref{a24}, the proof of Lemma \ref{stro3v} can be repeated to establish \eqref{a18} and \eqref{a19}. We observe that \eqref{a21} follows from \eqref{a17}. The coordinates of $\bDelta_{N,k}(x)$ are Brownian motions and therefore by the law of the iterated logarithm, we get
		$$
		\underset{1\leq x\leq m_k-m_{k-1}}{\max} (x\log \log(x+3) )^{-1/2}\| \bDelta_{N,k}(x)\| = O_P(1).
		$$
		Hence, we obtain \eqref{a20} and \eqref{a22}.
	\end{proof}
	
	\begin{lemma}\label{stro5v}
		If Assumptions \ref{as-m}--\ref{weps-mu} hold, then
		$$
		\hat{\bbe}_N-\bbe_0=\left( \sum_{\ell=1}^{M+1}(\tau_\ell-\tau_{\ell-1}) \bA_\ell  \right)^{-1} \frac{1}{N} \sum_{i=1}^{N} \bx_i \eps_i + O_P(N^{-1/2}).
		$$
	\end{lemma}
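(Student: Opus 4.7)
My plan is to exploit that, under $H_0$, the least squares estimator admits the explicit identity
\[
\hat{\bbe}_N - \bbe_0 = (\bX_N^\T \bX_N)^{-1} \bX_N^\T \boldsymbol{\epsilon} = \left(\frac{1}{N}\bX_N^\T \bX_N\right)^{-1} \frac{1}{N}\sum_{i=1}^N \bx_i \eps_i,
\]
so the task reduces to controlling the two factors on the right. First I would show that $\frac{1}{N}\bX_N^\T \bX_N$ is a $O_P(N^{-1/2})$ perturbation of its limit $\bM := \sum_{\ell=1}^{M+1}(\tau_\ell - \tau_{\ell-1})\bA_\ell$. Partitioning the sum over the sub-segments $(m_{\ell-1}, m_\ell]$ and applying Lemma \ref{stro4v} (specifically \eqref{a17}--\eqref{a19}, evaluated at $t=1$), each segment contributes $(m_\ell - m_{\ell-1})\bA_\ell + O_P(N^{1/2})$; summing and using Assumption \ref{as-m} to replace $m_\ell - m_{\ell-1}$ by $N(\tau_\ell - \tau_{\ell-1}) + O(1)$ gives $\frac{1}{N}\bX_N^\T \bX_N = \bM + O_P(N^{-1/2})$.

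Next I would control $S_N := \frac{1}{N}\sum_{i=1}^N \bx_i \eps_i$. By Lemma \ref{stromu}, the partial sum $\sum_{i=1}^N \bx_i \eps_i$ can be coupled with $\hat{\bGam}_N(N) = \sum_{\ell=1}^{M+1}\bGam_{N,\ell}(m_\ell - m_{\ell-1})$ up to an error of order $O_P(N^\zeta)$ for some $\zeta < 1/2$, and this Gaussian sum is itself $O_P(N^{1/2})$ by independence across segments and the formula for its variance. Hence $S_N = O_P(N^{-1/2})$.

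Finally, I would combine the two approximations. Writing
\[
(\bM + O_P(N^{-1/2}))^{-1} = \bM^{-1} + O_P(N^{-1/2}),
\]
which is valid since $\bM$ is non-singular by Assumption \ref{asminons} (at least one $\bA_i$ non-singular together with positivity of the remaining summands gives $\bM$ invertible; more precisely the argument uses that the sum of positive semi-definite matrices with at least one positive definite is positive definite, assuming the $\bA_\ell$'s are PSD, which they are as second-moment matrices). Multiplying by $S_N = O_P(N^{-1/2})$,
\[
\hat{\bbe}_N - \bbe_0 = \bM^{-1} S_N + O_P(N^{-1/2})\cdot O_P(N^{-1/2}) = \bM^{-1} S_N + O_P(N^{-1}),
\]
which is stronger than the stated $O_P(N^{-1/2})$ remainder.

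The main obstacle I anticipate is justifying that $\bM$ is invertible from the hypothesis that only \emph{some} $\bA_i$ is non-singular (Assumption \ref{asminons}); I would handle this by noting $\bA_\ell = E\bx_{m_\ell}\bx_{m_\ell}^\T \succeq 0$, so their weighted sum inherits positive definiteness from any single positive definite summand. The only other bookkeeping point is uniform handling of the boundary segments $(0,\tau_1]$ and $(\tau_M,1]$ when applying Lemma \ref{stro4v}; these are covered directly by \eqref{a18}--\eqref{a19} with $w(t) \equiv 1$.
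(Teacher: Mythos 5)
Your proposal is correct and follows essentially the same route as the paper: both start from the identity $\hat{\bbe}_N-\bbe_0=(\bX_N^\T\bX_N)^{-1}\sum_{i=1}^N\bx_i\eps_i$, use the Gaussian coupling of Lemma \ref{stromu} to get $\sum_{i=1}^N\bx_i\eps_i=O_P(N^{1/2})$, and use the strong approximation for $\sum_i(\bx_i\bx_i^\T-E\bx_i\bx_i^\T)$ (the content of \eqref{a23-2}--\eqref{a24}, equivalently Lemma \ref{stro4v} at $t=1$) to get $\frac{1}{N}\bX_N^\T\bX_N=\sum_\ell(\tau_\ell-\tau_{\ell-1})\bA_\ell+O_P(N^{-1/2})$. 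Your added care about inverting the perturbed matrix and about why $\sum_\ell(\tau_\ell-\tau_{\ell-1})\bA_\ell$ is non-singular under Assumption \ref{asminons} (a positive semi-definite sum with one positive definite summand) is a detail the paper's proof leaves implicit, and is welcome.
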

	\begin{proof}
		We note
		$$
		\hat{\bbe}_N-\bbe_0 = \left(\bX_N^\T\bX_N \right)^{-1} \bX_N^\T\bE_N = \left(\bX_N^\T\bX_N \right)^{-1} \sum_{i=1}^{N} \bx_i\eps_i,
		$$
		$\bE_N=\{\eps_1, \eps_2, \dots, \eps_N\}^\T$. It follows from Lemma \ref{stromu} that $\| \sum_{i=1}^{N} \bx_i \eps_i \|=O_P(N^{1/2})$ and from \eqref{a23-2} and \eqref{a24} that
		\begin{align}
		\left\| \frac{1}{N} \bX_N^\T\bX_N - \sum_{\ell=1}^{M+1}(\tau_\ell-\tau_{\ell-1}) \bA_\ell  \right\|=O_P(N^{-1/2}).
		\end{align}
		This completes the proof of the lemma.
	\end{proof}

	\noindent
	{\bf Proof of Theorem \ref{lin-var-1}.} We write, as in \eqref{a20},
	\beq\label{decomp}
	\begin{split}
		& \sum_{i=1}^k\bx_i\hat{\eps}_i-\frac{k}{N}\sum_{i=1}^{N}x_i\hat{\eps}_i=\sum_{i=1}^{k}\bx_i\eps_i-\frac{k}{N}\sum_{i=1}^{N}\bx_i\eps_i-\sum_{i=1}^{k}(\bx_i\bx_i^\T-E\bx_i\bx_i^T)(\hat{\bbe}_N-\bbe_0)\\
		&+\frac{k}{N}\sum_{i=1}^{N}(\bx_i\bx_i^\T - E\bx_i \bx_i^\T)(\hat{\bbe}_N-\bbe_0) - \left(\sum_{i=1}^k E\bx_i\bx_i^\T - \frac{k}{N}\sum_{i=1}^{N} E\bx_i\bx^\T  \right) (\hat{\bbe}_N-\bbe_0),
	\end{split}
	\eeq
	if $k\leq N/2$, and
	\begin{align*}
	& \sum_{i=1}^k\bx_i\hat{\eps}_i-\frac{k}{N}\sum_{i=1}^{N}x_i\hat{\eps}_i=\sum_{i=1}^{k}\bx_i\eps_i-\frac{k}{N}\sum_{i=1}^{N}\bx_i\eps_i-\left(1-\frac{k}{N}\right) \sum_{i=1}^{N}(\bx_i\bx_i^\T-E\bx_i\bx_i^T)(\hat{\bbe}_N-\bbe_0)\\
	&+\sum_{i=k+1}^{N}(\bx_i\bx_i^\T - E\bx_i \bx_i^\T)(\hat{\bbe}_N-\bbe_0) - \left( \left(1-\frac{k}{N}\right) \sum_{i=1}^N E\bx_i\bx_i^\T - \sum_{i=k+1}^{N} E\bx_i\bx^\T  \right) (\hat{\bbe}_N-\bbe_0),
	\end{align*}
	if $k> N/2$. It follows from Lemmas \ref{stro4v} and \ref{stro5v} that
	\begin{align*}
	\underset{0<t\leq1/2}{\sup} \frac{N^{-1/2}}{w(t)} \left\| \sum_{i=1}^{\lf (N+1)t\rf} (\bx_i\bx_i^\T - E\bx_i\bx_i^\T)(\hat{\bbe}_N - \bbe_0) \right\| = O_P(N^{-1/2}),
	\end{align*}
	\begin{align*}
	\underset{0<t\leq1/2}{\sup} \frac{N^{-1/2}}{w(t)} \left\| \frac{\lf (N+1)t\rf}{N}\sum_{i=1}^{N} (\bx_i\bx_i^\T - E\bx_i\bx_i^\T)(\hat{\bbe}_N - \bbe_0) \right\| = O_P(N^{-1/2}),
	\end{align*}
	and
	\begin{align*}
	\underset{1/2<t<1}{\sup} \frac{N^{-1/2}}{w(t)} \left\| \left(1-\frac{\lf (N+1)t\rf}{N}\right) \sum_{i=1}^{N} (\bx_i\bx_i^\T - E\bx_i\bx_i^\T)(\hat{\bbe}_N - \bbe_0) \right\| = O_P(N^{-1/2}),
	\end{align*}

	\begin{align*}
	\underset{1/2<t<1}{\sup} \frac{N^{-1/2}}{w(t)} \left\|  \sum_{i=\lf (N+1)t\rf+1}^{N} (\bx_i\bx_i^\T - E\bx_i\bx_i^\T)(\hat{\bbe}_N - \bbe_0) \right\| = O_P(N^{-1/2}).
	\end{align*}
	Let
	\begin{align*}
	\bv_N(t) = \left\{\begin{matrix}
	\displaystyle \frac{1}{N}\sum_{i=1}^{\lf (N+1)t\rf} E \bx_i\bx_i^\T - \frac{\lf (N+1)t\rf}{N}\frac{1}{N} \sum_{i=1}^{N} E \bx_i\bx_i^\T, & 0<t\leq 1/2\\
	\displaystyle \frac{1}{N}\left(1-\frac{\lf (N+1)t\rf}{N} \right) \sum_{i=1}^{N} E \bx_i\bx_i^\T - \frac{1}{N}\sum_{i=\lf (N+1)t\rf+1}^{N} E \bx_i\bx_i^\T,& 1/2<t<1.
	\end{matrix}\right.
	\end{align*}
	It is easy to see that 
    \begin{align}\label{a22-2}
      \underset{0<t<1}{\sup} \| \bv_N(t)-\bv(t)\|\rightarrow 0  
    \end{align}
    where $\bv(t)$ is defined in \eqref{13a}. Also,
	\begin{align*}
	\underset{\delta\rightarrow 0}{\lim} \underset{N\rightarrow \infty}{\lim} \underset{0<t \leq \delta}{\sup} \frac{1}{w(t)} \| \bv(t)\| = 0, \quad \underset{\delta\rightarrow 0}{\lim} \underset{0<t \leq \delta}{\sup} \frac{1}{w(t)} \| \bv(t)\| = 0,
	\end{align*}
	and
	\begin{align*}
	\underset{\delta\rightarrow 0}{\lim} \underset{N\rightarrow \infty}{\lim} \underset{1-\delta \leq t <1}{\sup} \frac{1}{w(t)} \| \bv(t)\| = 0.
	\end{align*}
	On the interval $1\leq k \leq m_1$, $\bx_i\eps_i$ is stationary, so by Lemma \ref{stromu}
	\begin{align*}
	\underset{\delta\rightarrow 0}{\lim} \underset{N\rightarrow \infty}{\lim \sup} P \left\{ \underset{0<t\leq \delta}{\sup} \frac{N^{-1/2}}{w(t)} \left\| \sum_{i=1}^{\lf (N+1)t\rf}\bx_i\eps_i - \frac{\lf (N+1)t\rf}{N} \sum_{i=1}^{N} \bx_i\eps_i   \right\| >u \right\} = 0,
	\end{align*}
	\begin{align*}
	\underset{\delta\rightarrow 0}{\lim} \underset{N\rightarrow \infty}{\lim \sup} P \left\{ \underset{1-\delta\leq t<1}{\sup} \frac{N^{-1/2}}{w(t)} \left\| \sum_{i=1}^{\lf (N+1)t\rf}\bx_i\eps_i - \frac{\lf (N+1)t\rf}{N} \sum_{i=1}^{N} \bx_i\eps_i   \right\| >u \right\} = 0,
	\end{align*}
	for all $u>0$. By the law of the iterated logarithm, we get
	\begin{align*}
	\underset{\delta \rightarrow 0}{\lim} \underset{0<t \leq \delta}{\sup}
	\frac{1}{w(t)} \left\| \bB(t) \right\| = 0, \quad \mbox{a.s. and} \quad
	\underset{\delta \rightarrow 0}{\lim} \underset{1-\delta\leq t < t}{\sup}
	\frac{1}{w(t)} \left\| \bB(t) \right\| = 0, \quad \mbox{a.s.}
	\end{align*}
	It follows from Lemma \ref{stromu} that
	$$
	\bU_N(t) \overset{\mathcal{D}[\delta,1-\delta]}{\rightarrow} \bar{\bGam}(t), \quad \mbox{for all } 0 <\delta<1/2,
	$$
	where
	$$
	\bU_N(t) = N^{-1/2}\left( \sum_{i=1}^{\lf (N+1)t\rf} \bx_i\eps_i - \frac{\lf (N+1)t\rf}{N} \sum_{i=1}^{N} \bx_i\eps_i \right) - \bv_N(t) N^{1/2} (\hat{\bbe}_N - \bbe_0).
	$$
	and $\bar{\bGam}(t)$ is defined in \eqref{gaus}. This completes the proof of Theorem \ref{lin-var-1}.
	\noindent

	{\bf Proof of Theorem \ref{coverd}.}
	We use again the decomposition in \eqref{decomp}. We show that the maximum is reached on the interval $[1/\log N, 1-1/\log N]$. The Gaussian approximation in \eqref{a23-2} and \eqref{a24} with the law of the iterated logarithm yield,
	\begin{align*}
	\underset{1\leq k \leq N/2}{\max}
	\frac{1}{k^{1/2}} \left\| \sum_{i=1}^{k} (\bx_i\bx_i^\T - E\bx_i\bx_i^\T) (\hat{\bbe}_N-\bbe_0) \right\| = O_P(N^{-1/2}(\log\log N)^{1/2})
	\end{align*}
	and
	\begin{align*}
	\underset{1\leq k \leq N/2}{\max}
	\frac{1}{k^{1/2}} \left\| \frac{k}{N}\sum_{i=1}^{N} (\bx_i\bx_i^\T - E\bx_i\bx_i^\T) (\hat{\bbe}_N-\bbe_0) \right\| = O_P( N^{-1/2}).
	\end{align*}
	Using again \eqref{a23-2} and \eqref{a24}
	\begin{align*}
	\underset{1\leq k \leq \log N}{\max}
	\frac{1}{k^{1/2}} \left\| \sum_{i=1}^{k} \bx_i\eps_i - \frac{k}{N} \sum_{i=1}^{N} \bx_i\eps_i \right\| = O_P( (\log \log \log N)^{1/2}).
	\end{align*}
	and by elementary arguments
	\begin{align*}
	\underset{1\leq k \leq \log N}{\max}
	\frac{1}{k^{1/2}} \left\| \left(\sum_{i=1}^{k} E\bx_i\bx_i^\T - \frac{k}{N} \sum_{i=1}^{N} E\bx_i\bx_i^\T\right) (\hat{\bbe}_N-\bbe_0) \right\| = O_P( N^{-1/2} (\log \log \log N)^{1/2}  ),
	\end{align*}
	\begin{align*}
	\underset{1\leq k \leq N/2}{\max}
	\frac{1}{k^{1/2}} \left\| \left(\sum_{i=1}^{k} E\bx_i\bx_i^\T - \frac{k}{N} \sum_{i=1}^{N} E\bx_i\bx_i^\T\right) (\hat{\bbe}_N-\bbe_0) \right\| = O_P( 1 ).
	\end{align*}
	Using again the law of the iterated logarithm for Brownian motions with \eqref{a23-2} and \eqref{a24}, we get
	\begin{align*}
	\left( \frac{1}{\log \log N}\right)^{1/2}
	\underset{1\leq k \leq N/2}{\sup}
	\frac{1}{k^{1/2}} \left\| \sum_{i=1}^{k} \bx_i\eps_i - \frac{k}{N} \sum_{i=1}^{N} \bx_i\eps_i \right\| \overset{P}{\rightarrow} c_7.
	\end{align*}
	with some $c_7>0$. Thus we get
    \begin{align*}
    \underset{N\rightarrow \infty}{\lim} P & \bigg\{ \underset{\log N \leq k \leq N/2}{\max}   \frac{1}{k^{1/2}} 
    \left\|  \sum_{i=1}^{k} \bx_i\hat{\eps}_i - \frac{k}{N} \sum_{i=1}^{N} \bx_i \hat{\eps}_i \right\|  \\
    &  = \underset{1\leq k \leq N/2}{\max} \frac{1}{k^{1/2}} 
    \left\|  \sum_{i=1}^{k} \bx_i\hat{\eps}_i - \frac{k}{N} \sum_{i=1}^{N} \bx_i \hat{\eps}_i \right\|  \bigg\} = 1.
\end{align*}

	and
    \begin{align*}
    &\left\| \underset{\log N \leq k \leq N/2}{\max}  \frac{1}{k^{1/2}} \left(\sum_{i=1}^{k} \bx_i\hat{\eps}_i - \frac{k}{N} \sum_{i=1}^{N} \bx_i \hat{\eps}_i \right) \right.  \left. - \frac{1}{k^{1/2}} \left[ \left(  \sum_{i=1}^{k} \bx_i \eps_i - \frac{k}{N} \sum_{i=1}^{N} \bx_i\eps_i  \right) \right. \right. \\
    &\quad \left. \left. -
    \left(  \sum_{i=1}^{k} E\bx_i \bx_i^\T - \frac{k}{N} \sum_{i=1}^{N} \bx_i\bx_i^\T  \right) (\hat{\bbe}_N - \bbe_0) \right]   \right\|  = O_P(N^{-1/2} (\log \log N)^{1/2} ).
\end{align*}

	The Darling--Erd\H{o}s law (c.f. Cs\"org\H{o} and Horv\'ath, 1997) with the approximation implies
	\begin{align*}
	\underset{N/\log N \leq k \leq N/2}{\max} \frac{1}{k^{1/2}} \left\| \sum_{i=1}^{k} \bx_i\eps_i - \frac{k}{N}\sum_{i=1}^{N}\bx_i\eps_i \right\|  = O_P( (\log \log \log N)^{1/2} ).
	\end{align*}
	Thus, we conclude
	\begin{align*}
	\underset{N\rightarrow \infty}{\lim} P & \left \{ \underset{\log N \leq k \leq N/\log N}{\max} \frac{1}{k^{1/2}} \left\|  \sum_{i=1}^{k} \bx_i\hat{\eps}_i - \frac{k}{N} \sum_{i=1}^{N} \bx_i \hat{\eps}_i \right\| \right . \\
	& \left .= \underset{\log N \leq k \leq N/\log N}{\max} \frac{1}{k^{1/2}} \left\|  \sum_{i=1}^{k} \bx_i\hat{\eps}_i - \frac{k}{N} \sum_{i=1}^{N} \bx_i \hat{\eps}_i \right\| \right \} =1.
	\end{align*}
	Since
	\begin{align*}
	\underset{\log N \leq k \leq N/\log N}{\max} \frac{1}{k^{1/2}} \left\|  \left( \sum_{i=1}^{k} E\bx_i\bx_i^\T - \frac{k}{N} \sum_{i=1}^{N} E\bx_i \bx_i\right) (\hat{\bbe}_N-\bbe_0) \right\| = O_P( (\log N)^{-1/2} ),
	\end{align*}
	we conclude that
	\begin{align*}
	\underset{\log N \leq k \leq N/\log N}{\max} \frac{1}{k^{1/2}} \left\|  \sum_{i=1}^{k} \bx_i\hat{\eps}_i - \frac{k}{N} \sum_{i=1}^{N} \bx_i\hat{\eps}_i \right\| = O_P( (\log N)^{-1/2} ),
	\end{align*}
    and
	\begin{align*}
	\underset{\log N \leq k \leq N/\log N}{\max} \frac{1}{k^{1/2}} \left\|  \sum_{i=1}^{k} \bx_i\hat{\eps}_i  \right\| = O_P( (\log N)^{-1/2} ).
	\end{align*}
	According to the Darling--Erd\H{o}s law (c.f. Cs\"org\H{o} and Horv\'ath, 1997) and \eqref{a23-2} on $[1, m_1]$, we get for all $x$
	\begin{align*}
	\underset{N\rightarrow \infty}{\lim} P\{  a(\log N) \underset{\log N \leq k \leq N/\log N}{\max} \frac{1}{k} \left(\sum_{i=1}^{k} \bx_i \eps_i\right)^\T\bD_1^{-1} \left(\sum_{i=1}^{k} \bx_i \eps_i\right)\leq x + b_d (\log N)   \} = \exp(-e^{-x}).
	\end{align*}
	We can repeat our arguments on $[N/2, N]$ and we get the limit result for
	\begin{align*}
	\max \left\{ k^{-1/2} \left\| \sum_{i=N-k}^{N} \bx_i \eps_i \right\|, \quad \log N\leq k \leq N/\log N   \right\}.
	\end{align*}
	Since the approximating processes on $[1,m_1]$ and $[m_{M}+1,N]$ are independent, we get for all $x \in \mathbb{R}$
    \begin{align*}
    &\underset{N\rightarrow \infty}{\lim} P\Bigg\{ a(\log N) \max \Bigg[
    \underset{\log N \leq k \leq N/\log N}{\max} \left(\frac{1}{k} \sum_{i=1}^{k}\bx_i \eps_i \right)^\T \bD_1^{-1} \left( \sum_{i=1}^{k} \bx_i \eps_i  \right),  \\
	&\quad \underset{N-N/\log N\leq k \leq N-\log N}{\max} \frac{1}{N-k} \left(\sum_{i=k+1}^{N} \bx_i \eps_i\right)^\T \bD_{M+1}^{-1} \left(\sum_{i=k+1}^{N} \bx_i \eps_i\right) 
    \Bigg] \leq x + b_d(\log N)   \Bigg\} = \exp(-2 e ^{-x}).
\end{align*}

	Elementary calculations yield
	\begin{align*}
	E\bar{\bGam}_N(t)\bar{\bGam}^\top_N(t)=\tilde{\bG}(t) = & \bG(t) - 2t\bG(t) + t^2\bG(t) - \bu(t)\bG(t) - \bG(t) \bu^\T(u)\\ &+ t\bu(t)\bG(t)+t\bG(t)\bu(t)^\T + \bu(t)\bG(1)\bu(t)^\T,
 	\end{align*}
	and
	\begin{align*}
	\underset{(\log N)/N \leq t \leq 1/\log N}{\sup} \left\|  \tilde{\bG}(t) - t \bD_1 \right \| = O\left( \frac{1}{ (\log N)^2}\right),
	\end{align*}
	\begin{align*}
	\underset{1-1/(\log N)  \leq t \leq 1-(\log N)/N}{\sup} \left\|  \tilde{\bG}(t) - (1-t) \bD_{M+1} \right \| = O\left( \frac{1}{ (\log N)^2}\right),
	\end{align*}
	since
	\begin{align*}
	\underset{(\log N)/N \leq t \leq 1/\log N}{\sup} \left\|  \frac{\bv(t)}{t} \right \| = O(1),
	\underset{1-1/\log N \leq t \leq 1-(\log N)/N}{\sup} \left\|  \frac{\bv(t)}{1-t} \right \| = O(1).
	\end{align*}
	Here we can replace $(N/k)^{-1} \bD_1$ with $\tilde{\bG}^{-1}(k/N)$ and $(N/(N-k) )^{-1} \bD_1$ with $\tilde{\bG}^{-1}(k/N)$. The first part of proof is completed, and the similar arguments apply to the second part.
	\qed
	\noindent

	{\bf Proof of Theorem \ref{hetcos}.} We  consider
	\begin{align*}
	\sum_{\ell=0}^{k-1} K\left(\frac{\ell}{h}\right) \frac{1}{N-\ell} \sum_{i=1}^{k-\ell} \eps_i \eps_{i+\ell } \bx_i \bx_{i+\ell}^\top = & \sum_{\ell=0}^{k-1} K\left(\frac{\ell}{h}\right) \frac{1}{N-\ell} \sum_{i=1}^{k} \eps_i \eps_{i+\ell } \bx_i \bx_{i+\ell}^\T\\
	&-\sum_{\ell=0}^{k-1} K\left(\frac{\ell}{h}\right) \frac{1}{N-\ell} \sum_{i=k-\ell+1}^{k} \eps_i \eps_{i+\ell } \bx_i \bx_{i+\ell}^\T.
	\end{align*}
	We can assume without loss of generality that $c=1$ in Assumption \ref{as-k}. We can also assume that $h<k$. Thus, we have for all $1\leq j \leq d$,
	\begin{align*}
	\sum_{\ell=0}^{h} K\left(\frac{\ell}{h}\right) \frac{1}{N-\ell} \sum_{i=1}^{k} (\eps_i \eps_{i+\ell } \bx_i\bx_{i+\ell}^\T - E\eps_i \eps_{i+\ell } \bx_i\bx_{i+\ell}^\T ) = \sum_{i=1}^{k} \boldsymbol{\xi}_{i,j},
	\end{align*}
	where
	$$
	\boldsymbol{\xi}_{i,j} = \sum_{\ell=0}^{h} \frac{1}{N-\ell} K\left(\frac{\ell}{h}\right) [\eps_i \eps_{i+\ell } \bx_i \bx_{i+\ell}^\T - E \eps_i \eps_{i+\ell } \bx_i \bx_{i+\ell}^\T].
	$$ By definition, $E\boldsymbol{\xi}_i = \mathbf{O}$, where $\mathbf{O}$ denotes the zero matrix. Using Assumption \ref{as-lin-ber-v},
	\begin{align*}
	\left(E \|  \boldsymbol{\xi}_{i,\ell} \|^\nu  \right)^{1/\nu} = \left(E \|  \boldsymbol{\xi}_{0,\ell} \|^\nu  \right)^{1/\nu} \leq c_8 \frac{h^{1/2}}{N}
	\end{align*}
	with some constant $c_8$. For any $a<b$ we get
	\begin{align*}
	\left(E \left \|  \sum_{i=a}^b \boldsymbol{\xi}_{i,\ell} \right \|^\nu  \right)^{1/\nu} \leq  \sum_{i=a}^{b} \left(E \|  \boldsymbol{\xi}_{i,\ell} \|^\nu  \right)^{1/\nu} \leq c_8 (b-a) \frac{h^{1/2}}{N}
	\end{align*}
	The maximal inequality of M\'oricz et al (1982) gives
	\begin{align*}
	E\left( \underset{1\leq k \leq N-1}{\max} \left \|  \sum_{i=1}^k \boldsymbol{\xi}_{i,\ell} \right \| \right)^\nu\leq c_9 h^{\nu/2}.
	\end{align*}
	Hence,
	\beq\label{332a}
	\begin{split}
		&\underset{1\leq k \leq N-1}{\max} \left\|   \left\{  \sum_{\ell=0}^{h} K \left(\frac{\ell}{h}\right) \frac{1}{N-\ell} \sum_{i=\ell}^{k} (\bx_i \bx_{i+\ell}^\T \eps_i \eps_{i+\ell} -E\bx_i \bx_{i+\ell}^\T \eps_i \eps_{i+\ell}  ) (\hat{\bbe}_N - \bbe)     \right\}  \right\| \\
		& = O_P\left(\frac{h}{N}\right) = o_P(1).
	\end{split}
	\eeq
	Assumption \ref{as-lin-ber-v} implies
	\begin{align*}
	\underset{1\leq k \leq N-1}{\max} \left \|  \left \{   \sum_{\ell=0}^{h} K\left( \frac{\ell}{h} \right) \frac{1}{N-\ell} \sum_{i=1}^{k} E \bx_i \bx_{i+\ell}^\T \eps_i \eps_{i+\ell}   \right \}	(\hat{\bbe}_N - \bbe_0) \right \| = o_P(1).
	\end{align*}
	By the triangle inequality,
	\begin{align*}
	& \left \| \sum_{\ell=0}^{h} K\left( \frac{\ell}{h}\right) \frac{1}{N-\ell} \sum_{i=k-\ell+1}^{k} \bx_i \bx_{i+\ell}^\T \eps_i \eps_{i+\ell} (\hat{\bbe}-\bbe_0) \right \| \\
	& \leq \frac{c_9}{N} \sum_{\ell=0}^{h} \sum_{i=1}^{d} \underset{1\leq k \leq N-1}{\max} \sum_{i=k-\ell+1}^{k} \| \bx_i \bx_i^\T \eps_i \eps_{i+\ell,j}\| \sum_{m=1}^{d} | \hat{\bbe}_{N,m} - \bbe_{0, m}|.
	\end{align*}
	It follows from Assumption \ref{as-lin-ber-v} that
	\begin{align*}
	E \left \|
	\sum_{i=k-\ell+1}^{k} \eps_i \eps_{i+\ell,j} \bx_i \bx_{i+\ell}^\T \right \|^\nu \leq c_{10} \ell^{\nu/2},
	\end{align*}
	and therefore by Markov's inequality,
	\begin{align*}
	P\left \{ \underset{1\leq k \leq N-1}{\max} \left\| \sum_{i=k+1-\ell}^{k}\eps_i \eps_{i+\ell,j} \bx_i\bx_{i+\ell}^\T  \right\| > x N^{1/\nu} \ell^{1/2}  \right \} \leq \frac{c_{10}}{x^\nu},
	\end{align*}
	resulting in
	\begin{align}\label{334a}
	E \underset{1\leq k \leq N-1}{\max} \left\| \sum_{i=k-\ell+1}^{k} \eps_i \eps_{i+\ell,j} \bx_i\bx_{i+\ell}^\T \right \| \leq c_{11} N^{1/\nu} \ell^{1/2}.
	\end{align}
	Thus, we conclude
	\begin{align*}
	& \underset{1\leq k \leq N-1}{\max} \left\| \sum_{\ell=0}^{h}K \left(\frac{\ell}{h}\right) \frac{1}{N-\ell} \sum_{i=k-\ell+1}^{k} \eps_i \bx_i \bx_{i+\ell}\eps_{i+\ell} (\hat{\bbe}_N - \bbe_0) \right\| \\
	& = O_P\left( \frac{h}{N} N^{1/\nu} h^{1/2} N^{-1/2}   \right) = O_P\left(  \left( \frac{h}{N^{1-2/(3\nu)}}  \right)^{3/2} \right) = o_P(1).
	\end{align*}
	We can repeat our arguments above and we get
	\begin{align*}
    &\underset{1\leq k \leq N-1}{\max} \left\| \sum_{\ell=0}^{N-1} K \left(\frac{\ell}{h}\right) \frac{1}{N-\ell} \sum_{i=1}^{k-\ell} \Bigg[ \bx_i\bx_{i+\ell}^\T \eps_i \eps_{i+\ell} (\hat{\bbe}_N - \bbe_0) \right. \\
    & \quad + \bx_i\bx_{i+\ell}^\T \eps_{i+\ell} \eps_{i} (\hat{\bbe}_N - \bbe_0) + \eps_i^2\eps_{i+\ell}^2\bx_i\bx_{i+\ell}^\T (\hat{\bbe}_N - \bbe_0)(\hat{\bbe}_N - \bbe_0)^\top \bx_{i+\ell}\bx_i^\T 
    \Bigg] \Bigg\| = o_P(1).
        \end{align*}
	Here we need to consider only
	\begin{align*}
	\hat{\bga}_{k,\ell} = \left\{\begin{matrix}
	\displaystyle \frac{1}{N-\ell}\sum_{i=1}^{k-\ell} \eps_i\eps_{i+\ell} \bx_i\bx_{i+\ell}^\T,  & 0 \leq \ell \leq k\\
	\displaystyle  \frac{1}{N-\ell}\sum_{i=-(\ell-1)}^{k} \eps_i\eps_{i+\ell} \bx_i\bx_{i+\ell}^\T, & -k < \ell < 0,
	\end{matrix}\right.
	\end{align*}
	$1\leq k \leq N-1$, and the corresponding estimators
	$$
	\hat{\bD}_N(k) = \sum_{\ell=0}^{k-1} K \left( \frac{\ell}{h}\right) \hat{\bga}_{k,\ell}.
	$$
	
	We write again
	\begin{align*}
	\sum_{\ell=0}^{h}K\left( \frac{\ell}{h}\right) \frac{1}{N-\ell} \sum_{i=1}^{k-\ell} \eps_i \eps_{i+\ell} \bx_i\bx_{i+\ell}^\T = & \sum_{\ell=0}^{h} K\left( \frac{\ell}{h}\right) \frac{1}{N-\ell} \sum_{i=1}^{k-1} \boldsymbol{\eta}_{i,\ell}\\
	& + \sum_{\ell=0}^{h} \frac{1}{N-\ell} \sum_{i=1}^{k-\ell} E \eps_i \eps_{i+\ell} \bx_i\bx_{i+\ell}^\T,
	\end{align*}
	and
	\begin{align*}
	\sum_{\ell=0}^{h}K\left( \frac{\ell}{h}\right) \frac{1}{N-\ell} \sum_{i=1}^{k-\ell} \boldsymbol{\eta}_{i,\ell} = & \sum_{\ell=0}^{h} K\left( \frac{\ell}{h}\right) \frac{1}{N-\ell} \sum_{i=1}^{k-1} \boldsymbol{\eta}_{i,\ell}\\
	& - \sum_{\ell=0}^{h} K\left( \frac{\ell}{h}\right)  \frac{1}{N-\ell} \sum_{i=k-\ell+1}^{k} \boldsymbol{\eta}_{i,\ell}.
	\end{align*}
	with $\boldsymbol{\eta}_{i,\ell} = \eps_i \eps_{i+\ell}\bx_i\bx_{i+\ell}^\T-E\eps_i \eps_{i+\ell}\bx_i\bx_{i+\ell}^\T$.
	Arguing as in the proof of
	\begin{align*}
	\underset{1\leq k \leq N-1}{\max}\left\| \sum_{\ell=0}^{h}K\left( \frac{\ell}{N}\right) \frac{1}{N-\ell} \sum_{i=k-\ell+1}^{k} \boldsymbol{\eta}_{i,\ell} \right \|= & O_P\left( h^{3/2} N^{1/\nu-1/2}\right)\\
	= & O_P\left(  \left( \frac{h}{N^{1/3-2/(3\nu)}}\right)^{3/2}\right).
	\end{align*}
	Following the proof of \eqref{332a},
	\begin{align*}
	\underset{1\leq k \leq N-1}{\max}\left\| \sum_{\ell=0}^{h}K\left( \frac{\ell}{N}\right) \frac{1}{N-\ell} \sum_{i=k-\ell+1}^{k} \boldsymbol{\eta}_{i,\ell} \right \|=  O_P\left( \frac{h}{N^{1/2}}\right)
	\end{align*}
	We only need to consider
	\begin{align*}
	\sum_{\ell=0}^{h} & K\left( \frac{\ell}{N}\right) \frac{1}{N-\ell} \sum_{i=1}^{k-\ell} E\eps_i\eps_{i+\ell}\bx_i \bx_{i+\ell}^\T \\  =  & \sum_{\ell=0}^{h}K\left( \frac{\ell}{N}\right) \frac{1}{N-\ell} \sum_{r=1}^{j-1}\sum_{i=m_{r-1}+1}^{m_r} E \eps_i \eps_{i+\ell} \bx_i \bx_{i+\ell}^\T \mathds{1}\{ m_{j-1}\leq k-\ell\} \\& +\sum_{\ell=0}^{h}K\left( \frac{\ell}{N}\right) \frac{1}{N-\ell} \sum_{i=m_{j-1}+1}^{k-\ell} E \eps_i \eps_{i+\ell} \bx_i \bx_{i+\ell}^\T \mathds{1}\{ m_{j-1}> k-\ell\},
	\end{align*}
	if $m_{j-1}<h \leq m_j$. We write
	\begin{align}\label{337a}
	\sum_{i=m_{r-1}+1}^{m_r} E\eps_i\eps_{i+\ell}\bx_i\bx_{i+\ell}^\T = \sum_{i = m_{r-1}+1}^{m_r-\ell} E\eps_i\eps_{i+\ell} \bx_i \bx_{i+\ell}^\T+\sum_{i = m_{r-1}-\ell+1}^{m_r} E\eps_i\eps_{i+\ell} \bx_i \bx_{i+\ell}^\T.
	\end{align}
	Since $i$ and $i+\ell$ share the same volatility in the first term of \eqref{337a}, we get
	\begin{align*}
	\sum_{\ell=0}^{h}K\left( \frac{\ell}{N}\right) & \frac{1}{N-\ell} \sum_{i=m_{r-1}+1}^{m_r} E\eps_i\eps_{i+\ell}\bx_i \bx_{i+\ell}^\T \rightarrow \\& (\tau_r - \tau_{r-1})\underset{N\rightarrow \infty}{\lim} \sum_{\ell=0}^{m_r - m_{r-1}} E \eps_{m_{r-1}+1}\eps_{m_{r-1}+1+\ell} \bx_{m_{r-1}+1}\bx_{m_{r-j+1}+\ell}^\T.
	\end{align*}
	For the second term of \eqref{337a}, we have that $i$ and $i+\ell$ are in different volatility regimes and
	\begin{align*}
	\underset{0\leq \ell \leq h}{\max} \left\| \sum_{i=m_{r-1}-\ell+1}^{m_r} E \eps_i \eps_{i+\ell} \bx_i \bx_{i+\ell}^\T   \right\| \leq c_{12}h
	\end{align*}
	and therefore
	\begin{align*}
	\left\| \sum_{\ell=0}^{h}K\left( \frac{\ell}{N}\right) \frac{1}{N-\ell} \sum_{i=m_{r-1}-\ell+1}^{m_r} E \eps_i \eps_{i+\ell} \bx_i \bx_{i+\ell}^\T \right \|=  O\left( \frac{h^2}{N}\right).
	\end{align*}
	By the arguments above
	\begin{align*}
	\sum_{\ell=0}^{h}K \left( \frac{\ell}{N}\right) & \frac{1}{N-\ell} \sum_{i=m_{j-1}+1}^{k-\ell} E \eps_i \eps_{i+\ell} \bx_i \bx_{i+\ell}^\T \rightarrow \\&
	(u-\tau_{j-1}) \underset{N\rightarrow \infty}{\lim} \sum_{\ell=0}^{m_j-m_{j-1}} \eps_{m_{j-1}+1}\eps_{m_{j-1}+\ell+1} \bx_{m_{j-1}+1}\bx_{m_{j-1}+\ell+1}^\T.
	\end{align*}
	We then have the same calculations for $\sum_{\ell=-h}^{0}$, and the proof is complete.
	\qed

	\noindent
	{\bf Proof of Theorem \ref{thebe0a}.}
	According to Assumption \ref{as-lin-ber-v}, $\{(\eps_i,\bx_i), m_{\ell-1}<i \leq m_\ell \}$ is stationary for each $\ell=1,\dots,M+1$. Hence, we get
	\begin{align*}
	\frac{1}{N}\bX_N^\T\bX_N = \sum_{\ell=1}^{M+1} \frac{1}{N} \sum_{i=m_{\ell-1}+1}^{m_\ell} \bx_i\bx_i^\T \overset{P}{\rightarrow} \sum_{\ell=1}^{M+1} (\tau_\ell - \tau_{\ell-1})\bA_\ell
	\end{align*}
	and for $1\leq k\leq d$,
	\begin{align*}
	\left| \sum_{i=1}^{N}x_{i,k}\eps_i\right | = \left| \sum_{\ell=1}^{M+1}\sum_{i=m_{\ell-1}}^{m_\ell}x_{i,k}\eps_i\right |  = O_P(N^{1/2}).
	\end{align*}
	Also,
	\begin{align*}
	\bX_N^\T\bY_N = \sum_{\ell=1}^{R+1} \left( \sum_{i=r_{\ell-1}+1}^{r_\ell} \bx_i \bx_i^\T  \right)\bbe_\ell + \bX_N^\T\bY_N = \sum_{\ell=1}^{R+1} \left( \sum_{i=r_{\ell-1}+1}^{r_\ell} \bx_i \bx_i ^\T \right) \bbe_\ell + O_P(N^{1/2}),
	\end{align*}
	and
	\begin{align*}
	\sum_{i=r_{\ell-1}+1}^{r_\ell} \bx_i \bx_i^\T = \sum_{j=1}^{M+1} \sum_{i=r_{\ell-1}+1}^{r_\ell} \bx_i\bx_i^\T \mathds{1}\{ m_{j-1}< i \leq m_j \},
	\end{align*}
	\begin{align*}
	\frac{1}{N}\sum_{j=1}^{M+1} \sum_{i=r_{\ell-1}+1}^{r_\ell} \bx_i\bx_i^\T \mathds{1}\{ m_{j-1}< i \leq m_j \} \overset{P}{\rightarrow} \sum_{j=1}^{M+1} \bA_j \left| (\tau_{\ell-1},\tau_\ell] \cap (\theta_{j-1},\theta_j]  \right|.
	\end{align*}
	Thus, we get
	\begin{align}\label{342a}
	\hat{\bbe}_N \overset{P}{\rightarrow} \bbe^{**}.
	\end{align}
	Also, if $r_{k-1}<\lf N u\rf \leq r_k$, then we have
	\begin{align*}
	\sum_{i=1}^{\lf Nt\rf} \hat{\eps}_i\bx_i = \sum_{i=1}^{\lf Nt \rf} \eps_i \bx_i + \sum_{j=1}^{k-1} \sum_{i=r_{j-1}+1}^{r_j} \bx_i\bx_i^\T(\bbe_j-\hat{\bbe}_N)+ \sum_{i=r_{k-1}+1}^{\lf Nt \rf} \bx_i \bx_i^\T (\bbe_k -\hat{\bbe}_N).
	\end{align*}
	We showed in the proof of Theorem \ref{lin-var-1}
	\begin{align*}
	\underset{0<t<1}{\sup} \left\| \sum_{i=1}^{\lf Nt\rf } \bx_i \eps_i \right\|= O_P(N^{1/2}).
	\end{align*}
	It is easy to see that \eqref{342a} implies
	\begin{align*}
	\frac{1}{N} \sum_{j=1}^{k-1}\sum_{i=r_{j-1}+1}^{r_j} \bx_i \bx_i^\T\left( \bbe_j-\hat{\bbe}_N \right) \overset{P}{\rightarrow} \sum_{j=1}^{k-1} \sum_{\ell=1}^{M+1}\bA_\ell | (\theta_{j-1},\theta_j] \cap (\nu_{\ell-1},\nu_\ell] | (\bbe_j-\bbe^{**}).
	\end{align*}
	This completes the proof.
	\qed

\section{Proofs of Theorems \ref{smo1} -- \ref{newedd}}\label{seprsm}
Similarly to $\bR(k)$ we introduce 
$$
{\bQ}_N(k)=\sum_{i=1}^k\bx_ig(i/N)\eps_i, \;\;\;\bQ_N(0)={\bf 0}.
$$

\begin{lemma}\label{sesm2}  If $H_0$, Assumptions \ref{as-m}--\ref{weps-mu} and \ref{totalvar} hold, then 
for each $N$ we can define  Gaussian processes $\{\bLamb_N(t), 0\leq t\leq 1\}$ such that
$$
\sup_{0\leq t \leq 1}\left\| N^{-1/2}\bQ_N(Nt)-\bLamb_N(t) \right\|=o_P(1),\;\;\;\mbox{where}\;\;\;\bLamb_N(t)=\int_0^tg(u)d\left(N^{-1/2}\hat{\bGamma}_N(Nu)\right),
$$
where $\{\hat{\bGamma}_N(x), 0\leq x \leq N\}$   is defined  in \eqref{Gamhat}.  Also, $E\bLamb_N(t)={\bf 0}$ and $E\bLamb_N(t)\bLamb^\T_N(s)=\bH(\min(t,s))$.
\end{lemma}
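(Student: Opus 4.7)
The plan is to transfer the strong Gaussian approximation for $\bR(k) = \sum_{i=1}^k \bx_i \eps_i$ obtained in Lemma \ref{stromu} to $\bQ_N(k)$ by an Abel summation argument, exploiting that the modulating function $g$ is of bounded variation. Since $g$ multiplies each increment $\bx_i \eps_i$ pointwise, writing $\bQ_N(k) = \sum_{i=1}^k g(i/N)(\bR(i) - \bR(i-1))$ and summing by parts gives
\[
\bQ_N(k) \;=\; g(k/N)\,\bR(k) \;-\; \sum_{i=1}^{k-1} \bR(i)\bigl[g((i+1)/N) - g(i/N)\bigr],
\]
and an identical identity holds for the analogous sum with $\bR(\cdot)$ replaced by $\hat{\bGamma}_N(\cdot)$.

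First, I would combine the three displays of Lemma \ref{stromu} with the piecewise definition \eqref{Gamhat} to obtain the global bound $\max_{0\leq k \leq N}\|\bR(k) - \hat{\bGamma}_N(k)\| = O_P(N^\zeta)$ for some $\zeta < 1/2$. Substituting this into the Abel identity and its Gaussian analogue yields, uniformly in $k \leq N$,
\[
\Bigl\| \bQ_N(k) - \Bigl(g(k/N)\hat{\bGamma}_N(k) - \sum_{i=1}^{k-1}\hat{\bGamma}_N(i)[g((i+1)/N) - g(i/N)]\Bigr)\Bigr\| \leq \|g\|_\infty \cdot O_P(N^\zeta) + O_P(N^\zeta)\cdot V_0^1(g),
\]
where $V_0^1(g)$ is the total variation of $g$ on $[0,1]$, which is finite by Assumption \ref{totalvar}. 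After scaling by $N^{-1/2}$ this error is uniformly $O_P(N^{\zeta - 1/2}) = o_P(1)$.

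Second, I would identify the Riemann--Stieltjes sum on the right-hand side with $\bLamb_N(t) = \int_0^t g(u)\,d(N^{-1/2}\hat{\bGamma}_N(Nu))$. Since $\hat{\bGamma}_N$ is continuous on $[0,N]$ (being a concatenation of Brownian motion paths) and $g$ has bounded variation, integration by parts gives
\[
\bLamb_N(t) \;=\; g(t)\,N^{-1/2}\hat{\bGamma}_N(Nt) \;-\; \int_0^t N^{-1/2}\hat{\bGamma}_N(Nu)\,dg(u),
\]
and the discrete version above is exactly the Riemann--Stieltjes sum of this representation on the partition $\{i/N\}$. A standard modulus-of-continuity argument, using the uniform continuity of $u \mapsto N^{-1/2}\hat{\bGamma}_N(Nu)$ (in probability, with rate $(N^{-1}\log N)^{1/2}$) together with $V_0^1(g) < \infty$, bounds the discretization error uniformly by $o_P(1)$. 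Combining this with the previous bound yields the claimed uniform $o_P(1)$ approximation.

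Finally, the distributional statements follow directly from the Wiener-integral structure. Because the increments of $N^{-1/2}\hat{\bGamma}_N(Nu)$ are Gaussian and piecewise independent with increment covariance $\bD_\ell \,du$ on $(\tau_{\ell-1},\tau_\ell]$, the process $\bLamb_N$ is centered Gaussian, and Itô isometry (applied segmentwise) gives
\[
E\bLamb_N(t)\bLamb_N^\T(s) \;=\; \int_0^{\min(s,t)} g^2(u)\,d\bG(u) \;=\; \bH(\min(s,t)),
\]
as required. The main obstacle is controlling the Abel-summation error uniformly in $t$, which is why the finite total variation of $g$ is essential: without it, the pointwise approximation rate $O_P(N^\zeta)$ for $\bR - \hat{\bGamma}_N$ would not suffice after multiplying by $N$ jump terms.
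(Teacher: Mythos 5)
Your proposal is correct and follows essentially the same route as the paper's proof: Abel summation to transfer the strong approximation of Lemma \ref{stromu} from $\bR(k)$ to $\bQ_N(k)$ using the finite total variation of $g$, identification of the resulting Riemann--Stieltjes sum with $\int_0^t g\,d(N^{-1/2}\hat{\bGamma}_N(N\cdot))$ via integration by parts and the modulus of continuity of the Gaussian paths, and a segmentwise isometry computation for the covariance. The only cosmetic difference is that the paper invokes the Jordan decomposition $g=g_1-g_2$ to work with monotone parts explicitly, whereas you bound the discretization error directly against the total variation measure; these are equivalent.
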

\begin{proof}  By Abel's summation formula we have 
\begin{align}\label{rep}
\bQ_N(k)=g(k/N)\bR(k)-\sum_{\ell=1}^{k-1}\bR(\ell)\left[g((\ell+1)/N)-g(\ell/N)\right], \;\;1\leq k \leq N.
\end{align}
Using the proof of Lemma \ref{stro2v} we get 
\begin{align*}
N^{-1/2}\max_{1\leq k \leq N}&\left\|\bQ_N(k)-\left( g(k/N)\hat{\bGamma}_N(k)-\sum_{\ell=1}^{k-1}\hat{\bGamma}_N(\ell)[g((\ell+1)/N)-g(\ell/N) ]  \right)
\right\|\\
&\leq N^{-1/2}\max_{1\leq k\leq N}|g(k/N)|\max_{1\leq k \leq N}\left\|N^{-1/2}\bQ(k)-\hat{\bGamma}_N(k)\right\|\\
&\hspace{1cm}  +N^{-1/2}\max_{1\leq k \leq N}\left\|\sum_{\ell=1}^{k-1} \left(  \bQ(\ell)-\hat{\bGamma}_N(\ell)\right)\left[g((\ell+1)/N)-g(\ell/N) \right]  \right\|\\
&=o_P(1)\sup_{0\leq t \leq 1}|g(t)|+o_P(1)\sum_{\ell=1}^{N-1}\left|g((\ell+1)/N)-g(\ell/N) \right| \\
&=o_P(1)
\end{align*}
on account of Assumption \ref{totalvar}.  Due to Assumption \ref{totalvar}, the Jordan decomposition theorem (cf.\ Hewett and Stromberg, 1969, p.\ 266)  yields that there are two  non decreasing functions $g_1(t)$ and $g_2(t)$ such that $g(t)=g_1(t)-g_2(t)$. Let
$$
\bGamma_N(t)=N^{-1/2}\hat{\bGamma}_N(Nt),\;\;0\leq t \leq 1.
$$
 Next we write
\begin{align}\label{jo}
\sum_{\ell=1}^{k-1}&{\bGamma}_N(\ell/N)[g_1((\ell+1)/N)-g_1(\ell/N) ]\\
& =\sum_{\ell=1}^{k-1}{\bGamma}_N(\ell/N)d\int_{\ell/N}^{(\ell+1)/N} dg_1(t)  \notag\\
&=\int_0^{k/N}{\bGamma}_N(t)dg_1(t)+\sum_{\ell=1}^{k-1}\int_{\ell/N}^{(\ell+1)/N}[{\bGamma}_N(\ell/N)-\bGamma_N(x)]dg_1(x).\notag
\end{align}
Let 
$
\{\Gamma_{N,k}(t), 0\leq t \leq 1\}, 1\leq k \leq d
$
denote the $k$th coordinate of $\bGamma_N(t)$. The covariance function of $\{\bGamma_N(t), 0\leq t \leq 1\}$ is $\bG(\min(t,s))$ of \eqref{gammacov}. This   yields that 
\begin{align*}
\{\Gamma_{N,k}(t), 0\leq t \leq 1\}\stackrel{\cD}{=}\left\{  \sum_{\ell=1}^j\sigma_{k,\ell}W_\ell(\tau_\ell-\tau_{\ell-1})+\sigma_{k,j+1}W_{j+1}(t-\tau_j) , \tau_{j-1}<t\leq \tau_{j}, 1\leq j \leq M +1\right\}
\end{align*}
where $W_1, W_2, \ldots , W_{M+1}$ are independent Wiener processes and $\sigma_{k,1}, \sigma_{k,2}, \ldots, \sigma_{k,M+1}$ are positive constants. Using the rate of continuity of the Wiener processes (cf.\ Horv\'ath and Rice, 2024, p.\ 514)  and the monotonicity of $g_1(u)$ we get 
\begin{align*}
\max_{1< k\leq M+1}&\left|\sum_{\ell=1}^{k-1}\int_\ell^{\ell+1}[\Gamma_{N,k}(\ell/N)-\Gamma_{N,k}(x/N)]dg_1(x/N)\right\|\\
&\leq (g_1(1)-g_1(0)) \max_{1\leq k \leq M+1}\sup_{0\leq x \leq 1}\sup_{|u|\leq 1/N}|\Gamma_{N,k}(x)-\Gamma_{N,k}(x+u)|\\
&\leq (g_1(1)-g_1(0))\left( \sum_{k=1}^d\sum_{\ell=1}^{M+1}\sigma_{k,\ell}\right)\sum_{\ell=1}^{M+1}\sup_{0\leq x \leq 1}\sup_{|u|\leq 1/N}|W_\ell(x+u)-W_\ell(x)|\\
&=o_P(1).
\end{align*}
Thus we conclude 
\begin{align*}
\max_{1< k\leq M+1}&\left\|\sum_{\ell=1}^{k-1}\int_{\ell/N}^{(\ell+1)/N}[\bGamma_N(\ell/N)-\bGamma_N(u)]dg_1(u)\right\|
=o_P(1).
\end{align*}
Integration by parts yields 
\begin{align*}
\bGamma_N(k/N)g_1(k/N)-\int_0^{k/N}\bGamma_N(x)dg_1(x)=\int_0^{k/N}g_1(x)d\bGamma_N(x).
\end{align*}
For every $N$ the process 
$$
\int_0^tg_1(x)d\bGamma_N(x)\;\;\;\mbox{has continous sample paths with probability 1}
$$
and therefore 
$$
\sup_{0\leq t \leq 1}\left\|  \int_{\lf Nt\rf/N}^tg_1(u)d\bGamma_N(u) \right\|=o_P(1).
$$
We obtain  the same estimates when $g_1$ is replaced with $g_2$ in the computations above.
Defining 
$$
\bLamb_N(t)=\int_0^tg(x)d\bGamma_N(x),
$$
the result is proven.
\end{proof}

Using the definition of $\hat{\eps}_i$ we have 
\begin{align}\label{hata}
\sum_{i=1}^k\bx_i \hat{\eps}_i-\frac{k}{N}\sum_{i=1}^N\bx_i\hat{\eps}_i
&=\sum_{i=1}^k\bx_ig(i/N){\eps}_i-\frac{k}{N}\sum_{i=1}^N\bx_ig(i/N){\eps}_i   \\
&\hspace{1cm}-\sum_{i=1}^k\left(\bx_i\bx_i^\T-E\bx_i\bx_i^\T\right)(\hat{\bbe}_N-\bbe_0)    \notag\\
&\hspace{1cm}+\frac{k}{N}\sum_{i=1}^N\left(\bx_i\bx_i^\T-E\bx_i\bx_i^\T\right)(\hat{\bbe}_N-\bbe_0)  \notag  \\
&\hspace{1cm}-\left(\sum_{i=1}^kE\bx_i\bx_i^\T-\frac{k}{N}\sum_{i=1}^NE\bx_i\bx_i^\T\right)(\hat{\bbe}_N-\bbe_0).  \notag 
\end{align}
Also,
\begin{align}\label{x0}
\hat{\bbe}_N-\bbe_0=\left(\bX_N^\T\bX_N\right)^{-1}\sum_{i=1}^N\bx_i g(i/N)\eps_i.
\end{align}

\begin{lemma}\label{lesm3} If $H_0$, Assumptions \ref{as-m}--\ref{asvarnosd} and \ref{totalvar} hold, then 

\beq\label{x3}
\max_{1\leq k\leq N}\left\|\sum_{i=1}^k\left(\bx_i\bx_i^\T-E\bx_i\bx_i^\T\right)\right\|=O_P\left(N^{1/2}\right),
\eeq

\beq\label{x4}
\hat{\bbe}_N-\bbe_0=\left(\left(  \sum_{\ell=1}^{M+1}(\tau_\ell-\tau_{\ell-1})\bA_\ell   \right)^{-1}+o_P(1)\right)\frac{1}{N}\sum_{i=1}^N\bx_ig(i/N) \eps_i,
\eeq
and
\beq\label{x44}
\left\|\hat{\bbe}_N-\bbe_0\right\|=O_P\left(N^{-1/2}\right).
\eeq
\end{lemma}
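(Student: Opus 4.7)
\textbf{Proof plan for Lemma \ref{lesm3}.} My plan is to derive all three assertions from the tools already assembled: the segmented Gaussian approximation for the partial sums of $\bx_i\bx_i^\T$ established in Lemma \ref{stro4v} (equations \eqref{a23-2}--\eqref{a24}), the invertibility guaranteed by Assumption \ref{asminons}, and the Gaussian approximation for the weighted error partial sums in Lemma \ref{sesm2}. Routine bookkeeping across the $M+1$ stationary regimes does most of the work.

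For \eqref{x3}, the first step is to split $\sum_{i=1}^k(\bx_i\bx_i^\T-E\bx_i\bx_i^\T)$ into contributions from the complete segments $(m_{\ell-1},m_\ell]$ lying in $[1,k]$ plus the partial segment containing $k$. On each segment, \eqref{a23-2} provides an independent Gaussian process $\bDelta_{N,k}(\cdot)$ with approximation error $O_P(\ell^\zeta)$, $\zeta<1/2$, and the coordinates of each $\bDelta_{N,k}$ are Brownian motions, so $\max_{1\leq \ell\leq m_k-m_{k-1}}\|\bDelta_{N,k}(\ell)\|=O_P(N^{1/2})$ by the standard supremum bound for Brownian motion. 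Summing over the finitely many segments yields the claim.

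For \eqref{x4}, I start from \eqref{x0} and control the normalized design matrix. Writing $N^{-1}\bX_N^\T\bX_N=N^{-1}\sum_{i=1}^NE\bx_i\bx_i^\T+N^{-1}\sum_{i=1}^N(\bx_i\bx_i^\T-E\bx_i\bx_i^\T)$, the deterministic part equals $\sum_{\ell=1}^{M+1}N^{-1}(m_\ell-m_{\ell-1})\bA_\ell\to\sum_{\ell=1}^{M+1}(\tau_\ell-\tau_{\ell-1})\bA_\ell$ by Assumption \ref{as-m} and the stationarity within segments, while the stochastic part is $O_P(N^{-1/2})$ by \eqref{x3}. Assumption \ref{asminons} makes the limit matrix non-singular, and the continuous mapping theorem applied to matrix inversion gives $(N^{-1}\bX_N^\T\bX_N)^{-1}=(\sum_{\ell=1}^{M+1}(\tau_\ell-\tau_{\ell-1})\bA_\ell)^{-1}+o_P(1)$, which substituted into \eqref{x0} yields \eqref{x4}.

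Finally, \eqref{x44} follows by bounding the factor $N^{-1}\sum_{i=1}^N\bx_ig(i/N)\eps_i=N^{-1}\bQ_N(N)$. Lemma \ref{sesm2} gives $N^{-1/2}\bQ_N(N)=\bLamb_N(1)+o_P(1)$, and since $\bLamb_N(1)$ is Gaussian with covariance $\bH(1)$ (a finite, deterministic matrix in view of Assumption \ref{totalvar} and the finiteness of the $\bD_\ell$), we get $\|N^{-1}\bQ_N(N)\|=O_P(N^{-1/2})$; combining with \eqref{x4} completes the proof. The only substantive technical point is the non-trivial ingredient already imported from Lemma \ref{stro4v}, namely the strong Gaussian approximation \eqref{a23-2} piecewise across regimes; once that is in hand, the lemma is essentially a patching argument together with the continuous mapping theorem, so I do not anticipate a real obstacle beyond carefully tracking the segment boundaries.
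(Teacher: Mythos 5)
Your proposal is correct and follows essentially the same route as the paper: \eqref{x3} is obtained from the segmentwise Gaussian approximation of $\sum_i(\bx_i\bx_i^\T-E\bx_i\bx_i^\T)$ established in Lemma \ref{stro4v} (equations \eqref{a23-2}--\eqref{a24}), while \eqref{x4} and \eqref{x44} follow from the identity \eqref{x0} together with the bound $\|\sum_{i=1}^N\bx_ig(i/N)\eps_i\|=O_P(N^{1/2})$ supplied by Lemma \ref{sesm2}. Your write-up merely fills in details the paper leaves implicit (e.g.\ that $\sum_\ell(\tau_\ell-\tau_{\ell-1})\bA_\ell$ is invertible because the $\bA_\ell$ are positive semi-definite and at least one is non-singular by Assumption \ref{asminons}).
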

\begin{proof} The result in \eqref{x3} are proven in Lemma \ref{stro4v}. According to Lemma \ref{sesm2}
$$
\left\|\sum_{i=1}^N\bx_ig(i/N) \eps_i\right\|=O_P\left( N^{1/2} \right),
$$
\eqref{x4} and \eqref{x44} follow from \eqref{sesm2} and \eqref{x0}.
\end{proof}

\noindent
{\bf Proof of Theorem \ref{smo1}}.

Lemma \ref{lesm3} states that 
\begin{align*}
\max_{0\leq t\leq 1}\left\| \bQ_N(t)-\bLamb_N(t)   \right\|=o_P(1).
\end{align*}
Lemma \ref{stro4v} and \eqref{x44} yield 
$$
\max_{1\leq k\leq N}\left\|\sum_{i=1}^k\left(\bx_i\bx_i^\T-E\bx_i\bx_i^\T\right)(\hat{\bbe}_N-\bbe_0)\right\|    =O_P(1)
$$
and
$$
\max_{1\leq k \leq N}\left\|\frac{k}{N}\sum_{i=1}^N\left(\bx_i\bx_i^\T-E\bx_i\bx_i^\T\right)(\hat{\bbe}_N-\bbe_0) \right\|=O_P(1).
$$
Lemma \ref{lesm3} and \eqref{a22-2} imply
\begin{align*}
\sup_{0\leq t \leq 1}\left\|\frac{1}{N}\left(\sum_{i=1}^{\lf Nt\rf }E\bx_i\bx_i^\T-\frac{\lf Nt\rf}{N}\sum_{i=1}^NE\bx_i\bx_i^\T\right)N^{1/2}(\hat{\bbe}_N-\bbe_0)-\bv(t)N^{1/2}(\hat{\bbe}_N-\bbe_0)\right\|=o_P(1)
\end{align*}
and
\begin{align*}
\sup_{0\leq t\leq 1}\left\|\bv(t)N^{1/2}(\hat{\bbe}_N-\bbe_0)-\bv(t)\left(\sum_{\ell=1}^{M+1}(\tau_\ell-\tau_{\ell-1})\bA_\ell\right)^{-1}\bLamb_N(1)\right\|=o_P(1).
\end{align*}
Thuus we conclude
\begin{align*}
\sup_{0\leq t \leq 1}\left\|\bZ_N(t)-\left(\bLamb_N(t)-t\bLamb_N(1)-\bv(t) \left(\sum_{\ell=1}^{M+1}(\tau_\ell-\tau_{\ell-1})\bA_\ell\right)^{-1}\bLamb_N(1)\right)\right\|=o_P(1).
\end{align*}
Observing that
$$
\left\{  \bLamb_N(t)-t\bLamb_N(1)-\bv(t) \left(\sum_{\ell=1}^{M+1}(\tau_\ell-\tau_{\ell-1})\bA_\ell\right)^{-1}\bLamb_N(1) \right\}\stackrel{\cD}{=}\{\bUp(t), 0\leq t \leq 1\},
$$
the result is proven. \qed\\
\noindent

{\bf Proof of Theorem \ref{smo2}.}

We recall that in Lemma \ref{stromu} we defined  a sequence of Gaussian processes $\bGam_{N,1}(k), 1\leq k \leq m_1$
such that 
\beq\label{hus1}
\max_{1\leq k \leq m_1}\frac{1}{k^{\zeta}}\|\bR(k)-\bGam_{N,1}(k)\|=O_P(1).
\eeq
with some $\zeta<1/2$ and $\bGam_{N,1}(t)=0, E\bGam_{N,1}(t)\bGam^\T_{N,t}(s)=\bD_1\min(t,s)$. We use again \eqref{hata}. We observe that 
\begin{align*}
\max_{1\leq k \leq m_1}&\frac{1}{k^{\zeta}}\left\|\bQ_N(k)-\left( g(k/N)\bGam_{N,1}(k)-\sum_{\ell=1}^{k-1}\bGam_{N,1}(\ell)[g((\ell+1)/N)-g(\ell/N) ]  \right)
\right\|\\
&\leq \max_{1\leq k\leq N}|g(k/N)|\max_{1\leq k \leq m_1}\frac{1}{k^{\zeta}}\left\|\bR(k)-\bGam_{N,1}(k)\right\|\\
&\hspace{1cm}  +\max_{1\leq k \leq N}\frac{1}{k^{\zeta}}\left\|\sum_{\ell=1}^{k-1} \left(  \bR(\ell)-\bGam_{N,1}(\ell)\right)\left[g((\ell+1)/N)-g(\ell/N) \right]  \right\|\\
&=O_P(1)\sup_{0\leq t \leq 1}|g(t)|+\max_{1\leq k \leq m_1}\frac{1}{k^{\zeta}}\max_{1\leq \ell\leq k}\|\bR(\ell)-\bGam_{N,1}(\ell)\|\sum_{\ell=1}^{N-1}\left|g((\ell+1)/N)-g(\ell/N) \right| \\
&=O_P(1).
\end{align*}
The modulus of continuity of Wiener processes (cf.\ Horv\'ath and Rice, 2024, p.\ 514) yields in \eqref{jo}
\begin{align*}
\max_{1\leq k\leq m_1}&\frac{1}{k^{\zeta}}\left\|\sum_{\ell=1}^{k-1}\bGam_{N,1}(\ell)[g_1((\ell+1)/N)-g_1(\ell/N)]-\int_0^k\bGam_{N,1}(u)dg_1(u)\right\|\\
&\leq \max_{1\leq k\leq m_1} \frac{1}{k^{\zeta}} \left\|\sum_{\ell=1}^{k-1}\int_{\ell}^{\ell+1}[\bGam_{N,1}(\ell)-\bGam_{1,N}(x)]dg_1(x/N)\right\|\\
&=O_P((\log N)^{1/2}),
\end{align*}
since
\begin{align*}
\max_{1\leq k\leq m_1}\max_{|u|\leq 1}\left\| \bGam_{N,1}(k) -\bGam_{N,1}(k+u)\right\|=O_P((\log N)^{1/2}).
\end{align*}
Thus we conclude 
\begin{align*}
\max_{1\leq k\leq m_1}\frac{1}{k^{\zeta}}\left\| \bQ_N(k)-\int_0^kg(u/N)d\bGam_{N,1}(u)  \right\|=O_P(1).
\end{align*}
Now  we get
\begin{align*}
\sup_{1/N\leq t \leq \tau_1}&\frac{1}{t^{1/2}}\left\|  N^{-1/2}\bQ_N(Nt)-N^{-1/2}\int_0^{t} g(s)d\bGam_{N,1}(Ns)\right\|\\
&=\sup_{1/N\leq t \leq \tau_1}(Nt)^{-1/2}\left\|  N^{-1/2}\bQ_N(Nt)-N^{-1/2}\int_0^{t} g(s)d\bGam_{N,1}(Ns)\right\|\\
&\leq \max_{1\leq k \leq m_1}\frac{1}{k^\zeta}\left| \bQ_N(k)-\int_0^kg(u/N)d\bGam_{N,1}(u)   \right\| \sup_{1/N\leq t \leq \tau_1}(Nt)^{-1/2+\zeta}\\
&=O_P(1).
\end{align*}
Checking the covariances on can verify 
$$
\left\{ N^{-1/2}\int_0^{t} g(s)d\bGam_{N,1}(Ns), 0\leq t\leq \tau_1  \right\} \stackrel{\cD}{=}\left\{\bLamb(t),\;\;\;0\leq t\leq \tau_1\right\}.
$$
We use again that the coordinates of $\bLamb(t)$ are  distributed as constants times $W(\mca(t))$ on $0\leq t \leq \tau_1$. Hence by the law of the iterated logarithm 
\begin{align*}
\sup_{0<t\leq \tau_1}\left(\frac{1}{\mca(t)\log\log(1/\mca(t))}\right)^{1/2}\|  \bLamb(t)\|=O_P(1).
\end{align*}
Using Assumption \ref{ww1} we get for all $x>0$
\begin{align}\label{della1}
\lim_{\delta\to 0}P\left\{ \sup_{0<t\leq \delta}\frac{1}{t^{\alpha_1}}\|\bLamb(t)\|>x  \right\}=0
\end{align}
and therefore the approximation yields 
\begin{align}\label{della2}
\lim_{\delta\to 0}\limsup_{N\to\infty}P\left\{ \sup_{0<t\leq \delta}\frac{1}{t^{\alpha_1}}\|N^{-1/2}\bQ_N(t)\|>x  \right\}=0.
\end{align}
We get from \eqref{della1} and \eqref{della2}
\begin{align}\label{della3}
\lim_{\delta\to 0}P\left\{ \sup_{0<t\leq \delta}\frac{1}{t^{\alpha_1}}\|\bLamb(t)-t\bLamb(1)\|>x  \right\}=0
\end{align}
and
\begin{align}
\lim_{\delta\to 0}\limsup_{N\to\infty}P\left\{ \sup_{1/N\leq t\leq \delta}\frac{1}{t^{\alpha_1}}\left\|N^{-1/2}\bQ_N(Nt)-tN^{-1/2}\bQ_N(N)\right\|>x  \right\}=0.
\end{align}
By symmetry we also have

\begin{align}\label{della5}
\lim_{\delta\to 0}P\left\{ \sup_{1-\delta\leq t <1}\frac{1}{(1-t)^{\alpha_2}}\|\bLamb(t)-t\bLamb(1)\|>x  \right\}=0
\end{align}
and
\begin{align}\label{della4}
\lim_{\delta\to 0}\limsup_{N\to\infty}P\left\{ \sup_{1-\delta\leq t <1-1/N}\frac{1}{(1-t)^{\alpha_2}}\left\|N^{-1/2}\bQ_N(Nt)-tN^{-1/2}\bQ_N(N)\right\|>x  \right\}=0.
\end{align}
  Along the proof of Theorem \ref{smo1} one can verify 
\begin{align*}
\max_{1/N\leq t \leq 1-1/N}\frac{1}{t^{\alpha_1}(1-t)^{\alpha_2}}\left\|\left(\sum_{i=1}^{\lf Nt \rf}[\bx_i\bx_i^\T-E\bx_i\bx_i^\T]-t    \sum_{i=1}^{N}[\bx_i\bx_i^\T-E\bx_i\bx_i^\T]\right)      (\hat{\bbe}_N-\bbe_0)   \right\|
=o_P\left(N^{1/2}\right),
\end{align*}

\begin{align*}
\lim_{\delta\to 0}\limsup_{N\to \infty} P\left\{\sup_{1/N\leq t \leq \delta}\frac{N^{-1/2}}{t^{\alpha_1}}\left\|\left(\sum_{i=1}^kE\bx_i\bx_i^\T-\frac{k}{N}\sum_{i=1}^NE\bx_i\bx_i^\T\right)(\hat{\bbe}_N-\bbe_0)\right\|>x\right\}=0
\end{align*}
and
\begin{align*}
\lim_{\delta\to 0}\limsup_{N\to \infty} P\left\{\sup_{1-\delta\leq t \leq 1-1/N}\frac{N^{-1/2}}{(1-t)^{\alpha_2}}\left\|\left(\sum_{i=1}^kE\bx_i\bx_i^\T-\frac{k}{N}\sum_{i=1}^NE\bx_i\bx_i^\T\right)(\hat{\bbe}_N-\bbe_0)\right\|>x\right\}=0
\end{align*}
for all $x>0$. Since $\sup_{0<t<1/N}\|\bZ_N(t)\|=\sup_{1-1/N\leq t <1}\|\bZ_N(t)\|=0$, the result follows from Theorem \ref{smo1}.

\medskip
{\bf Proof of Theorem \ref{newed}.}
\begin{lemma}\label{lemgregus} If $H_0$, Assumptions \ref{as-m}--\ref{asminons},  \ref{totalvar} and \eqref{pow} hold, then we have 
\begin{align*}
\lim_{N\to \infty}P&\Biggl\{ a(\log N) \max_{1\leq k \leq N-1}\left[ \left(\bQ_N(k)-\frac{k}{N}\bQ_N(N)\right)^\T  \bar{\bH}^{-1}(k/N)  \left(\bQ_N(k)-\frac{k}{N}\bQ_N(N)\right)    \right]^{1/2}\\
&\hspace{1cm}\leq  x+b_d(\log N) \Biggl\}=\exp\left( -\frac{2\varrho+2}{2\varrho+1}e^{-x}  \right)
\end{align*}
for all $x$.
\end{lemma}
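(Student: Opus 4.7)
The plan is to reduce the discrete supremum to the supremum of an explicit Gaussian chi-type process via strong invariance, then apply the Darling--Erd\H{o}s theorem on the two end intervals after a time change, exploiting the asymptotic independence of the approximants on the two tails under Assumption \ref{asvarnosd}. First, using Lemma \ref{sesm2} together with a refinement near $t=0$ and $t=1$ along the lines of the proof of Theorem \ref{smo2}, I would approximate $\bQ_N(k)$ by $\sqrt{N}\,\bLamb_N(k/N)$ uniformly, with a rate that is $o_P(1/a(\log N))$ relative to the normalized statistic. This reduces the problem to the continuous supremum over $t\in(0,1)$ of $F_N(t) = [(\bLamb_N(t)-t\bLamb_N(1))^\T \bar{\bH}^{-1}(t)(\bLamb_N(t)-t\bLamb_N(1))]^{1/2}$, a standardized $d$-dimensional Gaussian quadratic form.

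I would then split $[1/N,1-1/N]$ into a near-zero tail $I_0=[1/N,\delta]$, a bulk $I_M=[\delta,1-\delta]$, and a near-one tail $I_1=[1-\delta,1-1/N]$ for a small fixed $\delta>0$. On $I_M$, $F_N(t)$ is continuous and $O_P(1)$, which after centering by $b_d(\log N)\sim 2\log\log N$ contributes nothing. On $I_1$, since $g(1)=c\neq 0$, both $\mathrm{Var}(\bLamb_N(1)-\bLamb_N(t))$ and $\bar{\bH}(t)$ behave as $c^2(1-t)\bD_{M+1}$ as $t\to 1$, so the time change $r=c^2(1-t)$ converts the functional to the Darling--Erd\H{o}s supremum of $\|W(r)\|^2/r$ for a $d$-dimensional Wiener process $W$ over $r\in[c^2/N,\,c^2\delta]$. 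Since $\log(\delta N)=\log N+O(1)$, Theorem A.3.1 of Cs\"org\H{o}--Horv\'ath (1997) yields an $\exp(-e^{-x})$ contribution at the normalization $(a(\log N),b_d(\log N))$.

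On $I_0$, I would apply the time change $s=\mca(t)=c^2 t^{2\varrho+1}/(2\varrho+1)$, so that $\bLamb_N(t)$ becomes a time-changed $d$-dimensional Wiener process and, in the logarithmic variable $u=\log s$, the standardized ratio is a stationary Gaussian chi process whose coordinates form an Ornstein--Uhlenbeck process with covariance $\exp(-(2\varrho+1)|u-v|/2)$. The interval $I_0$ maps to a $u$-range of length $(2\varrho+1)\log N+O(1)$. Applying the Darling--Erd\H{o}s theorem for this OU chi process and re-expressing the limit with the normalization $(a(\log N),b_d(\log N))$ via the expansions $b_d(c\log N)=b_d(\log N)+2\log c+o(1)$ and $a(c\log N)/a(\log N)\to 1$ produces a contribution $\exp(-\kappa(\varrho)e^{-x})$ with $\kappa(\varrho)=1/(2\varrho+1)$. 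By the asymptotic independence of the restrictions of $\bLamb_N$ to $I_0$ and $I_1$ (Bernoulli shift structure plus Assumption \ref{asvarnosd}), the two factors multiply to give $\exp\bigl(-\frac{2\varrho+2}{2\varrho+1}e^{-x}\bigr)$.

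The main technical obstacle is the precise identification of the near-zero constant $\kappa(\varrho)=1/(2\varrho+1)$. Near $t=0$ the dominant summand of $\bLamb_N(t)-t\bLamb_N(1)$ switches between $\bLamb_N(t)$ and $-t\bLamb_N(1)$ at $\varrho=1/2$, and one must verify that both regimes produce the same $\kappa(\varrho)$; this requires tracking sub-leading corrections in the Pickands-type approximation for the chi process and matching it correctly to the $(a(\log N),b_d(\log N))$ normalization. The remainder consists of controlling the approximation errors from the first step uniformly, including near the endpoints where the variance is degenerating, and in verifying that the truncations to $I_0$ and $I_1$ are indeed asymptotically independent.
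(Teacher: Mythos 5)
Your overall architecture --- strong invariance for $\bQ_N$, a split into two boundary intervals plus a negligible bulk, a logarithmic time change reducing each boundary piece to a Darling--Erd\H{o}s problem for a stationary chi/Ornstein--Uhlenbeck process, and multiplication of the two factors by independence of the approximating Gaussian processes on $(0,\tau_1]$ and $(\tau_M,1]$ --- is exactly the paper's route (compare Lemma \ref{stromu}, the approximation \eqref{laj}, the representation \eqref{edrep100}, and the renormalization steps \eqref{ched1}--\eqref{ched12}). The interval near $t=1$ is handled the same way in both arguments: there $\bar{\cA}_N$ is asymptotically linear, the effective horizon is $\log N+O(1)$, and the factor $\exp(-e^{-x})$ follows.

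The genuine gap is the identification of the near-zero constant, which you flag as the main obstacle and then resolve by assertion, and the assertion is inconsistent with your own setup. Under the time change $s=\mca(t)$ the standardized process is a \emph{standard} OU/chi process in $u=\log s$ (covariance $e^{-|u-v|/2}$) over a $u$-interval of length $\log\bigl(\mca(\delta)/\mca(1/N)\bigr)=(2\varrho+1)\log N+O(1)$; alternatively, in $u=\log t$ it is an OU process with covariance $e^{-(2\varrho+1)|u-v|/2}$ over an interval of length $\log N+O(1)$. You assert the accelerated covariance \emph{and} the stretched interval simultaneously, which double counts. Either consistent parametrization yields an expected number of high-level upcrossings equal to $(2\varrho+1)$ times that of the homoscedastic case, so the near-zero factor must be $\exp(-\kappa e^{-x})$ with $\kappa>1$: a constant $1/(2\varrho+1)<1$ cannot arise from a \emph{longer} effective horizon. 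Your own expansions confirm this: with $c=2\varrho+1$ and $M$ the supremum over horizon $c\log N$, one has $b_d(c\log N)-b_d(\log N)\to 2\log c$ and $\bigl(a(c\log N)-a(\log N)\bigr)M\to\log c$, so the event $\{a(c\log N)M\le x+b_d(c\log N)\}$ becomes $\{a(\log N)M\le x+\log c+b_d(\log N)\}$, i.e. $P\{a(\log N)M\le y+b_d(\log N)\}\to\exp(-c\,e^{-y})$ with $c=2\varrho+1$, not $1/(2\varrho+1)$. So the missing step is not a sub-leading Pickands correction but this elementary bookkeeping, and carried out consistently it does not deliver the constant you state (it gives $\exp(-(2\varrho+2)e^{-x})$ for the full maximum); this is precisely the delicate renormalization that the paper performs around \eqref{ched11}, and it is where your argument must be made airtight rather than asserted. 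A secondary issue: you should not claim a uniform $o_P(1/a(\log N))$ relative approximation rate over all of $(0,1)$; near $t=0$ the relative error of the invariance principle is only of order $k^{\zeta-1/2}$ (cf.\ \eqref{laj}), so the range $k\le\log N$ must be excised and shown to contribute only $O_P\bigl((\log\log\log N)^{1/2}\bigr)$, as in the paper.
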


\begin{proof} We can and will assume without loss of generality that $c=1$.
We use Lemma \ref{stromu} and define
$$
\bLamb_{N,1}(t)=\int_0^{t} g(x/N)d\bGam_{N,1}(x). 
$$
It is easy to see that 
$$
E\left(\bLamb_{N,1}(t)\right)^2=\bD_1\int_0^tg^2(x/N)dx=\bD_1\cA_{N,1}(t),\;\;\;\;1\leq t \leq m_1, 
$$
with
\begin{align*}
\cA_{N}(t)=\frac{N^{-2\varrho}}{2\varrho+1}t^{2\varrho+1}.
\end{align*}
Let
$$
\cz(x)=\sum_{i=1}^{\lf x\rf}\bx_i\eps_i,\;\;\; 1\leq x \leq N.
$$
Integration by parts yields for all $0\leq t \leq m_1$
\begin{align*}
&\int_0^tg(x/N)d\left(\cz(x)-\bGam_{N,1}(x)\right)\\
&=g(t/N)\left(\cz(t)-\bGam_{N,1}(t)\right)-c_2N^{-\varrho}\int_0^t \left(\cz(x)-\bGam_{N,1}(x)\right)\varrho x^{\varrho-1}dx,
\end{align*}
and Lemma \ref{stromu} implies
\begin{align*}
N^{\varrho}\max_{1\leq t \leq m_1}\frac{1}{t^\beta}\left\| \int_0^tg(x/N)d\left(\cz(x)-\bGam_{N,1}(x)\right)  \right\|=O_P(1)
\end{align*}
with $\beta=\varrho+\zeta$. Thus we have 
\beq\label{laj}
N^{\varrho}\max_{1\leq t \leq m_1}\frac{1}{t^\beta}\left\|\bQ_N(\lf t\rf]-\bLamb_{N,1}(t)\right\|=O_P(1).
\eeq
Checking the covariances one can verify that 
\begin{align}\label{edrep100}
& \left\{  \frac{1}{\cA_{N}(t)}\bLamb^\T_{N,1}(t)  \bD_1^{-1}  \bLamb_{N,1}(t), \;1\leq t\leq m_1\right\}  \notag  \\
&\stackrel{\cD}{=} \left\{ \frac{1}{\cA_{N}(t)}\sum_{i=1}^d W^2_i\left(\cA_{N}(t)\right), \;1\leq t \leq m_1   \right\},
\end{align}
where $\{W_1(t), 0\leq t \leq m_1\}, \ldots, \{W_d(t), 0\leq t \leq m_1\}$ are independent Wiener processes.  We note that 
$$
\lim_{t\to \infty}\frac{N^{-\varrho}t^{\varrho+\zeta}}{\cA^{1/2}_{N}(t)}=0
$$
for all $N$. The approximation in \eqref{laj} implies that
\beq\label{lajo}
\max_{1\leq x \leq m_1}\frac{1}{\cA^{1/2}_N(x)}\left\| \bQ_N(x)-\bLamb_{N,1}(x)    \right\|=O_P(1)
\eeq
and therefore arguing as in the proof of Lemma \ref{stromu}, we get  
\begin{align}\label{newed1}
&\left|\max_{1\leq k\leq m_1}\left[\frac{1}{\cA_N(k)}  \bQ_N^\T(k)\bD_1^{-1} \bQ_N(k)   \right] ^{1/2}-\sup_{1\leq t\leq m_1}
 \left[\frac{1}{\cA_N(t)} \bLamb^\T_{N,1}(t)\bD_1^{-1} \bLamb_{N,1}(t)  \right] ^{1/2}\right|    \notag   \\
&=o_P\left((\log\log N)^{-1/2}\right)
\end{align}
and
\begin{align}\label{newed2}
&\Biggl|\max_{1\leq k\leq m_1}\frac{1}{\cA_N(k)} \left[ \left(   \bQ_N(k)-\frac{k}{N}\bQ_N(N)\right)^\T\bD_1^{-1} \left(\bQ_N(k)-\frac{k}{N}\bQ_N(N) \right)   \right] ^{1/2}  \notag   \\
&\hspace{2cm} -\sup_{1\leq t\leq m_1}
\frac{1}{\cA_N(t)} \left[ \bLamb^\T_{N,1}(t)\bD_1^{-1} \bLamb_{N,1}(t)  \right] ^{1/2}\Biggl|     \notag \\
&=o_P\left((\log\log N)^{-1/2}\right).
\end{align}
The representation in \eqref{edrep100} implies
\begin{align*}
\sup_{1\leq t \leq m_1}&\left[\frac{1}{\cA_N(t)} \bLamb^\T_{N,1}(t)\bD_1^{-1} \bLamb_{N,1}(t)  \right] ^{1/2}\\
&\stackrel{\cD}{=} \sup_{1\leq t\leq m_1}\left( \frac{1}{\cA_{N}(t)}\sum_{i=1}^d W^2_i\left(\cA_{N}(t)\right)\right)^{1/2}
=\sup_{\cA_N(1)\leq x\leq \cA_N(m_1)}\left( \frac{1}{x}\sum_{i=1}^d W_i^2(x)   \right)^{1/2}.
\end{align*}
Using again Lemma A.3.1 of Cs\"org\H{o} and Horv\'ath (1997) (cf.\ also Theorem A.2.7  in Horv\'ath and Rice, 2024) we get 
\begin{align}\label{ched1}
\lim_{N\to \infty}P &  \left\{ a(\log u_N)\sup_{\cA_N(1)\leq t \leq \cA_N(m_1)}\left[\frac{1}{t}\sum_{i=1}^dW_i^2(t)\right]^{1/2} \leq x+b_d(\log u_N)\right\}  \notag \\
&=\exp(-e^{-x}), 
\end{align}
where
$$
u_N=\frac{\cA_N(m_1)}{\cA_N(1)}=\tau_1^{1+2\varrho}N^{1+2\varrho}.
$$
Elementary arguments yield
\begin{align*}
&\left( (2\log\log N)^{1/2}-(2\log\log u_N))^{1/2}    \right)(2\log\log u_N)^{1/2}\\
&\hspace{.6cm}=\left(\frac{\log\log u_N}{\log\log \iota_N}\right)^{1/2}\left(  \log\log N -\log\log u_N  \right)\\
&\hspace{.6cm}=\left(\frac{\log\log u_N}{\log\log \iota_N}\right)^{1/2}   \log[\log N /\log u_N]
\to -\log(2\varrho +1),
\end{align*}
where $\iota_N$ is between $N$ and $u_N$. Also,
\begin{align*}
\left| b_d(\log N)-b_d(u_N)  \right|\to 0.
\end{align*}
We observe that 
$$
\left( (2\log \log N)^{1/2}-(2\log \log u_N)^{1/2}   \right)\sup_{\cA_N(1)\leq t \leq \cA_N(m_1)}\left[\frac{1}{t}\sum_{i=1}^dW^2_i(t)\right]^{1/2}\stackrel{P}{\to} -\log(1+2\varrho).
$$
Hence we can rewrite \eqref{ched1} as 
\begin{align}\label{ched11}
\lim_{N\to \infty}P &   \Biggl\{ a(\log N)\sup_{\cA_N(1)\leq t \leq \cA_N(m_1)}\left[\frac{1}{t}\sum_{i=1}^dW_i^2(t)\right]^{1/2}\leq x+\log(2\varrho+1)+b_d(\log N)\Biggl\} \notag \\
& =\exp(-e^{-x}).
\end{align}
Thus, we conclude
\begin{align}\label{ched12}
\lim_{N\to \infty}&P\Biggl\{ a(\log N)\sup_{1\leq k \leq m_1}\left[\frac{1}{\cA_N(k)}\left( \bQ_N(k)-\frac{k}{N}\bQ_N(N)   \right)^\T\bD_1^{-1}\left( \bQ_N(k)-\frac{k}{N}\bQ_N(N)\right)   \right]^{1/2}\\
&\hspace{1cm}\leq x+\log(2\varrho+1)+b_d(\log N)\Biggl\}=\exp(-e^{-x}).\notag
\end{align}

Next we note
$$
\bQ_N(N)-\bQ_N(x)=\sum_{i=x+1}^Ng(i/N)\bx_i\eps_i=\int_x^Ng(t/N)\cz(t)=-\int_x^Ng(t/N)d(\cz_N(t)-\cz_N(N)).
$$
We also introduce 
$$
\bar{\cA}_N(x)=\int_x^Ng^2(t/N)dt=\frac{N^{-2\varrho}}{{2\varrho+1}}\left(N^{2\varrho+1}-x^{2\varrho+1}\right).
$$
Applying again Lemma \ref{stromu} we get along the lines of \eqref{newed1} and \eqref{newed2}  that
\begin{align*}
\Biggl|\max_{m_{M}\leq k \leq N-1}&\left[\frac{1}{\bar{\cA}_N(k)}\left(\bQ_N(N)-\bQ_N(k)\right)^\T\bD_{M+1}^{-1}\left(\bQ_N(N)-\bQ_N(k)\right)\right]^{1/2}  \\
& \hspace{1cm}-\max_{m_M\leq x \leq N-1}\left[\frac{1}{\bar{\cA}_N(k)}\bLamb^\T_{N,M+1}(x)\bD_{M+1}^{-1}\bLamb_{N,M+1}(x)\right]^{1/2}\Biggl|\\
&=o_P((\log\log N)^{-1/2})
\end{align*}
and
\begin{align*}
\Biggl|\max_{m_{M}\leq k \leq N-1}&\left[\frac{1}{\bar{\cA}_N(k)}\left(\bQ_N(k)-\frac{k}{N}\bQ_N(N)\right)^\T\bD_{M+1}^{-1}\left(\bQ_N(k)-\frac{k}{N}\bQ_N(N)\right) \right]^{1/2} \\
& \hspace{1cm}-\max_{m_M\leq x \leq N-1}\left[\frac{1}{\bar{\cA}_N(x)}\bLamb^\T_{N,M+1}(x)\bD_{M+1}^{-1}\bLamb_{N,M+1}(x)\right]^{1/2}\Biggl|\\
&=o_P((\log\log N)^{-1/2}),
\end{align*}
where
$$
\bLamb_{N,M+1}(x)=\int_x^N g(t/N)d\bGam_{N,M+1}(N-t),\;\;\;m_M\leq x \leq N.
$$
Observing that 
\begin{align*}
\max_{m_M\leq x \leq N-1}\left[\frac{1}{\bar{\cA}_N(x)}\bLamb^\T_{N,M+1}(x)\bD_{M+1}^{-1}\bLamb_{N,M+1}(x)\right]^{1/2}\stackrel{\cD}{=}\max_{\bar{\cA}_N(N-1)) \leq x \leq \bar{\cA}_N(m_M)}
\left[\frac{1}{x}\sum_{i=1}^d W^2_i(x)\right]^{1/2},
\end{align*}
we obtain by Lemma A.3.1 of Cs\"org\H{o} and Horv\'ath (1997)  (cf.\ also Horv\'ath and Rice, 2024, pp.\  520)  that
\begin{align} \label{greg1}
\lim_{N\to \infty}P & \left\{  a(\log v_N) \max_{\bar{\cA}_N(N-1)) \leq x \leq \bar{\cA}_N(m_M)}\left[\frac{1}{x}\sum_{i=1}^d W^2_i(x)\right]^{1/2}\leq x +b_d(\log v_N) \right\} \notag \\
& =\exp(-e^{-x})  
\end{align}
with
$$
v_N=\frac{\bar{\cA}(m_M)}{\bar{\cA}_N(N-1)}=N\frac{1-\tau_M^{2\varrho+1}}{2\varrho+1}+O(1),\;\;\;\mbox{as}\;\;\;N\to \infty.
$$
Elementary arguments yield
$$
|a(\log v_N)-a(\log N)|(\log\log N)\to 0,
$$

$$
|b_d(\log v_N)-b_d(\log N)|\to 0
$$
and
$$
|a(\log v_N)-a(\log N)|\max_{\bar{\cA}_N(N-1)) \leq x \leq \bar{\cA}_N(m_M)}\left[\frac{1}{x}\sum_{i=1}^d W^2_i(x)\right]^{1/2}\stackrel{P}{\to}0.
$$
Thus we can rewrite \eqref{greg1} as 
\begin{align*} 
\lim_{N\to \infty}P\left\{  a(\log N) \max_{\bar{\cA}_N(N-1)) \leq x \leq \bar{\cA}_N(m_M)}\left[\frac{1}{x}\sum_{i=1}^d W^2_i(x)\right]^{1/2}\leq x +b_d(\log N) \right\}=\exp(-e^{-x}). 
\end{align*}
It follows from Theorem \ref{smo2} that 
$$
\max_{m_1\leq t\leq m_M} \left(\bQ_N(k)-\frac{k}{N}\bQ_N(N)\right)^\T  \bar{\bH}_N^{-1}(k) \left(\bQ_N(k)-\frac{k}{N}\bQ_N(N)\right)=O_P(1).
$$
Since $\{\bLamb_{N,1}(t), 1\leq t\leq m_1\}$ and $\{\bLamb_{N,M+1}(t), m_M\leq t \leq N\}$ are independent, the proof is complete.
\end{proof}

\medskip
\noindent
{\it Proof of Theorem \ref{newed}.}  We follow the proof of Theorem \ref{coverd}. We use the representation in \eqref{hata} instead of \eqref{decomp} to show that 
\begin{align*}
\Biggl| &\max_{0<t<1}\left[ \bZ_N^\T(t)\bar{H}^{-1}(t)\bZ_N(t)  \right]^{1/2}  \\
&\hspace{1cm}-\max_{1\leq k \leq N-1}\left[\left(\bQ_N(k)-\frac{k}{N}\bQ_N(N)\right)^\T\bar{\bH}^{-1}(k/N)  \left(\bQ_N(k)-\frac{k}{N}\bQ_N(N)\right) \right]^{1/2}
\Biggl|\\
&=o_P\left((\log\log N)^{-1/2}  \right). 
\end{align*}
The result now follows from Lemma \ref{lemgregus}.
\qed

\medskip
\begin{lemma}\label{lemgreguska} If $H_0$, Assumptions \ref{as-m}--\ref{asminons}, \ref{totalvar} and \eqref{pow1} hold, then we have 
\begin{align*}
\lim_{N\to \infty}P&\Biggl\{ a(\log N) \max_{1\leq k \leq N-1}\left[ \left(\bQ_N(k)-\frac{k}{N}\bQ_N(N)\right)^\T  \bar{\bH}^{-1}(k/N)  \left(\bQ_N(k)-\frac{k}{N}\bQ_N(N)\right)    \right]^{1/2}\\
&\hspace{1cm}\leq  x+b_d(\log N) \Biggl\}=\exp\left( -2e^{-x}  \right)
\end{align*}
for all $x$.
\end{lemma}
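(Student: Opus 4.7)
The plan is to parallel the proof of Lemma~\ref{lemgregus}, adjusting the treatment of the boundary segments to reflect that under \eqref{pow1} the function $g(t) = c_1 + c_2 t^\varrho$ is bounded away from zero at $t=0$ (since $c_1 \neq 0$) and, implicitly, also at $t=1$. The key quantitative change is in the scale function $\cA_N(t) = \int_0^t g^2(x/N)\,dx$: under \eqref{pow} one had $\cA_N(t) \sim c^2 N^{-2\varrho} t^{2\varrho+1}/(2\varrho+1)$, a highly non-linear scale near $0$, whereas under \eqref{pow1},
\[
\cA_N(t) = c_1^2 t + \frac{2c_1 c_2}{\varrho+1} N^{-\varrho} t^{\varrho+1} + \frac{c_2^2}{2\varrho+1} N^{-2\varrho} t^{2\varrho+1} = c_1^2 t\bigl(1+o(1)\bigr)
\]
for $t$ small relative to $N$, and symmetrically $\bar{\cA}_N(x) = \int_x^N g^2(u/N)\,du \sim (c_1+c_2)^2 (N-x)$ for $x$ near $N$.

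First I would repeat verbatim the Abel-summation step that produced the strong approximation
\[
\max_{1\leq t \leq m_1}\frac{N^{\varrho}}{t^{\varrho+\zeta}}\bigl\|\bQ_N(\lfloor t \rfloor)-\bLamb_{N,1}(t)\bigr\| = O_P(1),
\]
with $\bLamb_{N,1}(t) = \int_0^t g(x/N)\,d\bGam_{N,1}(x)$, and its companion at the right endpoint, together with the standard time-change representation
\[
\left\{\frac{1}{\cA_N(t)}\bLamb_{N,1}^\T(t)\bD_1^{-1}\bLamb_{N,1}(t)\right\} \stackrel{\cD}{=} \left\{\frac{1}{\cA_N(t)}\sum_{i=1}^d W_i^2\bigl(\cA_N(t)\bigr)\right\},
\]
reducing the maximum over $[1,m_1]$ to a Darling--Erd\H{o}s statistic on the image interval $[\cA_N(1),\cA_N(m_1)]$ via Lemma A.3.1 of Cs\"org\H{o} and Horv\'ath (1997).

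The decisive step is the comparison of the normalizing constants. Setting $u_N = \cA_N(m_1)/\cA_N(1)$, the asymptotics above give $u_N \sim \tau_1 N$, so $\log u_N = \log N + O(1)$, whence $(2\log\log u_N)^{1/2} - (2\log\log N)^{1/2} \to 0$ and $b_d(\log u_N) - b_d(\log N) \to 0$. Consequently the initial segment contributes $\exp(-e^{-x})$ with \emph{no} additive shift of the form $\log(2\varrho+1)$ that appeared in Lemma~\ref{lemgregus}. The same argument, with $\bar{\cA}_N$ in place of $\cA_N$ and $\bGam_{N,M+1}$ in place of $\bGam_{N,1}$, handles the terminal segment and produces another factor $\exp(-e^{-x})$; the middle portion $[m_1,m_M]$ is $O_P(1)$ by Theorem~\ref{smo2}, hence negligible at the $(2\log\log N)^{1/2}$ scale.

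Because the approximating Gaussian processes on the two boundary segments are built from disjoint blocks of the innovation sequence, they are independent, and the two limits multiply to give $\exp(-2e^{-x})$. The transition from $\bLamb_{N,1}^\T\bD_1^{-1}\bLamb_{N,1}$ (and its right-end analogue) to the target statistic $(\bQ_N(k)-(k/N)\bQ_N(N))^\T \bar{\bH}^{-1}(k/N)(\bQ_N(k)-(k/N)\bQ_N(N))$ is routine: the centring $(k/N)\bQ_N(N)$ is negligible at either boundary because $k/N$ vanishes near $k=0$ and $1-k/N$ vanishes near $k=N$, while $\bar{\bH}(t)$ is asymptotic to $c_1^2 t \bD_1$ near $t=0$ and $(c_1+c_2)^2(1-t)\bD_{M+1}$ near $t=1$, matching the respective scalings. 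The main obstacle I anticipate is the careful accounting of the normalizing constants, verifying that the $\log(2\varrho+1)$ shift genuinely disappears under \eqref{pow1} rather than being absorbed into some lower-order error term, since the qualitative difference between Lemmas~\ref{lemgregus} and~\ref{lemgreguska} lies precisely in how $\log u_N$ compares with $\log N$.
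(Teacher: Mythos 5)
Your proposal is correct and follows essentially the same route as the paper: you establish the strong approximation on the boundary segments, observe that under \eqref{pow1} the scale $\cA_N(t)=c_1^2t(1+o(1))$ makes $u_N=\cA_N(m_1)/\cA_N(1)$ proportional to $N$ (so $\log u_N=\log N+O(1)$ and the $\log(2\varrho+1)$ shift vanishes), handle the right endpoint exactly as in Lemma~\ref{lemgregus}, and multiply the two independent boundary limits to obtain $\exp(-2e^{-x})$. The only cosmetic difference is that the paper records the exact constant in $u_N$ via the completed-square form of $\cA_N(t)$, but since only $\log u_N-\log N\to\mathrm{const}$ matters, your asymptotic $u_N\sim cN$ suffices.
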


\begin{proof} Under the present conditions 
\begin{align}\label{upp}
\cA_{N}(t)&=t\left[  c_1^2+\frac{2c_1c_2}{\varrho+1}N^{-\varrho} t^\varrho +\frac{c_2^2}{2\varrho+1}N^{-2\varrho}t^{2\varrho} \right]\\
&=t\left[ \left(c_1+\frac{c_2}{\varrho+1}\left(  \frac{t}{N} \right)^\varrho\right)^2   +\frac{c_2^2\varrho^2}{(2\varrho+1)(\varrho+1)^2}\left(  \frac{t}{N}   \right)^{2\varrho} \right],\;\;\;1\leq t\leq m_1,  \notag
\end{align}
and therefore if $c_1\neq 0$, 
$$
c_1^2t\leq \cA_N(t)\leq t\left[  \left(c_1+\frac{c_2}{\varrho+1}\tau_1^\varrho\right)^2+\frac{c_2^2\varrho^2}{(2\varrho+1)(\varrho+1)^2}\tau_1^{2\varrho}      \right].
$$
Hence \eqref{lajo} holds which implies \eqref{newed1}. We get from \eqref{upp} that
$$
u_N=\frac{\cA_N(m_1)}{\cA_N(1)}=N\frac{\tau_1}{c_1^2}\left[  \left(c_1+\frac{c_2}{\varrho+1}\tau_1^\varrho\right)^2+\frac{c_2^2\varrho^2}{(2\varrho+1)(\varrho+1)^2}\tau_1^{2\varrho}      \right]\left(1+O\left(N^{-\varrho}\right)\right).
$$
Now 
$$
|a(\log N)-a(\log u_N)|\sup_{1\leq t\leq m_1}\left[  \frac{1}{t}\sum_{i=1}^dW_i^2(t)  \right]^{1/2}\stackrel{P}{\to}0
$$
and
$$
|b_d(\log N)-b_d(\log u_N)|\to 0.
$$
Hence \eqref{ched12} can be  replaced with 
\begin{align*}
\lim_{N\to \infty}&P\Biggl\{ a(\log N)\sup_{1\leq k \leq m_1}\left[\frac{1}{\cA_N(k)}\left( \bQ_N(k)-\frac{k}{N}\bQ_N(N)   \right)^\T\bD_1^{-1}\left( \bQ_N(k)-\frac{k}{N}\bQ_N(N)\right)   \right]^{1/2}\\
&\hspace{1cm}\leq x+b_d(\log N)\Biggl\}=\exp(-e^{-x}).\notag
\end{align*}
We note that the proof,  when we maximize on $[m_M, N-1]$,  is the same as in the proof of Lemma \ref{lemgregus} and therefore
\begin{align*}
\lim_{N\to \infty}&P\Biggl\{ a(\log N)\sup_{m_M\leq k \leq N-1}\left[\frac{1}{\bar{\cA}_N(k)}\left( \bQ_N(k)-\frac{k}{N}\bQ_N(N)   \right)^\T\bD_1^{-1}\left( \bQ_N(k)-\frac{k}{N}\bQ_N(N)\right)   \right]^{1/2}\\
&\hspace{1cm}\leq x+b_d(\log N)\Biggl\}=\exp(-e^{-x}).\notag
\end{align*}
The rest of the arguments are the same as in the proof of Lemma \ref{lemgreguska}.
\end{proof}
  
\medskip
\noindent
{\it Proof of Theorem \ref{newedd}.}  It is the same as of Theorem \ref{newed}, we just need to replace Lemma \ref{lemgregus} with Lemma \ref{lemgreguska}.
\qed

\medskip

\section{Computation of Critical Values in Theorems~\ref{lin-var-1} and~\ref{secmoca1} }\label{secriticalvalue}
\subsection{Implementation of Theorem \ref{lin-var-1}}~\\
In order to implement the proposed tests in finite samples, in this section we discuss computational methods to obtain critical values of the test statistics in Theorem \ref{lin-var-1}. The critical values of the tests in Theorem \ref{lin-var-1} need to be simulated based on the covariance structure estimated from the data. According to Theorem \ref{lin-var-1}, when $0 \leq \kappa <1/2$, the critical values may be simulated as supremum functionals of a Gaussian process $\bar{\bGam}(t)$ with covariance structure $\bar{\bG}(t,s)$ shown in \eqref{barbgc}. Here we approximate the Gaussian process $\bar{\bGam}(t)$ using Karhunen--Lo\'{e}ve expansions. Given $\bar{\bG}(t,s)$ a non--negative definite function, we define eigenvalues $\lambda_1\geq \lambda_2 \geq \dots$ and the corresponding eigenfunctions $\boldsymbol{\phi}_1(t)$, $\boldsymbol{\phi}_2(t)$, $\dots$ valued in $\mathbb{R}^d$ such that
\begin{align*}
\int_0^1 \boldsymbol{\phi}_i^\top(t) \boldsymbol{\phi}_j(t) dt = \left\{\begin{matrix}
1, \quad i = j,
\\
0, \quad i\neq j,
\end{matrix}\right.
\end{align*}
and
\begin{align*}
\lambda_i \boldsymbol{\phi}_i(t) = \int_{0}^{1} \bar{\bG}(t,s) \boldsymbol{\phi}_i(s) ds,
\end{align*}
where the intergral is carried out coordinatewise. We then have by Mercer's theorem that
\begin{align*}
\bar{\bG}(t,s) = \sum_{\ell=1}^{\infty} \lambda_i \boldsymbol{\phi}_\ell(t)\boldsymbol{\phi}^\top_\ell(s).
\end{align*}
Since $\bar{\bG}$ is continuous, we have convergence of the above representation both in $L^2$ and also in supremum norm. The Gaussian process $\bar{\bGam}(t)$ then admits the following Karhunen--Lo\'{e}ve expansion
\begin{align*}
\bar{\bGam}(t) = \sum_{\ell=1}^{\infty} \lambda_\ell^{1/2} \mathcal{N}_\ell \boldsymbol{\phi}_\ell(t),
\end{align*}
where $\mathcal{N}_1$, $\mathcal{N}_2$, $\dots$ are independent standard normal random variables. In finite samples, we use the plug--in estimator $\tilde{\bG}_N(t,s)$ by estimating $\bar{\bG}(t,s)$ in \eqref{barbgc} with $\hat{\bG}_N(t)$ in \eqref{hatgdef} and $\bu_N(t)$ in \eqref{unt}. According to Theorem \ref{hetcos}, we thus have
\begin{align*}
\tilde{\lambda}_{i,N} \tilde{ \boldsymbol{\phi}}_{i,N}(t) = \int_{0}^{1} \tilde{\bG}_N(t,s) \tilde{\boldsymbol{\phi} }_{i,N}(s) ds, \quad 1\leq i \leq N,
\end{align*}
for $\tilde{\lambda}_{1,N}\geq \tilde{\lambda}_{2,N} \geq \dots$ and
\begin{align*}
\int_0^1 \tilde{\boldsymbol{\phi}}_{i,N}^\top(t) \tilde{\boldsymbol{\phi}}_{j,N}(t) dt = \left\{\begin{matrix}
1, \quad i = j,
\\
0, \quad i\neq j.
\end{matrix}\right.
\end{align*}
Hence, we have the following approximation,
\begin{align*}
\tilde{\bGam}(t) = \sum_{\ell=1}^{L} \lambda_{\ell,N}^{1/2} \mathcal{N}_\ell \boldsymbol{\phi}_{\ell,N}(t),
\end{align*}
where the process $\tilde{\bGam}(t)$ is approximately distributed as $\bar{\bGam}(t)$ when the truncation parameter $L$ is suitably large. The limits and corresponding critical values of the tests $V^{HET}_N(\kappa)$ and $Q^{HET}_N(\kappa)$ in Theorem \ref{lin-var-1} thereby can be approximated via $ \left\| \tilde{\bGam}(t)  \right\|/w(t)$ and $  \left\| \tilde{\bGam}(t)  \right\|_\infty/w(t)$, respectively, using simulation. In Online supplement, we also provide a finite-sample approximation of the standardized Darling--Erd\H{o}s type statistics in Theorem \ref{coverd}.
 
\subsection{Implementation of Theorem \ref{smo1} and \ref{smo2} }\label{secmoca1}~\\
In this section we suggest some possible ways simulate the critical values of tests based on Theorems \ref{smo1} and \ref{smo2}. 
It follows from the definition of the Gaussian process $\bLamb(t)$ that  $E(\bLamb(t)-t\bLamb(1))={\bf 0}$ and 
\begin{align*}
\bK(t,s)=E\left[(\bLamb(t)-t\bLamb(1))(\bLamb(s)-s\bLamb(1))^\T\right]=\bH(\min(t,s))-s\bH(t)-t\bH(s)+ts\bH(1).
\end{align*}
Statistical inference on $\bH(u)$ will imply immediately estimators for $\bK(u)$.  We suggest a long run kernel estimator for $\bH(u)$. Let
\begin{align*}
\hat{\bC}_{N,t}(\ell)=
\left\{
\begin{array}{ll}
\displaystyle \frac{1}{N}\sum_{i=1}^{\lf Nt\rf-\ell}(\bx_i\hat{\eps}_i)(\bx_{i+\ell}\hat{\eps}_{i+\ell})^\T, \;\;\mbox{if}\;\;0\leq \ell\leq \lf Nt\rf-1,
\vspace{.3cm}\\
\displaystyle \frac{1}{N} \sum_{i=-\ell+1}^{\lf Nt\rf}(\bx_i\hat{\eps}_i)(\bx_{i+\ell}\hat{\eps}_{i+\ell})^\T, \;\;\mbox{if}\;\;-(\lf Nt\rf  -1)\leq \ell <0
\end{array}
\right.
\end{align*}
and define
\begin{align*}
\hat{\bH}_N(t)=\sum_{h=-(\lf Nt \rf-1)}^{\lf Nt\rf -1}\cW\left( \frac{\ell}{h} \right)\hat{\bC}_{N,t}(\ell),
\end{align*}
where $\cW$ is a kernel.  We note if $t$ is close to 0, then $\hat{\bH}_N(t)$ is computed from few observations so it is not reliable. \\
The formulas for $\bH(u)$   greatly simplify 
\begin{assumption}\label{asuncor}
\begin{align}\label{uncor1}
\{\bx_i, -\infty<\infty\}\;\;\mbox{and}\;\;\{\eps_i, -\infty<i<\infty\}\;\; \mbox{are independent }
\end{align}
and
\begin{align}\label{uncor2}
E\eps_i\eps_j=0,\;\;i\neq j.
\end{align}
\end{assumption}
hold. Under these conditions
\begin{align*}
\bH(u)=\bA \ch(u),
\end{align*}
where 
$$
\ch(u) =\sum_{i=1}^{k-1} \sigma_i^2 \int_{\tau_{i-1}}^{\tau_i}g^2(u)du+\sigma_k^2\int_{\tau_{k-1}}^tg^2(u)du,\;\;\mbox{if} \;\;\tau_{k-1}<u\leq \tau_k, 1\leq k\leq M+1,
$$
and
$$
\sigma_i^2=\frac{1}{\tau_i-\tau_{i-1}}\sum_{j=\tau_{i-1}+1}^{\tau_i}E\eps_j^2, \;\;1\leq i \leq M+1.
$$
Under Assumption \ref{asuncor}   we suggest using
$$
\hat{\ch}_N(u)=\frac{1}{N}\sum_{j=1}^{\lf Nu\rf}\hat{\eps}_j^2
$$
and
$$
\tilde{\bH}_N(u)=\frac{1}{N}\sum_{j=1}^{\lf Nu\rf}\bx_j\bx_j^\T\hat{\eps}_j^2.
$$

\begin{theorem}\label{thuncor} If $H_0$, Assumptions \ref{as-m}, \ref{asminons}--\ref{asvarnosd}, \ref{totalvar} and \ref{asuncor} hold, then 
\beq\label{nab1}
\sup_{0\leq u \leq 1}|\hat{\ch}_N(u)-\ch(u)|\stackrel{P}{\to} 0
\eeq
and
\beq\label{nab2}
\sup_{0<u<1}\|\tilde{\bH}_N(u)-\bH(u)\|\stackrel{P}{\to}0.
\eeq
\end{theorem}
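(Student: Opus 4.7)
The plan is to handle both claims \eqref{nab1} and \eqref{nab2} through a common decomposition and then apply a uniform law of large numbers piecewise on the stationary sub-segments $(m_{\ell-1}, m_\ell]$. Under $H_0$ and \eqref{modelg}, one writes $\hat\eps_j = g(j/N)\eps_j - \bx_j^\T\bDelta_N$ with $\bDelta_N = \hat\bbe_N - \bbe_0$, and hence
\begin{align*}
\hat\eps_j^2 = g^2(j/N)\eps_j^2 - 2 g(j/N)\eps_j\bx_j^\T\bDelta_N + (\bx_j^\T\bDelta_N)^2.
\end{align*}
By Lemma \ref{lesm3}, $\|\bDelta_N\|=O_P(N^{-1/2})$, and by Lemma \ref{sesm2} together with the Bernoulli shift moment bound in Assumption \ref{as-lin-ber-v}, $\max_{1\le k\le N}\|\bQ_N(k)\|=O_P(N^{1/2})$ and $N^{-1}\sum_{j=1}^N\|\bx_j\|^2=O_P(1)$. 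Thus the cross and quadratic $\bDelta_N$-terms contribute uniformly $O_P(N^{-1})$ in \eqref{nab1}; the analogous terms for \eqref{nab2} are treated with the same bounds after one extra factor of $\|\bx_j\|^2$, which is finite-mean under Assumption \ref{as-lin-ber-v}. Consequently both convergences reduce to showing
\begin{align*}
\sup_{0<u<1}\Bigl|\tfrac{1}{N}\sum_{j=1}^{\lf Nu\rf} g^2(j/N)\eps_j^2-\ch(u)\Bigr|\stackrel{P}{\to}0,\qquad
\sup_{0<u<1}\Bigl\|\tfrac{1}{N}\sum_{j=1}^{\lf Nu\rf}\bx_j\bx_j^\T g^2(j/N)\eps_j^2-\bH(u)\Bigr\|\stackrel{P}{\to}0.
\end{align*}

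For the leading term in \eqref{nab1}, I split $\eps_j^2=(\eps_j^2-E\eps_j^2)+E\eps_j^2$. The deterministic Riemann-type sum $N^{-1}\sum_{j=1}^{\lf Nu\rf}g^2(j/N)E\eps_j^2$ converges uniformly on $[0,1]$ to $\int_0^u g^2(v)\sigma^2(v)dv=\ch(u)$ because $g$ has bounded total variation (Assumption \ref{totalvar}) and $E\eps_j^2$ is piecewise constant on the segments fixed by Assumption \ref{as-m}. For the stochastic remainder, Abel summation gives
\begin{align*}
\sum_{j=1}^{k}g^2(j/N)(\eps_j^2-E\eps_j^2)=g^2(k/N)S_k-\sum_{j=1}^{k-1}S_j\bigl[g^2((j+1)/N)-g^2(j/N)\bigr],\quad S_k=\sum_{j=1}^{k}(\eps_j^2-E\eps_j^2),
\end{align*}
and since $\{\eps_j^2-E\eps_j^2\}$ is a decomposable Bernoulli shift on each segment with $\nu/2>2$ moments, a maximal inequality (as in the proof of Lemma \ref{stromu}, or the M\'oricz-type bound used in Theorem \ref{hetcos}) yields $\max_{1\le k\le N}|S_k|=O_P(N^{1/2})$. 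Combined with the finite total variation of $g^2$ and boundedness of $g$, this bounds the partial sum above by $O_P(N^{1/2})$ uniformly in $k$, so the stochastic remainder is $O_P(N^{-1/2})$ uniformly in $u$, as required.

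For \eqref{nab2}, the same Abel-summation program is applied to the matrix-valued process $\bx_j\bx_j^\T\eps_j^2$, writing
\begin{align*}
\bx_j\bx_j^\T\eps_j^2-\bA_{\ell(j)}E\eps_j^2 = (\bx_j\bx_j^\T-\bA_{\ell(j)})\eps_j^2+\bA_{\ell(j)}(\eps_j^2-E\eps_j^2),
\end{align*}
where Assumption \ref{asuncor} (independence of $\{\bx_i\}$ and $\{\eps_i\}$) gives $E[\bx_j\bx_j^\T\eps_j^2]=\bA_{\ell(j)}E\eps_j^2$. Each centered term is a Bernoulli shift with $\nu/4>1$ moments (and independence makes the cross-covariances factor cleanly), so the same maximal inequality plus Abel summation with the BV weight $g^2$ produces a uniform $O_P(N^{-1/2})$ remainder. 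The deterministic part converges uniformly to $\int_0^u g^2(v)\bA(v)\sigma^2(v)dv=\bH(u)$, the identification $\bH=\bA\ch$ being the content of the simplification following Assumption \ref{asuncor}. The main obstacle throughout is upgrading pointwise LLN statements to uniform-in-$u$ statements in the presence of both the smooth weight $g^2(\cdot/N)$ and the abrupt change points $m_\ell$; the Abel summation device reduces this uniform control to a single supremum of partial sums on each stationary segment, which Assumption \ref{as-lin-ber-v} handles through the maximal inequalities already invoked elsewhere in the paper.
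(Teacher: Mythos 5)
The paper states Theorem \ref{thuncor} in the supplement without providing a proof, so there is no official argument to compare against; I can only assess your proposal on its own terms. Your architecture is sound and is the natural one: expand $\hat\eps_j^2$ via $\hat\eps_j=g(j/N)\eps_j-\bx_j^\T\bDelta_N$, kill the $\bDelta_N$-terms using $\|\bDelta_N\|=O_P(N^{-1/2})$ from Lemma \ref{lesm3} together with $\sup_u\|\bQ_N(\lf Nu\rf)\|=O_P(N^{1/2})$ and the moment bounds from Assumption \ref{as-lin-ber-v}, reduce to a uniform LLN for the weighted sums of $\eps_j^2$ and $\bx_j\bx_j^\T\eps_j^2$, and handle the BV weight $g^2$ by Abel summation so that uniformity in $u$ reduces to a single maximum of centered partial sums on each stationary segment. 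For \eqref{nab1} this is fully correct: $\eps_j^2-E\eps_j^2$ inherits decomposability with $\nu/2>2$ moments (the same Cauchy--Schwarz step as in \eqref{beva1}), so $\max_k|S_k|=O_P(N^{1/2})$ is available and your $O_P(N^{-1/2})$ remainder follows.

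The one genuine soft spot is the rate claim for \eqref{nab2}. The centered summand $\bx_j\bx_j^\T\eps_j^2-E[\bx_j\bx_j^\T\eps_j^2]$ has only $\nu/4$ finite moments under the standing assumption $\nu>4$, i.e.\ possibly fewer than two moments, so neither the M\'oricz-type maximal inequality used in Theorem \ref{hetcos} nor any CLT-rate bound delivers $\max_k\|\sum_{j\le k}(\cdot)\|=O_P(N^{1/2})$; that route would require $\nu\ge 8$, which is not assumed here. This does not sink the theorem, because only $o_P(1)$ is needed: on each segment $(m_{\ell-1},m_\ell]$ the summands form a stationary ergodic sequence with finite first moment, so the ergodic theorem gives the pointwise LLN, and since $u\mapsto N^{-1}\sum_{j\le Nu}a^\T\bx_j\bx_j^\T a\,\eps_j^2$ is nondecreasing for every fixed $a$ with a continuous limit, a P\'olya-type argument upgrades this to uniformity in $u$; Abel summation with the BV weight then yields the required $\sup_u$-statement with an $o_P(1)$ (rather than $O_P(N^{-1/2})$) remainder. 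You should either weaken the stated rate accordingly or add the hypothesis $\nu\ge 8$. A final cosmetic point: the identification of the limit as $\bH(u)$ uses $\bD_j=\bA_j\sigma_j^2$, which follows from Assumption \ref{asuncor} exactly as you indicate; your formula $\int_0^ug^2(v)\bA(v)\sigma^2(v)\,dv$ is in fact the correct general form even when the $\bA_j$ differ across segments.
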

An other possible estimator for $\bH(u)$ is 
$$
\bH_N^*(u)=\left(\frac{1}{N}\bX_N^\T\bX_N  \right)^{-1}\hat{\ch}_N(u).
$$

The next result is an immediate consequence of Theorem \ref{smo1} and the form of $\bH(u)$, if Assumption \ref{asuncor} holds:

\begin{corollary}\label{newco}  If $H_0$, Assumptions \ref{as-m}, \ref{asminons}--\ref{asvarnosd}, \ref{totalvar} \ref{ww1}  and \ref{asuncor} hold, then 
$$
\sup_{0\leq u\leq 1}\frac{1}{u^{2\alpha_1}}\frac{1}{(1-u)^{2\alpha_2}} \bP_N^\T(u)\left(\frac{1}{N}\bX_N^\T\bX_N\right)^{-1}\bP_N(u)\stackrel{\cD}{\to}\sup_{0\leq u \leq 1}
\frac{\|\bPsi(u)\|^2}{u^{2\alpha_1}(1-u)^{2\alpha_2}},
$$
where $\bPsi(u)=(\Psi_1(u), \Psi_2(u), \ldots, \Psi_d(u))^\T$,
$$
\Psi_k(u)=W_k(\ch(t))-tW_k(\ch(1)), \;\;\;0\leq t \leq 1, \;\;1\leq k \leq d,
$$
and $\{W_1(u), u\geq 1\}, \{W_2(u), u\geq 1\}, \ldots, \{W_d(u), u\geq 0\}$ are independent Wiener processes. 
\end{corollary}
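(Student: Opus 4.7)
The plan is to deduce Corollary \ref{newco} from Theorem \ref{smo2} by (i) simplifying the limit process under Assumption \ref{asuncor}, (ii) writing the Gaussian limit as a time-changed Brownian bridge after normalization by $\bA^{-1/2}$, and (iii) invoking continuous mapping with a consistent plug-in estimator of $\bA$. First I would note that Assumption \ref{asuncor} reduces the long-run matrices in \eqref{longd} to $\bD_\ell = \sigma_\ell^2 \bA_\ell$: \eqref{uncor1} factorizes the expectations of products into products of expectations, and \eqref{uncor2} wipes out all off-diagonal contributions in the double sum. The assumption implicit in $\bH(u) = \bA\,\ch(u)$ that $\bA_1 = \cdots = \bA_{M+1} = \bA$ additionally forces $\bv(t) = \mathbf{O}$ in \eqref{13a} (as observed in the discussion following Theorem \ref{lin-var-1}). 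The limit process of Theorem \ref{smo1} therefore collapses to $\bUp(t) = \bLamb(t) - t\bLamb(1)$, with $\bLamb$ a centered Gaussian process having independent increments and matrix covariance $\bA\,\ch(\min(t,s))$.

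Second, the independent-increment structure yields the distributional identity
\begin{equation*}
\{\bLamb(t),\,0\leq t \leq 1\} \stackrel{\cD}{=} \{\bA^{1/2}\bW(\ch(t)),\,0\leq t \leq 1\},
\end{equation*}
for $\bW = (W_1, \ldots, W_d)^\T$ a standard $d$-dimensional Wiener process, so that
\begin{equation*}
\bA^{-1/2}\bigl(\bLamb(t) - t\bLamb(1)\bigr) \stackrel{\cD}{=} \bW(\ch(t)) - t\,\bW(\ch(1)) = \bPsi(t).
\end{equation*}
Lemma \ref{lesm3} also shows that $\|N^{-1}\bX_N^\T\bX_N - \bA\| = O_P(N^{-1/2})$, hence $(N^{-1}\bX_N^\T\bX_N)^{-1} \inP \bA^{-1}$. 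Combined with the weighted weak convergence of $\bP_N$ (which under $\bv\equiv\mathbf{O}$ differs from $\bZ_N$ only by uniformly negligible estimation terms, cf.\ \eqref{hata}) furnished by Theorem \ref{smo2}, the continuous mapping theorem applied to the functional $f \mapsto \sup_u f^\T(u)\bA^{-1} f(u)/[u^{2\alpha_1}(1-u)^{2\alpha_2}]$ then yields the announced weak limit.

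The main obstacle is the uniform control at the endpoints $u=0$ and $u=1$: one must check that the $O_P(N^{-1/2})$ fluctuations of $(N^{-1}\bX_N^\T\bX_N)^{-1}$ around $\bA^{-1}$ do not blow up after division by $u^{2\alpha_1}(1-u)^{2\alpha_2}$. This is precisely what Assumption \ref{ww1} secures, by way of the boundary estimates \eqref{della2} and \eqref{della4} established in the proof of Theorem \ref{smo2}: they bound $\|\bP_N(u)\|/u^{\alpha_1}$ and $\|\bP_N(u)\|/(1-u)^{\alpha_2}$ in probability uniformly on shrinking neighbourhoods of $0$ and $1$, so multiplying $\bP_N$ against $o_P(1)$ perturbations of $\bA^{-1}$ remains negligible throughout $(0,1)$. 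On any compact interior $[\delta, 1-\delta]$ the continuous mapping is immediate from Theorem \ref{smo1}, and letting $\delta \to 0$ after the boundary contributions have been controlled delivers the corollary.
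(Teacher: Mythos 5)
Your argument is correct and follows exactly the route the paper intends: the paper offers no written proof, simply declaring the corollary ``an immediate consequence of Theorem \ref{smo1} and the form of $\bH(u)$'' under Assumption \ref{asuncor}, and your steps — reducing $\bD_\ell$ to $\sigma_\ell^2\bA_\ell$, noting the implicit homoscedasticity of the covariates so that $\bv\equiv\mathbf{O}$ and $\bH(u)=\bA\,\ch(u)$, representing $\bA^{-1/2}(\bLamb(t)-t\bLamb(1))$ as the time-changed bridge $\bPsi(t)$, and combining Theorem \ref{smo2} with the consistency of $N^{-1}\bX_N^\T\bX_N$ via continuous mapping — are precisely the details being elided. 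Your attention to the endpoint control supplied by Assumption \ref{ww1} (via the boundary estimates in the proof of Theorem \ref{smo2}) is the right place to be careful, and it is handled correctly.
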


Due to Theorem \ref{thuncor}, it is relatively simple to simulate $\bPsi(u)$, since $\ch(u)$ can be replaced with $\hat{\ch}_N(u)$.

\medskip

\section{Additional Monte Carlo simulation results}\label{sec_sim_supplement}
\subsection{Simulation settings}\hfill\label{sec-app-sim} 

For the purpose of comparing the tests under heteroscedasticity, we also denote the following test statistics
\begin{align*}
& V^{HO}_N(\kappa) =\sup_{0<t<1}\frac{1}{(t(1-t))^\kappa}\left(\bZ_N^\T(t)\hat{\bD}_N^{-1}\bZ_N(t)\right)^{1/2},\;\;\; 0 \leq \kappa<1/2, \mbox{ and }\\
& Q^{HO}_N(\kappa) =\sup_{0<t<1}\frac{1}{(t(1-t))^\kappa}\left\|\hat{\bD}_N^{-1/2}\bZ_N(t)\right\|_\infty, \;\;\; 0 \leq \kappa<1/2.
\end{align*}
We use this notation to remind the reader that these statistics are built under the assumption of homoscedasticity of the model errors and covariates, where their asymptotic results have been discussed in Horv\'ath et al. (2023).

The long run covariance matrix estimator for $\bD$ is defined as
\begin{align*}
\hat{\bD}_N=\sum_{\ell=-(N-1)}^{N-1}K\left( \frac{\ell}{h}  \right)\hat{\bga}_\ell,
\end{align*}
where the autocovariance matrix of weighted residuals at lag $\ell$ is estimated via:
\begin{align*}
\hat{\bga}_\ell=\left\{
\begin{array}{ll}
\displaystyle \frac{1}{N-\ell}\sum_{i=1}^{N-\ell}\bx_i\hat{\eps}_i\bx_{i+\ell}^\T\hat{\eps}_{i+\ell},\;\;\;\mbox{if}\;\;0\leq \ell<N,
\vspace{0.1cm}\\
\displaystyle\frac{1}{N-|\ell|}\sum_{i=-(\ell-1)}^N\bx_i\hat{\eps}_i\bx_{i+\ell}^\T\hat{\eps}_{i+\ell} ,\;\;\;\mbox{if}\;\;-N< \ell<0.
\end{array}
\right.
\end{align*}
As in Horv\'ath et al. (2023), the critical values of the tests $V^{HO}_N(\kappa)$ and $Q^{HO}_N(\kappa)$ may be obtained by simulating their limits when $0\leq \kappa<1/2$, and from the Darling-Erd\H{os} limit result when $\kappa=1/2$.

\subsection{Homoscedastic models}\hfill\label{sec-ho}

We first consider the case that the error terms $\epsilon_i$ in \eqref{sim-mod} are homoscedastic. We consider errors generated as: (i) (\textbf{Normal}) the error terms $\epsilon_i$ are i.i.d normal random variables:
\begin{equation*}
\epsilon_i \sim N(0,1),\quad 1 \leq i\leq N.
\end{equation*}
(ii) (\textbf{AR}) the error terms $\epsilon_i$ follow autoregressive (AR-1) process:
\begin{equation*}
\epsilon_i = 0.5 \epsilon_{i-1} + \varepsilon_i, \;\;  \mbox{with } \varepsilon_i \sim N(0,1), \;\; 1 \leq i\leq N.
\end{equation*}
(iii) (\textbf{GARCH}) the error terms $\epsilon_i$ follow a stationary GARCH(1,1) process defined by
\begin{equation*}
\epsilon_i = h_i^{1/2} \varepsilon_i, \mbox{ with }  h_i = 0.1 + 0.01 \epsilon_i^2 + 0.9 h_{i-1},\quad 1 \leq i\leq N,
\end{equation*}
and the $\varepsilon_i$'s are i.i.d. standard normal random variables.

Figure \ref{figure-homo-th22-early} displays the size and power of the Darling--Erd\H{o}s type tests based on $V^{HO}_N(1/2)$, $Q^{HO}_N(1/2)$, $V_N^{HET}(1/2)$ and $Q_N^{HET}(1/2)$ for these settings when $k^*=\lfloor 0.2N\rfloor$. Figure \ref{figure-homo-th22-mid} complements the results of these statistics for detecting a change $k^* = \lfloor 0.5N  \rfloor$. %

In this setting where the error process was taken to be homoscedastic, we observed that the empirical size was close to nominal for each tests considered, and improved with increasing sample size. In terms of power, we observe the expected increasing power of each of the tests as a function of the size of change parameterized by $\delta$, sample size, and centrality of the change point within the sample. We observed that the weighted CUSUM statistics in Horv\'ath et al. (2023), $V_N^{HO}(1/2)$ and $Q_N^{HO}(1/2)$, exhibited generally higher power to detect changes points than $V^{HET}_N(1/2)$ and $Q^{HET}_N(1/2)$ in small samples, while their power all quickly converges to unity in large samples. The test $Q^{HET}_N(1/2)$ was somewhat over-sized in smaller samples when the model is generated with \textbf{AR} errors, but this effect diminished with larger samples. 

\begin{figure}[h]
	\centering
	\caption{Rejection rates of $V^{HO}_N(1/2)$, $Q^{HO}_N(1/2)$, $V^{HET}_N(1/2)$ and $Q^{HET}_N(1/2)$ at 95\% significance level with sample size $N=125$ (first row) $N=250$ (second row), and $500$ (third row). The error term follows {\bf homoscedastic} \textbf{Normal}, \textbf{AR} and \textbf{GARCH} errors. The change point $k^* = \lfloor 0.2N  \rfloor$.}
	\label{figure-homo-th22-early}
	\includegraphics[width=14.1cm]{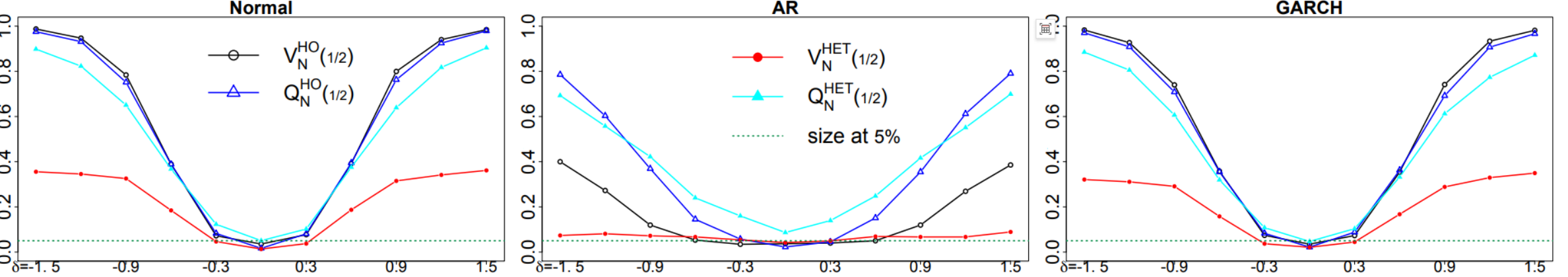}
	\includegraphics[width=14.2cm]{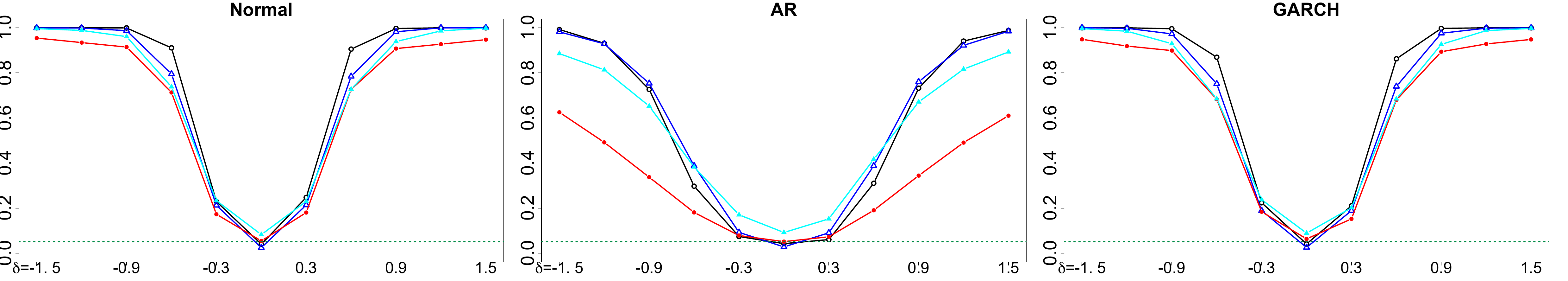}
	\includegraphics[width=14.2cm]{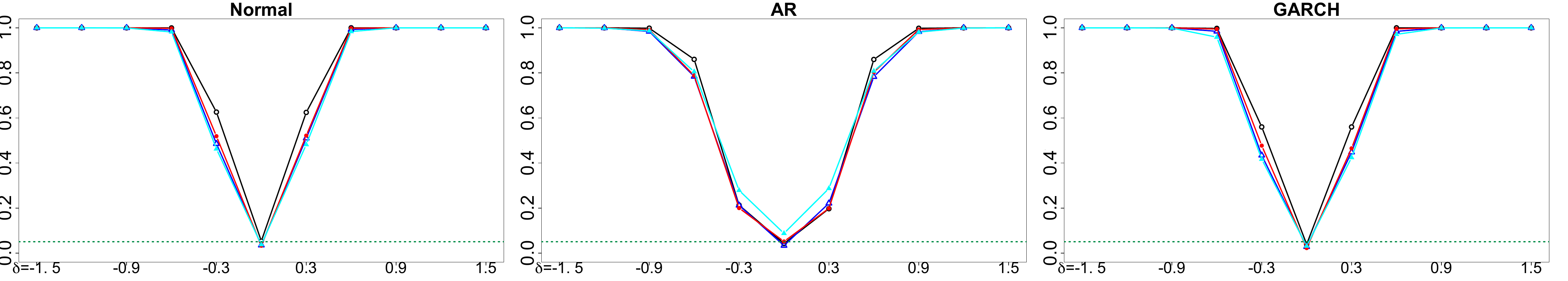}
\end{figure}

\clearpage

\begin{figure}[H]
	\centering
	\caption{Rejection rates of $V^{HO}_N(1/2)$, $Q^{HO}_N(1/2)$, $V_N^{HET}(1/2)$ and $Q_N^{HET}(1/2)$ at 95\% significance level with sample size $N=125$ (first row) $N=250$ (second row), and $500$ (third row). The error term follows {\bf homoscedastic} \textbf{Normal}, \textbf{AR} and \textbf{GARCH} errors. The change point $k^* = \lfloor 0.5N  \rfloor$.}
	\label{figure-homo-th22-mid}
	\includegraphics[width=14.2cm]{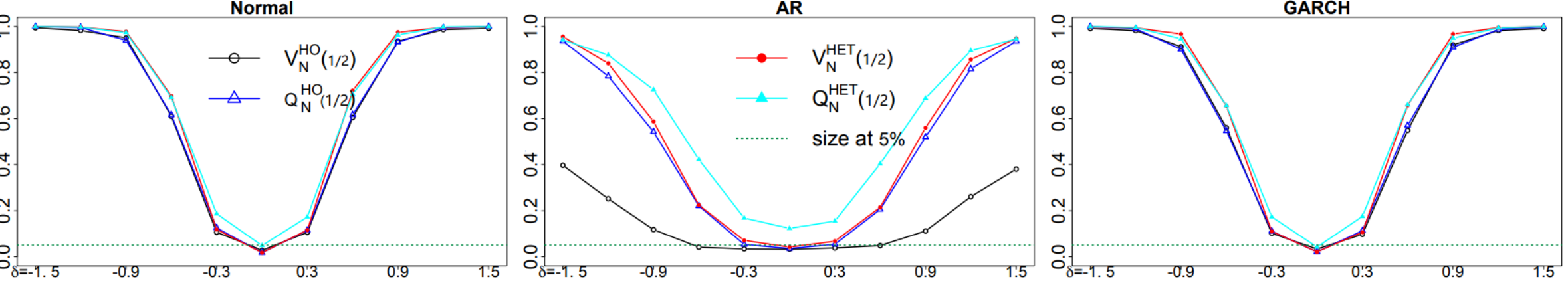}
	\includegraphics[width=14.2cm]{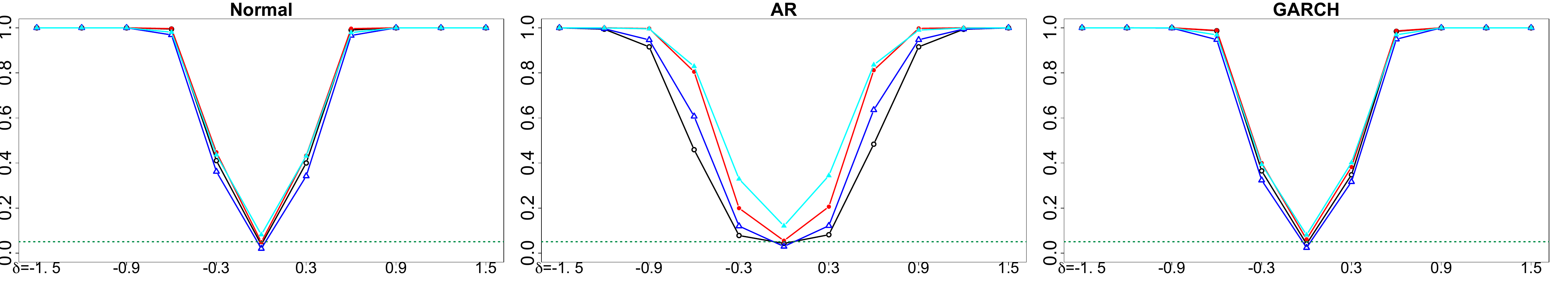}
	\includegraphics[width=14.2cm]{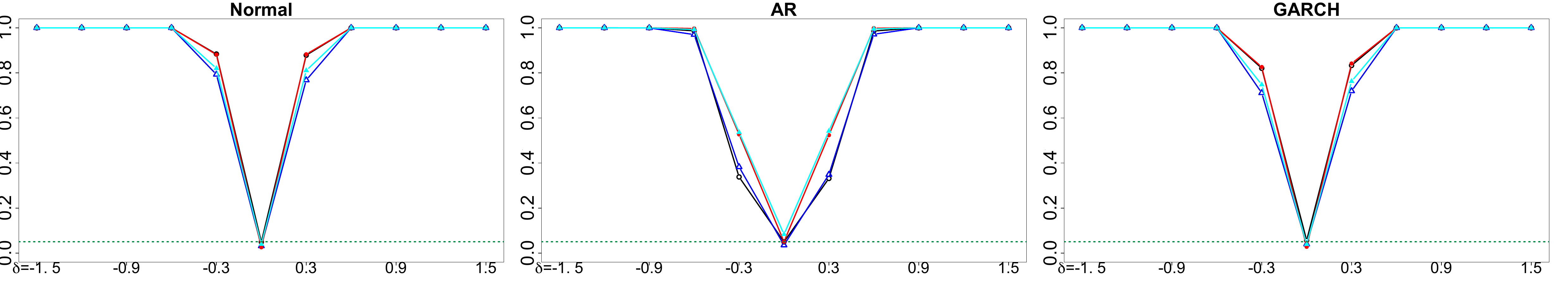}
\end{figure}

\subsection{Heteroscedastic models}\hfill\label{sec-hetesim}
 
In this subsection, we first consider three data generating process for which the errors are heteroscedastic. These are similar to \textbf{Normal}, \textbf{AR} and \textbf{GARCH} above, but include a change in the variance:

(iv) (\textbf{Normal}) the error terms $\epsilon_i$ are i.i.d normal random variables:
\begin{equation*}
\epsilon_i = \left\{\begin{matrix}
N(0,0.5),\quad 1\leq i\leq \lfloor m^* N  \rfloor,
\\
N(0,2), \quad \lfloor m^* N  \rfloor<i\leq N.
\end{matrix}\right.
\end{equation*}

(v) (\textbf{AR}) the error terms $\epsilon_i$ follow autoregressive (AR-1) process:
\begin{equation*}
\epsilon_i = \left\{\begin{matrix}
0.5 \epsilon_{i-1} + \varepsilon^{(1)}_i,\;\;  \mbox{with } \varepsilon^{(1)}_i \sim N(0,0.5), \;\;  1\leq i\leq \lfloor m^* N  \rfloor,
\\
0.5 \epsilon_{i-1} + \varepsilon^{(2)}_i,\;\;  \mbox{with } \varepsilon^{(2)}_i \sim N(0,2), \;\; \lfloor m^* N  \rfloor<i\leq N.
\end{matrix}\right.
\end{equation*}

(vi) (\textbf{GARCH}) the error terms $\epsilon_i$ follow a stationary GARCH(1,1) process defined by
\begin{equation*}
\epsilon_i = h_i^{1/2} \varepsilon_i, \mbox{ with }  h_i =
\left\{\begin{matrix}
0.5 + 0.01 \epsilon_i^2 + 0.9 h_{i-1},\;\;1\leq i\leq \lfloor m^* N  \rfloor,
\\
2 + 0.01 \epsilon_i^2 + 0.9 h_{i-1}, \;\; \lfloor m^* N  \rfloor<i\leq N.
\end{matrix}\right.
\end{equation*}
and the $\varepsilon_i$'s are i.i.d. standard normal random variables. In this section, we always set $m^*=\lf 0.5N \rf$, i.e., there is a variance change in the middle of the sample.

We also consider heteroscedastic covariates that $\mathbf{x}_i = (1, x_{2,i})^\top$ are generated as:
\begin{equation}\label{eq-hete-x}
x_{2,i} = \left\{\begin{matrix}
N(0,0.3),& i\leq N/3,
\\
N(0,2), & (N/3+1)\leq i \leq 2N/3,
\\
N(0,1), & i > 2N/3.
\end{matrix}\right.
\end{equation}

Figure \ref{figure-hetex-th22-early}--\ref{figure-hetex-th22-mid} show the empirical size and power curves of candidate tests. We found the tests $V^{HO}_N(1/2)$ and $Q^{HO}_N(1/2)$ were over-sized, and this distortion became more apparent in larger samples. The proposed tests $V^{HET}_N(1/2)$ and $Q^{HET}_N(1/2)$ exhibited approximately nominal size, higher power even for changes occurring closer to the boundary and smaller sample size.

\begin{figure}[h]
	\centering
	\caption{Rejection rates of $V^{HO}_N(1/2)$, $Q^{HO}_N(1/2)$, $V^{HET}_N(1/2)$ and $Q^{HET}_N(1/2)$ at 95\% significance level with sample size $N=125$ (first row) $N=250$ (second row), and $500$ (third row). The models are generated with heteroscedastic covariates and error terms that follow heteroscedastic \textbf{Normal}, \textbf{AR}, and \textbf{GARCH} distributions. The change point $k^* = \lfloor 0.2N  \rfloor$.}
	\label{figure-hetex-th22-early}
	\includegraphics[width=14.2cm]{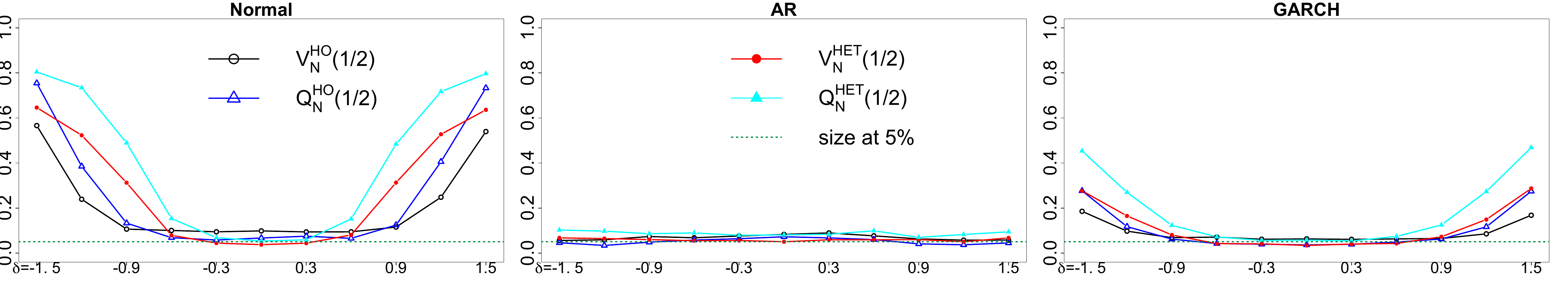}
	\includegraphics[width=14.2cm]{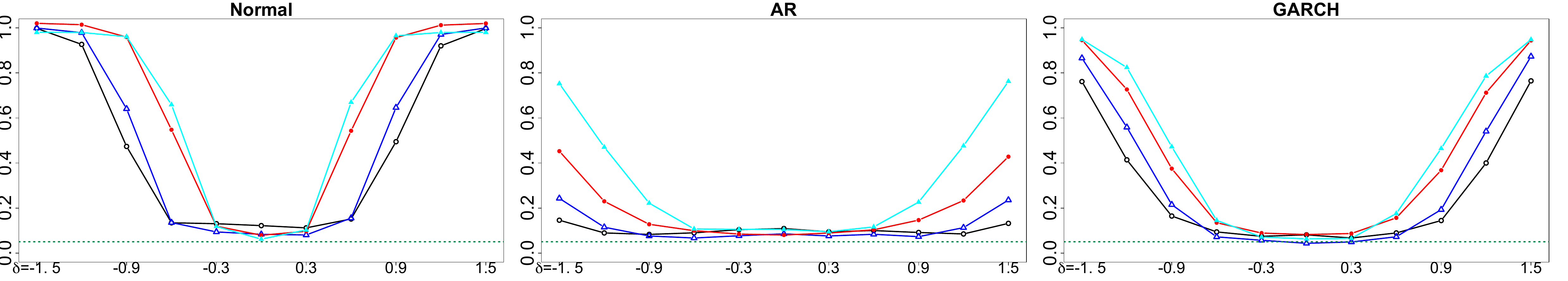}
	\includegraphics[width=14.2cm]{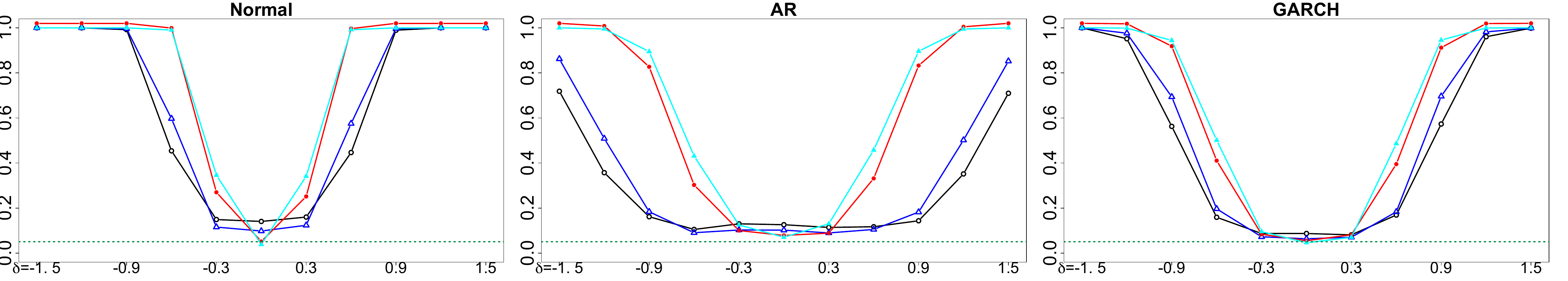}
\end{figure}

\clearpage

\begin{figure}[H]
	\centering
	\caption{Rejection rates of $V^{HO}_N(1/2)$, $Q^{HO}_N(1/2)$, $V_N^{HET}(1/2)$ and $Q_N^{HET}(1/2)$ at 95\% significance level with sample size $N=125$ (first row) $N=250$ (second row), and $500$ (third row). The models are generated with heteroscedastic covariates and error terms that follow heteroscedastic \textbf{Normal}, \textbf{AR}, and \textbf{GARCH} distributions. The change point $k^* = \lfloor 0.5N  \rfloor$.}
	\label{figure-hetex-th22-mid}
	\includegraphics[width=14.2cm]{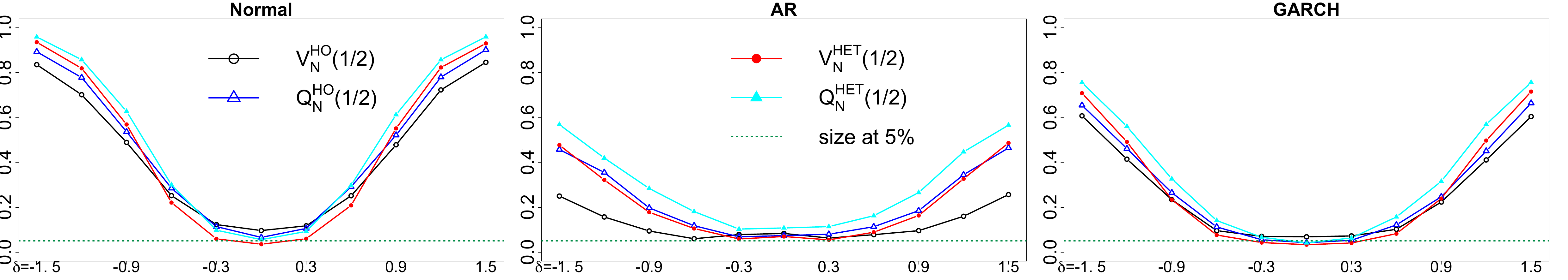}
	\includegraphics[width=14.2cm]{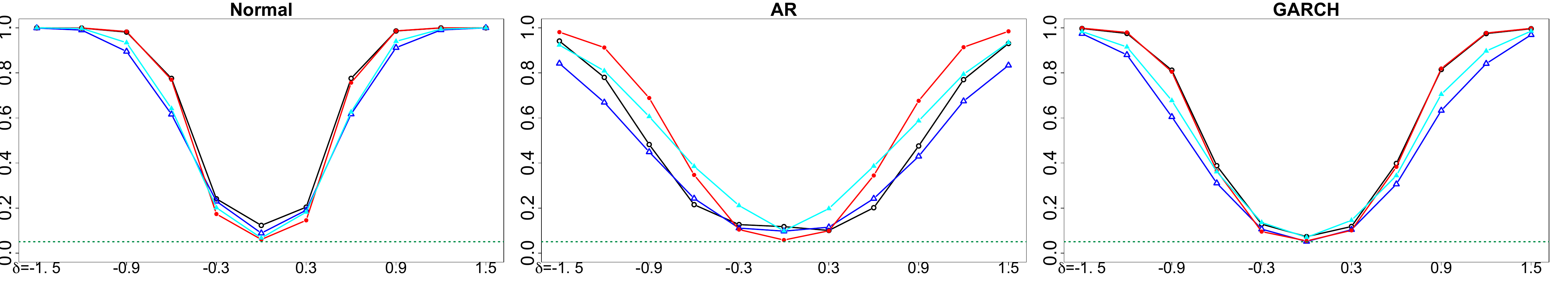}
	\includegraphics[width=14.2cm]{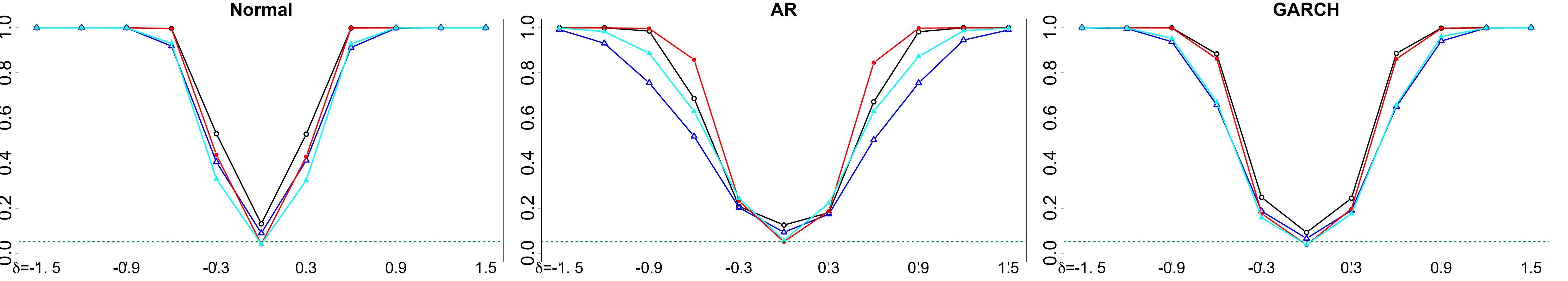}
\end{figure}

\subsection{Effects of $\kappa$, change locations and signal-to-noise ratio}\label{kap}\hfill

In this section, we focus on investigating the effect of several commonly discussed issues on change point tests\footnote{In an unreported simulation study, we study the effect of the choice of truncation parameter $L$ in computing the critical values of the tests $V_N^{HET}(\kappa)$ and $Q_N^{HET}\kappa)$, for $0\leq \kappa<1/2$. By simulation the limits under $H_0$ with $L\in[1, 3, 5, 10, 20]$, we find setting $L=5$ is adequate to approximate the limit distribution. Hence, in the present simulation, we universally set $L=5$. }. We first study the effect of the choice of the weight tunning parameter $\kappa$ for the weighted CUSUM tests $V^{HO}_N(\kappa)$ and $Q^{HO}_N(\kappa)$, $V_N^{HET}(\kappa)$ and $Q_N^{HET}\kappa)$, for $0< \kappa\leq 1/2$. Figures \ref{figure-weight-early} and \ref{figure-weight-mid} display the power curves of the tests when there occurs an early and a middle change, respectively, with the weight parameter $\kappa=\{0.15, 0.3, 0.45\}$. Looking over the over-size problem in $V^{HO}_N(\kappa)$ and $Q^{HO}_N(\kappa)$ as we have discussed in the previous subsections, we find the tuning parameter $\kappa$ with a value close to $0$ or $1/2$ gain higher power in detecting center or boundary change points, respectively. This effect is more manifest in smaller samples, and we thereby omit the results in large samples.

\begin{figure}[H]
	\centering
	\caption{Power functions of the weighted CUSUM tests $V^{HO}_N(\kappa)$, $V_N^{HET}(\kappa)$, $Q^{HO}_N(\kappa)$ and $Q_N^{HET}(\kappa)$ with nominal significance level  5\%, with heteroscedastic {\bf Normal}, {\bf AR} and {\bf GARCH} errors, $\kappa=0.15, 0.30, 0.45$ and sample size $N=125$ when the change occurs at $k^*= \lfloor 0.2N  \rfloor$.}
	\label{figure-weight-early}
	\includegraphics[width=14.5cm]{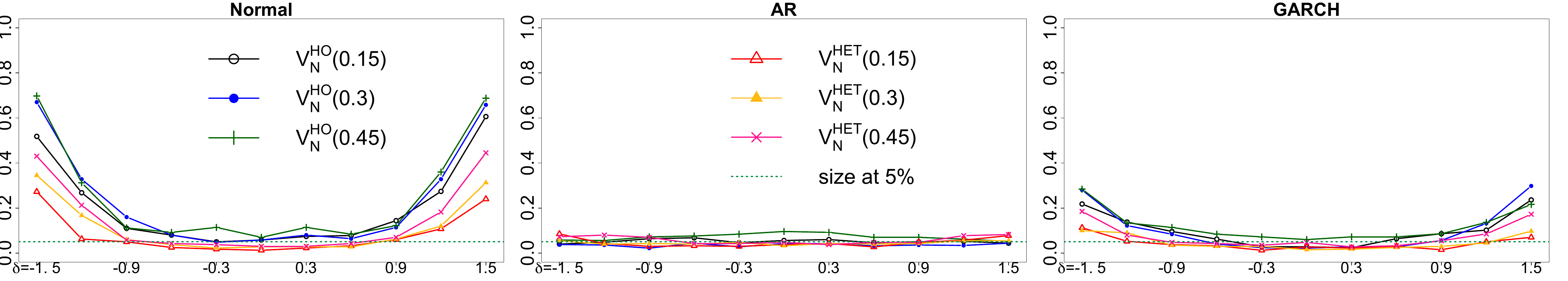}
	\includegraphics[width=14.5cm]{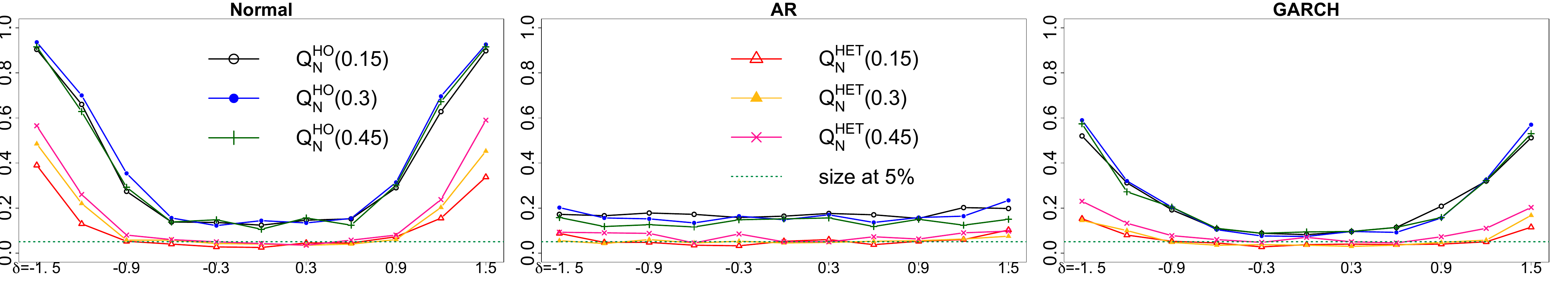}
\end{figure}

\begin{figure}[H]
	\centering
	\caption{Power functions of the weighted CUSUM tests $V^{HO}_N(\kappa)$, $V_N^{HET}(\kappa)$, $Q^{HO}_N(\kappa)$ and $Q_N^{HET}(\kappa)$ with nominal significance level 5\%, with heteroscedastic {\bf Normal}, {\bf AR} and {\bf GARCH} errors, $\kappa=0.15, 0.30, 0.45$ and sample size $N=125$ when the change occurs at $k^*= \lfloor 0.5N  \rfloor$.}
	\label{figure-weight-mid}
	\includegraphics[width=14.5cm]{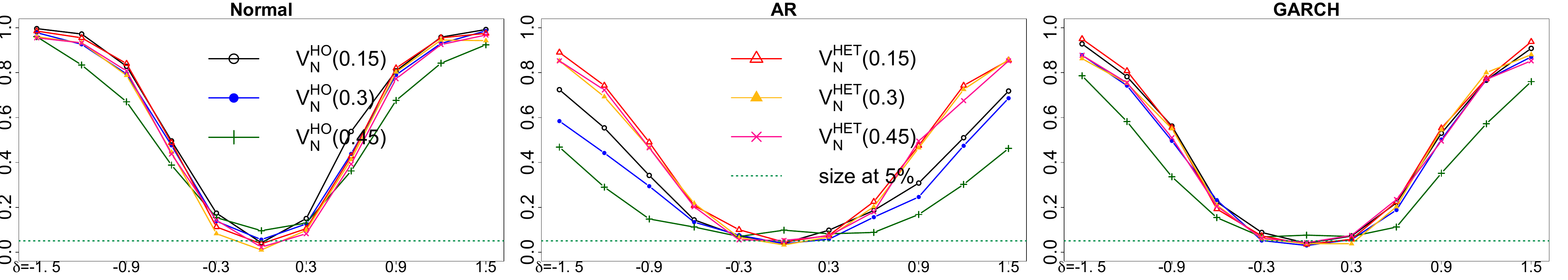}
	\includegraphics[width=14.5cm]{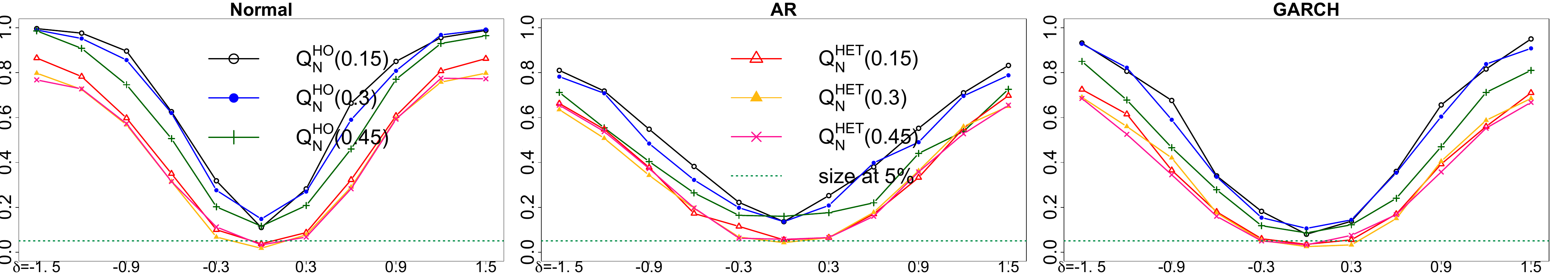}
\end{figure}

Second, we investigate the effect of the location of variance change in the heteroscedascity structure, i.e., $m^*$ on our weighted CUSUM tests $V_N^{HET}(0)$, $Q_N^{HET}(0)$ and Darling--Erd\H{o}s type tests $V_N^{HET}(1/2)$ and $Q_N^{HET}(1/2)$. In the previous heteroscedastic data generating processes, we set $m^*=0.5$. Here we vary the value of $m^*\in[0.1, 0.2, 0.3, 0.4, 0.5, 0.6, 0.7, 0.8, 0.9]$. The size of change point $\delta=0.6$ in this simulation. Figure \ref{figure-var-loca} shows the effect on the power curve. Remarkably, we find that an identical change point in the coefficient and variance, i.e., both $k^*$ and $m^*$ occur in the middle of the sample, will reduce some detecting power, while diverging $m^*$ from $k^*$ will increase the detecting power. Hence, the results displayed in Section \ref{sec-hetesim}--\ref{sec-ho} are based on the lowest detecting power given $k^* = m^* = 0.5$. This effect gets severe when the model is generated with heteroscedastic {\bf AR} and {\bf GARCH} errors. To compare the weighted CUSUM tests $V_N^{HET}(0)$, $Q_N^{HET}(0)$ with $V_N^{HET}(1/2)$ and $Q_N^{HET}(1/2)$, we find the Darling--Erd\H{o}s type tests suffer less power reduction effect than the weighted CUSUM tests.

\begin{figure}[H]
	\centering
	\caption{Rejection rates of $V_N^{HET}(0)$, $Q_N^{HET}(0)$, $V_N^{HET}(1/2)$ and $Q_N^{HET}(1/2)$ at 95\% significance level with sample size $N=250$ (first row) $N=500$ (second row). The error term follows heteroscedastic \textbf{Normal}, \textbf{AR} and \textbf{GARCH} errors. The change point $k^*= \lfloor 0.5N  \rfloor$ and the size of change point $\delta=0.6$. X-axis labels the location of variance change $m^*$.}
	\label{figure-var-loca}
	\includegraphics[width=14.5cm]{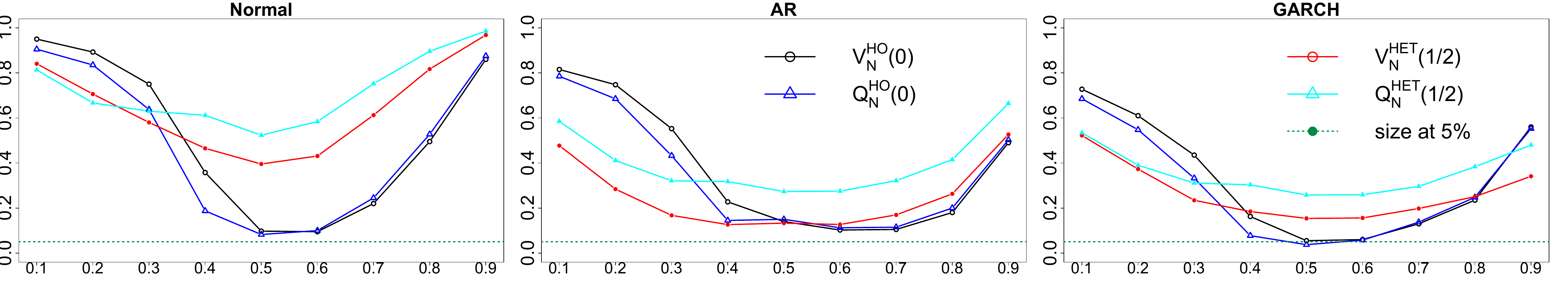}
	\includegraphics[width=14.5cm]{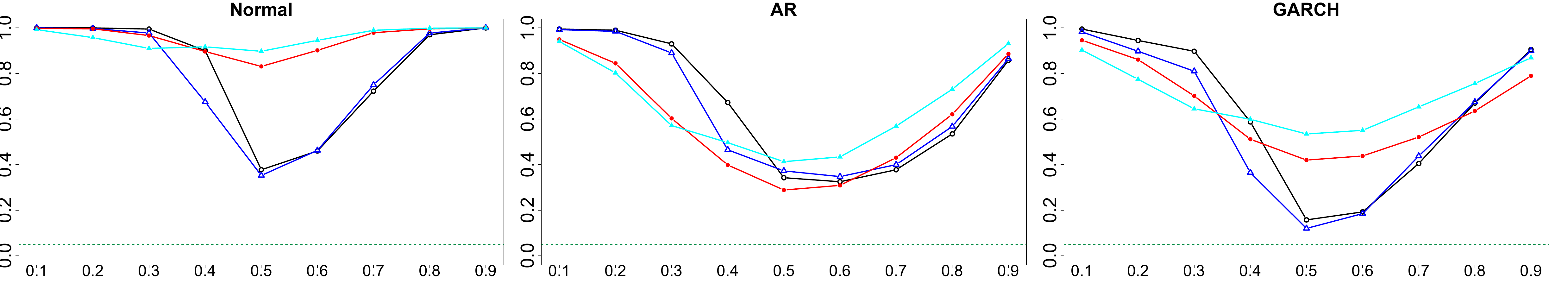}
\end{figure}

The third experiment investigates the effect of size of variance change on detecting power. We consider the models generated with heteroscedastic covariates and errors discussed in Section 4.3. But here, we allow the variance changes with a multiplier ratio $r$. Thus, the data generating process for the errors are:

(vii) (\textbf{Normal}) the error terms $\epsilon_i$ are i.i.d normal random variables:
\begin{equation*}
\epsilon_i = \left\{\begin{matrix}
N(0,0.5r),\quad 1\leq i\leq \lfloor m^* N  \rfloor,
\\
N(0,2r), \quad \lfloor m^* N  \rfloor<i\leq N.
\end{matrix}\right.
\end{equation*}

(viii) (\textbf{AR}) the error terms $\epsilon_i$ follow autoregressive (AR-1) process:
\begin{equation*}
\epsilon_i = \left\{\begin{matrix}
0.5 \epsilon_{i-1} + \varepsilon^1_i,\;\;  \mbox{with } \varepsilon^1_i \sim N(0,0.5r), \;\;  1\leq i\leq \lfloor m^* N  \rfloor,
\\
0.5 \epsilon_{i-1} + \varepsilon^2_i,\;\;  \mbox{with } \varepsilon^2_i \sim N(0,2r), \;\; \lfloor m^* N  \rfloor<i\leq N.
\end{matrix}\right.
\end{equation*}

(ix) (\textbf{GARCH}) the error terms $\epsilon_i$ follow a stationary GARCH(1,1) process defined by
\begin{equation*}
\epsilon_i = h_i^{1/2} \varepsilon_i, \mbox{ with }  h_i =
\left\{\begin{matrix}
0.5r + 0.01 \epsilon_i^2 + 0.9 h_{i-1},\;\;1\leq i\leq \lfloor m^* N  \rfloor,
\\
2r + 0.01 \epsilon_i^2 + 0.9 h_{i-1}, \;\; \lfloor m^* N  \rfloor<i\leq N.
\end{matrix}\right.
\end{equation*}
and the $\varepsilon_i$'s are i.i.d. standard normal random variables.

\begin{figure}[H]
	\centering
	\caption{Rejection rates of $V_N^{HET}(0)$, $Q_N^{HET}(0)$, $V_N^{HET}(1/2)$ and $Q_N^{HET}(1/2)$ at 95\% significance level with sample size $N=250$ (first row) $N=500$ (second row). The error term follows heteroscedastic \textbf{Normal}, \textbf{AR} and \textbf{GARCH} errors. The change point $k^*= \lfloor 0.5N  \rfloor$ and the size of change point $\delta=0.6$. X-axis labels the ratio $r$.}
	\label{figure-noise}
	\includegraphics[width=14.5cm]{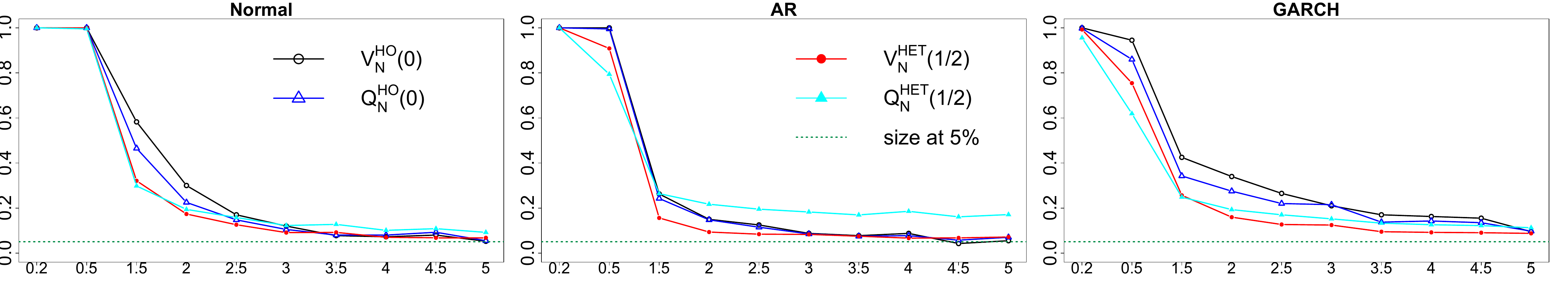}
	\includegraphics[width=14.5cm]{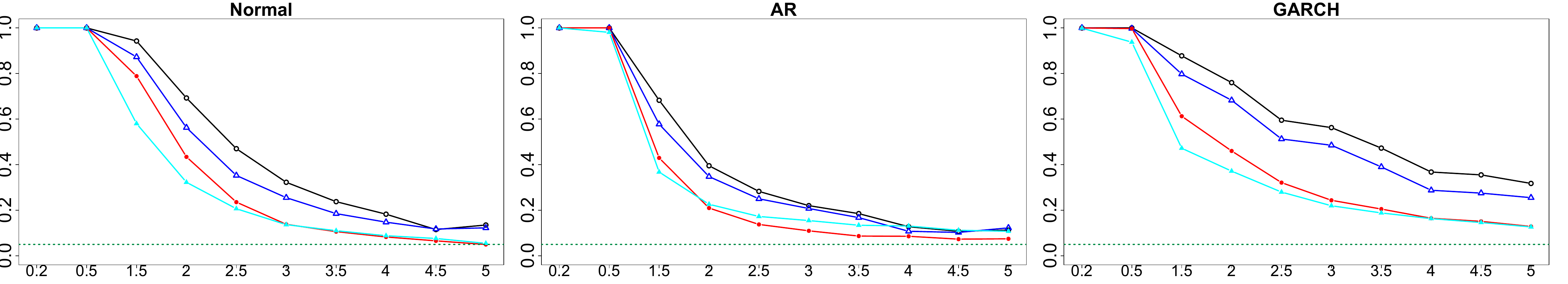}
\end{figure}

Figure \ref{figure-noise} displays the power effect when $r\in[0.2, 0.5, 1.5, 2, 2.5, 3, 3.5, 4, 4.5, 5]$. We set the locations $k^*=m^*=0.5$. Clearly, by fixing $\delta=0.6$, the testing power drop dramatically and nearly lose power for all tests when the ratio $r =2$, i.e., the variance reaches $4$ for the later phase. We attribute this to the effect of the signal-to-nose ratio. Given a large noise, the signal, i.e., the size of change point $\delta$ needs to be enlarged accordingly to gain reasonable power.

\end{document}